\documentclass[aps,superscriptaddress,onecolumn,10pt,prx]{revtex4-2}

\usepackage{silence}
\usepackage{amsmath,mathtools,amsthm,amssymb}
\usepackage{tabularx}
\usepackage{tabularray}
\usepackage{tikz}
\usepackage{tabularx}
\usepackage{tabularray}

\usepackage[a4paper, total={6.7in, 9.7in}]{geometry}

\usepackage{graphicx}

\usepackage{color}
\usepackage{bbold}
\usepackage{enumitem}
\usepackage{centernot}
\usepackage{complexity}
\usepackage{physics}
\usepackage{comment}
\usepackage{array}
\usepackage{graphics}
\usepackage{wrapfig}
\usepackage{fontsize}
\graphicspath{{figures/}}
\usepackage{biolinum}
\usepackage{bbm}
\usepackage{dsfont}
\usepackage{mathrsfs}
\usepackage{mathdots}
\usepackage{enumitem}
\usepackage[table]{xcolor}
\usepackage{booktabs}
\usepackage{amssymb}

\definecolor{lightred}{RGB}{243,229,231}
\definecolor{lightgreen}{RGB}{241,255,239}
\definecolor{lightblue}{RGB}{232,240,244}
\definecolor{RoyalBlue}{RGB}{65,105,225}
\definecolor{ForestGreen}{RGB}{34,139,34}   
\definecolor{Maroon}{RGB}{135,0,0}
\definecolor{myrefcolor}{rgb}{0.067,0.5,0.5}
\definecolor{myurlcolor}{rgb}{0.1,0,0.9}

\usepackage[
    breaklinks,
    pdftex,
    colorlinks=true,
    linkcolor=myrefcolor,
    citecolor=myrefcolor,
    urlcolor=myrefcolor
]{hyperref}
\usepackage[capitalize]{cleveref}

\usepackage{hyperref}

\newtheorem{theorem}{Theorem}
\newtheorem{proposition}{Proposition}
\newtheorem{lemma}{Lemma}
\newtheorem{corollary}{Corollary}
\newtheorem{definition}{Definition}

\newtheorem{problem}{Problem}

\newcommand{\nocontentsline}[3]{}
\let\origcontentsline\addcontentsline
\newcommand\stoptoc{\let\addcontentsline\nocontentsline}
\newcommand\resumetoc{\let\addcontentsline\origcontentsline}
\begin{document}


\title{The unbearable hardness of deciding about magic}

\author{Lorenzo Leone}
\thanks{\href{mailto:loleone@unisa.it}{loleone@unisa.it}}


\affiliation{Dipartimento di Ingegneria Industriale, Universit\`a degli Studi di Salerno, Via Giovanni Paolo II, 132, 84084 Fisciano (SA), Italy}
\affiliation{Dahlem Center for Complex Quantum Systems, Freie Universit\"at Berlin, 14195 Berlin, Germany}
\author{Jens Eisert}
\thanks{\href{mailto:jense@zedat.fu-berlin.de}{jense@zedat.fu-berlin.de}}
\affiliation{Dahlem Center for Complex Quantum Systems, Freie Universit\"at Berlin, 14195 Berlin, Germany}

\author{Salvatore F.~E.~Oliviero}
\thanks{\href{mailto:s.oliviero@sns.it}{s.oliviero@fu-berlin.de}}
\affiliation{Dahlem Center for Complex Quantum Systems, Freie Universit\"at Berlin, 14195 Berlin, Germany}

\begin{abstract}
Identifying the boundary between classical and quantum computation is a central challenge in quantum information. In multi-qubit systems, entanglement and magic are the key resources underlying genuinely quantum behaviour. While entanglement is well understood, magic---essential for universal quantum computation---remains relatively poorly characterised. Here we show that determining membership in the stabilizer polytope, which defines the free states of magic-state resource theory, requires super-exponential time $\class{exp} ( n^2)$ in the number of qubits $n$, even approximately. We give a reduction from solving a $3$-\class{SAT} instance on $n^2$ variables and, by invoking the exponential time hypothesis, the result follows. As a consequence, both quantifying and certifying magic are fundamentally intractable: any magic monotone for general states must be super-exponentially hard to compute, and deciding whether an operator is a valid magic witness is equally difficult. As a corollary, we establish the robustness of magic as computationally optimal among monotones. This barrier extends even to classically simulable regimes: deciding whether a state lies in the convex hull of states generated by a logarithmic number of non-Clifford gates is also super-exponentially hard. Together, these results reveal intrinsic computational limits on assessing classical simulability, distilling pathological magic states, and ultimately probing and exploiting magic as a quantum resource.
\end{abstract}

\maketitle
\stoptoc
\section{Introduction}

Quantum computers promise the efficient solution of some computational problems that are out of reach for efficient classical computers \cite{Shor-1994,AshleyOverview}. For a long time, this idea was mainly seen as an inspiring theoretical possibility. In recent years, however—largely driven by major efforts in the quantum industry—it has started to become an experimental reality. The quantum computers available today are still limited in size and remain noisy, with devices operating at more than 1000 qubits. Even so, machines of this scale, with two-qubit error rates close to a tenth of a percent, would have been hard to imagine not long ago. This rapid progress has created an exciting situation for the field \cite{MindTheGaps,Myths,GrandChallenge}. The devices we have today, and those expected in the near future, are commonly referred to as \emph{noisy intermediate-scale quantum} (NISQ) devices \cite{preskill_quantum_2018}. These devices may enable useful quantum applications, and this question is currently the focus of intense research. At the same time, many quantum protocols implemented on such hardware can still be simulated efficiently on classical computers using modern methods. This process, often called ``dequantization'' \cite{AntonioPauliPropagation,Nonunital,ManuelPauliPropagation,zhou_what_2020-1}, removes the expected quantum advantage. One of the central questions in the field today is what can truly be achieved with these devices, and to what extent they still operate in a regime that is effectively classical. At a more conceptual level, the key issue is to understand where the boundary lies between the quantum regime—where one may expect a genuine quantum advantage \cite{MindTheGaps,Myths,GrandChallenge}—and the classical regime, which still allows for efficient classical description. Identifying this boundary is one of the main challenges of the field.

A particularly clean and rigorous way to formalise this boundary is provided by quantum resource theories~\cite{chitambarQuantumResourceTheories2019,gourResourcesQuantumWorld2024,horodeckiQuantumEntanglement2009}. In the context of multi-qubit quantum computation, two key resources emerge as necessary for a state to lie on the quantum side of the elusive classical–quantum boundary: entanglement and magic~\cite{horodeckiQuantumEntanglement2009, veitchResourceTheoryStabilizer2014}. However, while the resource theory of entanglement \cite{horodeckiQuantumEntanglement2009} can be regarded as being extremely well understood,
the same cannot quite be said for the resource theory of magic, largely because it has only recently attracted serious attention~\cite{veitchResourceTheoryStabilizer2014,liuManybodyQuantumMagic2022,leoneStabilizerRenyiEntropy2022}.
Magic-state resource theory formalises the observation that Clifford operations and stabilizer states, although genuinely quantum, are efficiently classically simulable~\cite{gottesmanHeisenbergRepresentationQuantum1998,PhysRevA.70.052328}, using the rigorous language of quantum resource theory. Within this framework, states confined to the stabilizer polytope---the region of Hilbert space reachable by classical mixtures of stabilizer states---constitute the {\em free states}, while quantum operations that map the stabilizer polytope onto itself define the {\em free operations} of the theory~\cite{veitchResourceTheoryStabilizer2014,liuManybodyQuantumMagic2022}. Exiting the stabilizer polytope requires \emph{magic}, the ingredient that pushes quantum states outside the classically simulable regime \cite{PhysRevA.70.052328,StabilizerPolytope,PhysRevA.95.022316,howardApplicationResourceTheory2017,regulaConvexGeometryQuantum2018,seddonQuantifyingMagicMultiqubit2019,Wigner}. Magic is, in a sense, the conceptual feature that makes quantum systems ``quantum.'' In more technological terms, it is the ingredient that uplifts schemes based on transversal Clifford circuits to fully universal fault-tolerant quantum computing schemes \cite{Wigner}. 

In this work, we bring this line of research in a striking new direction: we show that deciding whether a quantum state contains magic necessarily requires algorithms whose runtime grows super-exponentially with the number of qubits $n$. More precisely, we formulate a mathematically well-defined decision problem —namely, a weak membership problem for the stabilizer polytope— and reduce it from a $3$-\class{SAT} instance on $n^2$ variables. Invoking the exponential time hypothesis, this reduction implies that solving the problem requires $\class{exp}(n^2)$ time. 
While the technical argument is involved, the upshot is that we can relate membership in the stabilizer polytope to testing properties of \emph{graph states} \cite{Graphs,Hein06}. 

This result opens an exciting new direction for this line of research. It is no surprise that identifying the subtle boundary of the stabilizer polytope is difficult: it is a provably computationally hard task that adheres to the rigour of theoretical computer science. While this insight is primarily of conceptual value, we will later show that it also has important practical implications for magic-state resource theory, including quantifying and detecting magic, classical simulation of noisy quantum circuits, as well as resource distillation in modern quantum error correction.

\begin{figure}[tbh]
  \centering
  \includegraphics[width=\linewidth]{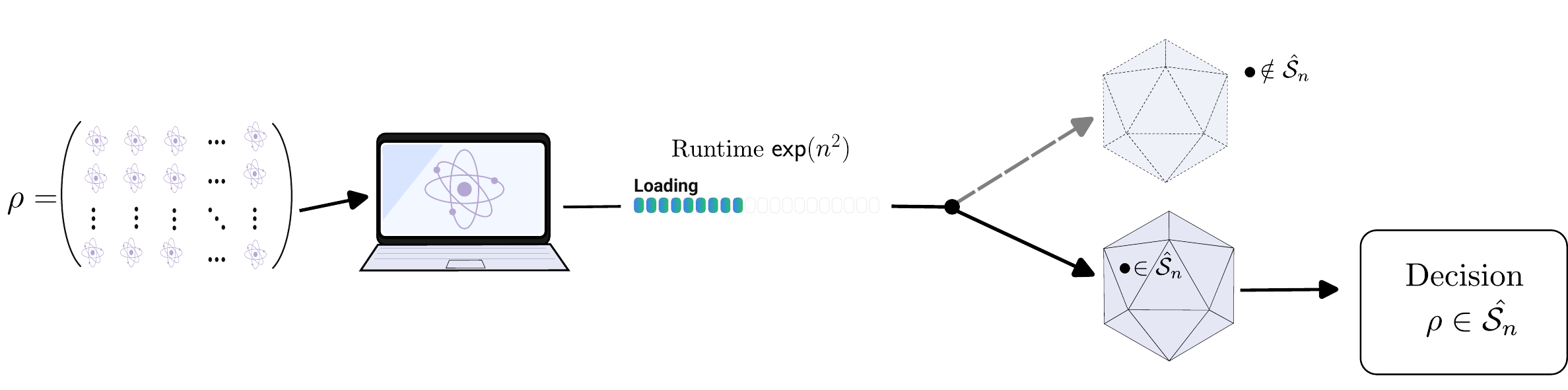}
  \caption{In this work, we show that any algorithm aimed at determining whether a $d \times d$ density matrix on $n$ qubits, $\rho$, belongs to the stabilizer polytope must necessarily run in super-exponential time. Consequently, any magic monotone also requires super-exponential time to compute. Concretely, we prove that the membership problem in the stabilizer polytope, formulated as a decision problem, lies in the complexity class $\class{QP}^2$. 
  Also depicted in the upper picture is a schematic representation of a witness as a hyperplane separating a quantum state from the stabilizer polytope.
  At a higher level, it demonstrates that delineating the ``classical'' setting—where efficient classical simulation is possible, inside the polytope—from the ``quantum'' setting—where one can hope for computational quantum advantages—is itself a computationally difficult task.
  }
  \label{fig:complexity}
\end{figure}

\section{Results}

We now state the main conclusion of this work and highlight its implications: deciding whether a quantum state contains magic—whether by quantifying it, detecting it, or determining its membership in the classically simulable region defined by the stabilizer formalism—is computationally hard. 

To set the stage in more formal terms, we take the Hilbert space dimension $d = 2^n$ as the input size of our problem, since we are given a $d \times d$ density matrix $\rho$ describing a (possibly mixed) quantum state. We denote by $\class{P}$ the class of decision problems that can be solved in time polynomial in $d$~\cite{aroraComputationalComplexityModern2009}. For instance, deciding the purity of $\rho$ clearly lies in $\class{P}$. We denote by $\class{NP}$ the class of problems for which there exists a polynomial-time deterministic Turing machine that can verify \class{YES} instance of the decision problem.
For our purposes, it is useful to introduce \emph{quasi-polynomial} complexity classes. For $k \in \mathbb{R}_{+}$, a decision problem with input size $d$ is said to lie in $\class{QP}^{k}$ if there exists a deterministic Turing machine solving it in time $\class{exp}({\log^{k} d)}$. Note that problems in $\class{QP}^{k}$ with $k>1$ have a strictly {\em super-exponential} runtime with respect to the number of qubits $n$; nevertheless, we will refer to them as {\em quasi-polynomial} because the scaling is considered relative to the Hilbert space dimension $d$. These classes can be separated from one another under one of the most widely believed assumptions in computational complexity: the \emph{exponential time hypothesis} (ETH)~\cite{impagliazzoComplexity$k$SAT2001} which we will assume throughout this work. ETH asserts that $\class{NP}$-complete problems cannot be solved in subexponential time. While the hierarchy $\class{QP}^{k}$ for different values of $k$ offers a fine-grained view of quasi-polynomial behaviour, the union $\bigcup_{k\in\mathbb{N}} \class{QP}^{k}$ captures exactly all problems that can be solved in strictly subexponential time with respect to the input size $d$.

\subsection{Quantifying magic}\label{sec:quantifying}

In close analogy with entanglement theory~\cite{guhneEntanglementDetection2009}, there are two complementary approaches to accessing magic: quantification and witnessing. Both aim to characterise the boundary of the stabilizer polytope. Quantifying magic is the more demanding task: it seeks to distinguish magic states exactly from stabilizer states---i.e., to identify the precise boundary of the stabilizer polytope in Hilbert space---and to assign a meaningful measure of how much magic a state contains. This is formalised through magic monotones, scalar functions that vanish exactly on states inside the stabilizer polytope~\cite{howardApplicationResourceTheory2017,regulaConvexGeometryQuantum2018,seddonQuantifyingMagicMultiqubit2019}.
The scarcity of computable monotones is the main reason for the slow progress in magic-state resource theory. Although many candidates have been proposed, they are notoriously difficult to evaluate~\cite{liuManybodyQuantumMagic2022}, typically requiring minimisation over all pure stabilizer states---an inherently super-polynomial task in the Hilbert space dimension. Stabilizer entropies~\cite{leoneStabilizerRenyiEntropy2022,leoneStabilizerEntropiesAre2024,bittelOperationalInterpretationStabilizer2025,olivieroMeasuringMagicQuantum2022} partially alleviate this by avoiding such minimisations and being computable from expectation values of Hermitian operators in polynomial time. However, they behave reliably only for pure states; extending them to mixed states again requires an optimisation procedure.
These difficulties suggest intrinsic computational hardness in magic-state resource theory and naturally raise the question of whether any efficiently computable magic monotone for mixed states can exist at all---or, more modestly, whether the boundary of the stabilizer polytope can be determined exactly and efficiently. 

The main result of this work answers both questions negatively: we show that there is no efficient method to precisely characterise the boundary of the stabilizer polytope, and consequently, any magic monotone must necessarily require super-polynomial time to compute. To formalise these questions, we define the weak membership problem for the stabilizer polytope $\hat{\mathcal{S}}$: given a state $\rho$, decide whether it is a free state, i.e., $\rho \in \hat{\mathcal{S}}$, or whether it is $\varepsilon$-far from any free state. In the following informal theorem, we show that this membership problem lies in $\class{QP}^2$. 
%

\begin{theorem}[Membership in the stabilizer polytope is hard. Informal of \cref{thm:nphardrhowmem,wmemeforstabinqp2}] \label{th1:membershiphard} Deciding whether a state 
$\rho$ lies in the stabilizer polytope $\hat{\mathcal{S}}$, or is $\varepsilon$-far from every state in $\hat{\mathcal{S}}$, belongs to the complexity class $\class{QP}^2$ for any $\varepsilon = 1/\class{poly}(d)$.
\end{theorem}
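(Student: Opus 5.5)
The claim has two halves. The upper bound says the weak membership problem lies in $\class{QP}^2$. The lower bound says it needs time $\class{exp}(\Omega(\log^2 d))$ under ETH. I would prove them separately.

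\emph{Upper bound.} The stabilizer polytope $\hat{\mathcal{S}}$ is the convex hull of the $N=2^{O(n^2)}=\class{exp}(O(\log^2 d))$ pure stabilizer states. Each of these can be enumerated via its stabilizer tableau (or as a graph state up to local Cliffords) in time $\class{poly}(d)$. Given $\rho$, I would decide $\varepsilon$-weak membership by solving the convex program $\min_{p\in\Delta_N}\|\rho-\sum_i p_i |s_i\rangle\langle s_i|\|_2$. This is a quadratic program in $N$ variables with $d^2$ constraints, so it runs in time $\class{poly}(N,d,\log 1/\varepsilon)$. Alternatively, running Frank--Wolfe with $O(1/\varepsilon^2)$ iterations, each a linear maximisation over the $N$ vertices, gives the same bound. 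Since norms are equivalent up to $\class{poly}(d)$ factors and $\varepsilon=1/\class{poly}(d)$, the total runtime is $\class{exp}(O(\log^2 d))$, so the problem is in $\class{QP}^2$.

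\emph{Lower bound.} I would reduce $3$-\class{SAT} on $m=\Theta(n^2)$ variables to the weak membership problem on $n$ qubits, with a $\class{poly}(d)$-time reduction. The natural encoding uses graph states. A graph on $n$ vertices has $\binom{n}{2}\approx n^2/2$ edge bits, so an assignment $x\in\{0,1\}^m$ can be encoded as the adjacency matrix $\Gamma(x)$. Equivalently, it can be encoded as the quadratic form $q_x(z)=\sum_{i<j}x_{ij}z_iz_j$ in the phase state $|G_x\rangle=2^{-n/2}\sum_z(-1)^{q_x(z)}|z\rangle$. The plan is as follows.
\begin{enumerate}
\item Build a Hermitian witness-like operator $W_\phi$ whose expectation on $|G_x\rangle$ is large iff $x$ satisfies $\phi$. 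For instance, for each clause a local projector checks that the corresponding edge pattern is not the forbidden one. Edge bits can be read out through expectations of $Z$-type Pauli products conjugated appropriately, since $\langle G|X_iZ_{N(i)}|G\rangle=1$.
\item Convert ``$\max_{\sigma\in\hat{\mathcal{S}}}\mathrm{tr}(W_\phi\sigma)\ge c$'' into a membership question. Take $\rho=(1-t)\mathbb{1}/d+t\,\tau$ for a suitable $\tau$ built from $W_\phi$, and tune $t$ so that $\rho\in\hat{\mathcal{S}}$ iff $\phi$ is unsatisfiable. Here I would use duality between weak membership and weak optimisation (Yudin--Nemirovskii / Gr\"otschel--Lov\'asz--Schrijver): the polytope contains a ball around $\mathbb{1}/d$ of radius $1/\class{poly}(d)$, so weak membership and weak linear optimisation are polynomially equivalent with polynomial loss in $\varepsilon$.
\item Conclude. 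An $\class{exp}(o(\log^2 d))$ algorithm would solve $3$-\class{SAT} on $m$ variables in time $2^{o(m)}$, contradicting ETH.
\end{enumerate}
Two facts are needed for the optimisation over stabilizer states to reduce to graph states: local Cliffords and the linearity of the objective let one restrict to vertices, and a twirl or symmetrisation of $W_\phi$ removes the contribution of local-Clifford-rotated states and non-graph stabilizer states.

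\emph{Main obstacle.} The key step is the gap in step 1. Every pure stabilizer state, not only the graph states $|G_x\rangle$, must give overlap with $W_\phi$ at most $c-1/\class{poly}(d)$ when $\phi$ is unsatisfiable, while some $|G_x\rangle$ reaches $c$ when $\phi$ is satisfiable. I would first try to build $W_\phi$ as a sum of stabilizer-type projectors that forces the state, via penalties, to be a graph state in the standard $X$/$Z$ frame. These penalties would be large-weight terms like $\sum_i(\mathbb{1}+X_iZ_{N(i)})/2$, weighted so that deviations cost at least an inverse polynomial. The clause terms then add a reward of order $1/\class{poly}(d)$. I would control cross terms using the fact that stabilizer states have Pauli expectations in $\{0,\pm1\}$, which gives discrete, polynomially separated values.
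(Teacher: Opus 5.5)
Your upper bound is correct but argued differently from the paper. The paper solves the dual linear program of the robustness of magic over a $\nu$-robust inner core of the witness set with the ellipsoid method. That decides only the one-sided problem $\rho$-$\class{WMEM}^1$, and the paper then recovers the two-sided problem through the Gr\"otschel--Lov\'asz--Schrijver equivalence, at the price of $\nu\sim\varepsilon^6/d^{24}$. Your primal computation of $\min_{p\in\Delta_N}\|\rho-\sum_ip_i\sigma_i\|_2$ (e.g.\ Frank--Wolfe with $O(\varepsilon^{-2})$ scans over the $N=2^{O(n^2)}$ stabilizer states) decides the two-sided problem directly and is simpler. The paper's route has the side benefit of showing that the robustness of magic itself is computable in $\class{QP}^2$.

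The lower bound has a genuine gap. Your architecture matches the paper: edge bits of graph states as $\Theta(n^2)$ variables, and GLS to pass from optimisation to membership. But the step you call the main obstacle is essentially the whole proof, and the ingredients you propose for it fail.
\begin{itemize}
\item Local projectors cannot see edge bits. The reduced state of $\ket{G_A}$ on a qubit set $Q$ is $I/2^{|Q|}$ whenever the block $A_{\bar Q,Q}$ has linearly independent columns, regardless of the edges inside $Q$. For $Q=\{i,j\}$ this holds whenever $i,j$ have distinct nonempty neighbourhoods outside $Q$.
\item A fixed Pauli $X_iZ_T$ has nonzero expectation only if $T=N(i)$ exactly, so it does not isolate one edge bit either.
\item More fundamentally, for any fixed operator $O$ one has $\langle G_A|O|G_A\rangle=\frac1d\sum_{y,z}O_{yz}(-1)^{q_A(y)+q_A(z)}$. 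This contains only parities over edge sets $E(K_Y)\triangle E(K_Z)$, i.e.\ symmetric differences of two cliques, and such a set is never three pairwise disjoint edges. The indicator of a clause on three disjoint edges, however, contains $(-1)^{A_\alpha+A_\beta+A_\gamma}$ with coefficient $\pm\frac18$. So no single-copy observable reproduces such clauses on all graph states.
\end{itemize}
The paper's idea here has three parts. It reads edges with the non-local rank-one operators $\frac d4(\ket{\mathbf 0}\pm\ket{i,j})(\bra{\mathbf 0}\pm\bra{i,j})$. It obtains degree three from \emph{two} copies via $X_\alpha\otimes\frac12(Y_{\beta\gamma}+Y_{\beta\gamma}^\dagger)$. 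It then folds the two copies into one graph state on $2n$ qubits, with penalties on cross edges and on mismatched blocks. Quadratising the instance first, with one auxiliary variable per clause, would be an alternative.

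Your mechanism for forcing the optimum onto graph states also fails. The term $\sum_i(I+X_iZ_{N(i)})/2$ is not a fixed operator, because $N(i)$ is the unknown graph. A local-Clifford twirl replaces the value on $\ket{G_A}$ by its average over the local-Clifford orbit. That orbit contains other graphs by local complementation, so the twirl erases the assignment. What you need is one fixed operator that penalises every pure stabilizer state other than standard-frame graph states. The paper builds this from the computational-basis form of stabilizer states: uniform modulus on an affine subspace, with phases $i^{\ell(x)}(-1)^{q(x)}$. It uses $S(i)=\frac D4(\ket{\mathbf 0}-\ket{i})(\bra{\mathbf 0}-\bra{i})$ against diagonal-Clifford phases, and terms in $\ketbra{\mathbf 0}{\mathbf 0}$ against smaller supports.

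Be careful with the latter. A penalty linear in $p=|\langle\mathbf 0|\sigma\rangle|^2$ that vanishes at the full-support value $p=1/D$ cannot be positive at both $p=0$ and $p\ge2/D$. Indeed, with $D=d^2$, the terms the paper adds in $H^{(3)}$ and $H^{(4)}$ sum to $(D-1)(\|H\|_\infty+1)(I-D\ketbra{\mathbf 0}{\mathbf 0})$. On $\ket{\mathbf 0}$ this equals $-(D-1)^2(\|H\|_\infty+1)$, which swamps $\langle\mathbf 0|H^{(2)}|\mathbf 0\rangle=O(n^2D(\|H\|_\infty+1))$. So that step needs repair too. One repair keeps only $(\|H\|_\infty+1)(I-D\ketbra{\mathbf 0}{\mathbf 0})$ and charges proper-subspace supports through the $S(i)$. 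Some $e_i$ lies outside such a support and contributes $\frac D4p$, so the weight of the $S(i)$ must be at least $4(\|H\|_\infty+1)$.

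Your Step 2 needs one change. The single-point query $(1-t)I/d+t\tau$ does not determine $\max_\sigma\tr(W\sigma)$. Replace it by the GLS Turing reduction, as the paper does, and the step goes through.
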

\noindent
{\em Proof sketch.} The proof has two parts: showing that the decision problem lies in $\class{QP}^{2}$, and that it does not belong to $\class{QP}^{2-\eta}$ for any $\eta>0$. Membership in $\class{QP}^{2}$ follows from computing the 
known magic monotone, the \textit{robustness of magic}, which can be evaluated in time $\class{exp}(\log^2d)$ using the argument in Ref.~\cite{howardApplicationResourceTheory2017}, rigorously refined in \cref{wmemeforstabinqp2}. The second, more technical part encodes an instance of the polytope membership problem as a $3$-\class{SAT} instance in $O(\log^2d)$ variables. By the ETH, this cannot be solved in time $\class{exp}(\log^{2-\eta}d)$ for any $\eta>0$. \qed

Therefore, even the simple task of classifying a state as free or resourceful in magic-state resource theory requires super-polynomial resources, highlighting the inherent hardness of the theory. Since the golden rule of a magic monotone is to \textit{faithfully} identify the set of free states---i.e., those inside the stabilizer polytope $\hat{\mathcal{S}}$---it follows as a simple corollary that the computational difficulty of every magic monotone valid for mixed states is not accidental, but an intrinsic feature of magic-state resource theory.

\begin{corollary}[No efficient magic monotone]\label{cor1main}
The computation of any magic monotone $\mathcal{M}$ requires time $\class{exp}(\log^2d)$.
\end{corollary}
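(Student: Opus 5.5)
The plan is to reduce the corollary to \cref{th1:membershiphard} by showing that any procedure computing a magic monotone $\mathcal{M}$ decides the weak membership problem for $\hat{\mathcal{S}}$ with only polynomial overhead. The defining property used is \emph{faithfulness}: $\mathcal{M}(\rho)=0$ if and only if $\rho\in\hat{\mathcal{S}}$. Given an instance $(\rho,\varepsilon)$ with the promise that either $\rho\in\hat{\mathcal{S}}$ or $\rho$ is $\varepsilon$-far from $\hat{\mathcal{S}}$, I would output \class{YES} iff the computed value of $\mathcal{M}(\rho)$ is zero. With exact evaluation this settles the promise problem immediately. If $\mathcal{M}$ could be evaluated in time $\class{exp}(\log^{2-\eta} d)$ for some $\eta>0$, the membership problem would then lie in $\class{QP}^{2-\eta}$. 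By the hardness half of \cref{th1:membershiphard}, that is, the $3$-\class{SAT} encoding on $O(\log^2 d)$ variables, this contradicts the ETH.

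The only subtle step is the meaning of ``computing'' $\mathcal{M}$, since in practice one gets it to finite precision. For approximate evaluation I would require a quantitative faithfulness condition. Concretely, there should be some $\delta=1/\class{poly}(d)$ such that $\mathcal{M}(\rho)\ge\delta$ whenever $\rho$ is $\varepsilon$-far from $\hat{\mathcal{S}}$ with $\varepsilon=1/\class{poly}(d)$. Computing $\mathcal{M}$ to additive precision $\delta/3$ then separates the two cases by thresholding at $\delta/2$.

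For the standard monotones this gap holds. For the robustness of magic, any decomposition $\rho=(1+t)\sigma_+-t\sigma_-$ gives $\|\rho-\sigma_+\|_1\le 2t$, so $\varepsilon$-far forces $\mathcal{R}(\rho)-1\ge\varepsilon/2$. Distance-based monotones satisfy it trivially. For a general faithful monotone I would state this polynomial-gap condition as the precise sense in which the informal corollary holds, because an arbitrary faithful function could be exponentially flat near the boundary.

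The main obstacle is this precision and faithfulness issue, not the reduction itself. The plan is to handle it by making the gap assumption explicit, so that the conclusion reads: no faithful monotone with inverse-polynomial sensitivity can be computed in time $\class{exp}(\log^{2-\eta} d)$.

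Tightness comes from the upper bound in \cref{th1:membershiphard}, namely that the robustness of magic is computable in $\class{exp}(\log^2 d)$. This shows that $\class{exp}(\log^2 d)$ is the optimal scaling, which also yields the later claim that the robustness of magic is computationally optimal among monotones.
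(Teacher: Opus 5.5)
Your proposal is correct and follows the paper's argument. Faithfulness ($\mathcal{M}(\rho)=0$ iff $\rho\in\hat{\mathcal{S}}$) turns any evaluation of $\mathcal{M}$ into a decider for weak membership, so the ETH lower bound of \cref{th1:membershiphard} transfers directly, and the $\class{exp}(\log^2 d)$ robustness algorithm shows this scaling is attained. The paper leaves the precision issue implicit, tacitly assuming one can decide whether $\mathcal{M}(\rho)=0$, so your explicit inverse-polynomial gap condition (checked for $\mathcal{R}(\rho)-1$) is a legitimate sharpening rather than a departure.
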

At this point, some remarks are in order. Many magic monotones have been proposed so far, and as anticipated, some of them require minimisation over the set of pure stabilizer states.  Since the set of stabilizer states contains $\class{exp}(\log^2d)$ elements, such an optimisation---even when convex---can only be performed by an algorithm scaling as $\class{exp}(\log^2d)$. From \cref{cor1main}, we see that these measures are indeed {\em optimal}, as one cannot provably do better within magic-state resource theory. One example is the \emph{robustness of magic}~\cite{howardApplicationResourceTheory2017}, defined as the minimum negativity of the coefficients when expressing $\rho$ as a linear combination of pure stabilizer states.

The computational difficulty of magic monotones established by \cref{cor1main} applies to monotones designed to detect magic in {\em arbitrary mixed states}. Indeed, stabilizer entropies~\cite{leoneStabilizerRenyiEntropy2022} are efficiently computable, but they faithfully identify free states only when these states are pure. In Ref.~\cite{leoneStabilizerEntropiesAre2024}, it has been shown that stabilizer entropies can be extended to general mixed states via the standard convex roof construction. While this extension involves an optimisation over all convex decompositions of a mixed state $\rho$ into pure states---a task even more computationally demanding than optimising over all stabilizer states---restricting the rank of $\rho$ can substantially reduce this complexity. Specifically, Ref.~\cite{leoneStabilizerEntropiesAre2024} observes that computing the extended 
stabilizer entropy requires 
an 
$\class{exp}(\log d\operatorname{rank}(\rho)^2)$-time algorithm, placing it within $\class{P}$ for states with constant rank. This shows that, although magic-state resource theory is super-polynomially hard for general mixed states, restricting to low-rank states---common, for example, in matrix product states in many-body physics---allows stabilizer entropies to be practically useful for exploring moderately large systems, far beyond the $n \simeq 5$ limit implied by the full super-polynomial hardness of $\class{QP}^2$.

Finally, it is instructive to compare our results with other resource theories, highlighting that this intrinsic computational hardness is a distinctive feature of magic-state resource theory. In the resource theory of coherence, the structure is simple enough that monotones are efficiently computable \cite{PhysRevLett.113.140401}.
The same holds true for Gaussian continuous-variable entanglement, where even all tangent hyperplanes to the convex set of covariance matrices of bosonic Gaussian separable states can be classified \cite{CVWitness}.
Similarly, despite the non-convex nature of 
fermionic non-Gaussianity, monotones remain computable. 
By contrast, entanglement theory exhibits a much higher computational barrier, with complexity scaling exponentially in the dimension of the Hilbert space~\cite{gurvitsClassicalDeterministicComplexity2003,ioannouComputationalComplexityQuantum2007,gharibianStrongNPHardnessQuantum2009}.

\subsection{Witnessing magic}\label{sec:witnessing} 

Faced with this intrinsic barrier, it is natural to consider a weaker---but far more accessible---approach: witnessing magic. Rather than fully quantifying the resource, one may simply ask whether a given state possesses magic at all. A \emph{witness} is a 
Hermitian operator that, when yielding a positive expectation value, certifies the presence of magic 
in a quantum state. More precisely, a witness $W$ is such that $\tr(W\rho)\le0$ for all $\sigma \in \hat{\mathcal{S}}$. Therefore, if a state $\rho$ gives $\tr(W\rho)>0$, it is guaranteed to be a magic state. The converse does not hold: a negative expectation value does not imply that $\rho$ is a stabilizer state. Although this limitation may seem unsatisfactory, two points put witnessing in proper context. First, witnesses correspond to directly measurable observables, providing a conceptually and experimentally transparent way to test for magic. Second, witnesses are \emph{complete}: for every non-stabilizer state \(\rho\), there exists at least one witness \(W_\rho\) that detects it. 

By definition, evaluating a witness on a given state can be done in polynomial time. The main drawback, however, is that a single witness cannot detect all magic states. If it could, it would be faithful on \(\hat{\mathcal{S}}\), which is impossible according to \cref{th1:membershiphard}. Consequently, effectively probing the boundary of the stabilizer polytope and detecting magic across different states requires multiple witnesses. This naturally leads to the question: how can we systematically construct witnesses in magic-state resource theory? In recent years, several approaches have been proposed, ranging from witnesses derived from stabilizer entropies~\cite{haugEfficientWitnessingTesting2025} to criteria~\cite{liuMagicCriterionAlmost2025} reminiscent of the celebrated PPT test in entanglement theory~\cite{peresSeparabilityCriterionDensity1996,horodeckiSeparabilityMixedStates1996}. Yet, as we show here, deciding whether a given observable is a valid witness is itself as hard as faithfully identifying the stabilizer polytope—once again requiring super-polynomial computational resources.



\begin{theorem}[Finding witnesses is hard. Informal of \cref{thm:nphardwwd,wwdinqp2}]  
Given a Hermitian operator $W$, deciding whether $\tr(W\sigma)\le 0$ for all $\sigma \in \hat{\mathcal{S}}$ (i.e., $W$ is a valid witness) or there exists $\tau \in \hat{\mathcal{S}}$ such that $\tr(W\tau) > 0$ (i.e., $W$ it is not a witness) belongs to the complexity class $\class{QP}^{2}$.  
\end{theorem}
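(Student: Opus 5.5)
The plan has two halves, mirroring \cref{th1:membershiphard}: an upper bound showing the witness-decision problem is solvable in time $\class{exp}(\log^2 d)$, and a lower bound showing it is not in $\class{QP}^{2-\eta}$ for any $\eta>0$ under ETH. Throughout, the problem is taken in its weak form: either $\tr(W\sigma)\le 0$ for all $\sigma\in\hat{\mathcal{S}}$, or some $\tau\in\hat{\mathcal{S}}$ has $\tr(W\tau)>\varepsilon\|W\|$ with $\varepsilon=1/\class{poly}(d)$. This matches the weak membership formulation of \cref{th1:membershiphard}.

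\emph{Upper bound.} Since $\tr(W\sigma)$ is linear in $\sigma$, its maximum over $\hat{\mathcal{S}}$ is attained at an extreme point, i.e.\ a pure stabilizer state. The algorithm therefore does the following:
\begin{enumerate}
\item Enumerate all pure stabilizer states $|S\rangle$ on $n=\log d$ qubits; there are $2^{O(n^2)}=\class{exp}(\log^2 d)$ of them.
\item For each one, generate it from its stabilizer tableau in $\class{poly}(d)$ time and compute $\langle S|W|S\rangle$ to $1/\class{poly}(d)$ precision.
\item Output ``witness'' if and only if every value is at most $\varepsilon\|W\|/2$.
\end{enumerate}
The total runtime is $\class{exp}(\log^2 d)\cdot\class{poly}(d)=\class{exp}(O(\log^2 d))$. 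This is essentially the same enumeration that underlies \cref{wmemeforstabinqp2}, but here no linear program is needed.

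\emph{Lower bound.} I would reduce from the weak membership problem, whose hardness is \cref{thm:nphardrhowmem}, or better, directly from the structure of its reduction. Two routes are available.
\begin{itemize}
\item \emph{Generic route.} Use the equivalence, due to Grötschel–Lovász–Schrijver, between weak membership and weak validity/separation for a well-rounded convex body. The polytope $\hat{\mathcal{S}}$ contains a ball of radius $1/\class{poly}(d)$ around $\mathbb{1}/d$ and lies inside a ball of radius $O(1)$ in Hilbert–Schmidt norm. Hence a $\class{QP}^{2-\eta}$ oracle for witness validity, which is the question ``is $\max_{\sigma\in\hat{\mathcal{S}}}\tr(W\sigma)\le c$?'' with $W$ shifted by $c\mathbb{1}$, yields via the ellipsoid method a $\class{poly}(d)$-overhead algorithm for weak membership. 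That would contradict \cref{thm:nphardrhowmem}.
\item \emph{Direct route (preferred).} Reuse the reduction behind \cref{thm:nphardrhowmem}. It encodes a $3$-\class{SAT} instance on $m=O(n^2)$ variables into graph states, whose adjacency matrices carry the $O(n^2)$ bits. I would build an observable $W_\phi=\sum_{\text{clauses}} P_C - c\,\mathbb{1}$, where each $P_C$ is a polynomial-size combination of Pauli operators. I would try to choose $W_\phi$ so that the expectation value of $W_\phi$ on the graph state $|G_x\rangle$ penalizes assignments $x$ violating clauses. Then $\max_{S}\langle S|W_\phi|S\rangle>0$ if and only if $\phi$ is satisfiable, with a $1/\class{poly}(d)$ gap. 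An algorithm in time $\class{exp}(\log^{2-\eta}d)$ would then solve $3$-\class{SAT} on $m$ variables in time $\class{exp}(m^{1-\eta/2})$, violating ETH.
\end{itemize}

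\emph{Main obstacle.} The hardest step is in the direct route. One must control $\langle S|W_\phi|S\rangle$ over \emph{all} stabilizer states, not just graph states, and certify that non-graph or ``mixed-structure'' stabilizer states cannot spuriously exceed the threshold. My first attempt would use the fact that every stabilizer state is local-Clifford equivalent to a graph state. I would design the clause terms to be invariant or suitably averaged under the relevant local Cliffords, for instance by twirling over local Clifford layers, so that the maximum is attained on a graph state. If that fails, I would fall back on the generic GLS route. There the obstacle shifts to verifying the polynomial rounding radii and the precision bookkeeping, which I expect to be routine.
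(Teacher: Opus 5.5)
Your upper bound is correct and coincides with the paper's \cref{wwdinqp2}: enumerate the $2^{O(\log^2 d)}$ pure stabilizer states and evaluate $\tr(W\sigma)$ on each. The gap is in the hardness half, which the statement includes through \cref{thm:nphardwwd}; neither of your routes establishes it. Your generic route is circular relative to the paper. There, \cref{thm:nphardrhowmem} is not proved independently: it is deduced \emph{from} \cref{thm:nphardwwd} via \cref{lem:reductionswmemwwd}, which shows that a weak-membership oracle solves weak witness detection. The Gr\"otschel--Lov\'asz--Schrijver implication you want (weak validity easy $\Rightarrow$ weak membership easy) is fine, but it only moves hardness between the two problems. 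The only source of hardness is a direct $3$-\class{SAT} reduction to the witness problem, and that is exactly what your direct route leaves open.

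The direct route is missing two constructions. First, you do not say how cubic clause terms arise, and they cannot come directly from one graph state. Since $\langle G_A|a\rangle\langle b|G_A\rangle=d^{-1}(-1)^{\sum_{e\in K_{\operatorname{supp}(a)}\triangle K_{\operatorname{supp}(b)}}A_e}$, every single-copy expectation value is a combination of parities over symmetric differences of two cliques. The parity of three pairwise disjoint edges is not among these, so no single-copy observable reproduces a clause indicator on such edges. The paper gets degree three from two copies: $H=\sum X_\alpha\otimes(Y_{\beta\gamma}+Y_{\beta\gamma}^\dagger)/2$, built from rank-one operators $\frac d4(\ket{\mathbf 0}\pm\ket{i,j})(\cdots)^\dagger$ that read off single edge bits. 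It then penalises $2n$-vertex graphs that are not two identical disconnected copies, using the $\bar X(i_A,j_B)$ and $W(i_A,j_A,i_B,j_B)$ terms.

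Second, your local-Clifford twirl fails. It does make the maximum over stabilizer states equal the maximum over graph states, but only because $\langle G_A|\tilde W|G_A\rangle$ becomes constant on local-complementation orbits. Local complementation flips individual edge bits, so the SAT encoding is destroyed. The paper instead penalises computational-basis amplitudes, using that graph states are exactly the maximally coherent stabilizer states with $\langle e_i|\sigma\rangle=\langle\mathbf 0|\sigma\rangle$ for all $i$. This is what the $S(i)$ terms and the $\ketbra{\mathbf 0}{\mathbf 0}$ terms of $H^{(3)},H^{(4)}$ are for.

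If you rebuild this, be aware of a problem in the paper as written. $H^{(3)}$ and $H^{(4)}$ together add $\|H^{(3)}-H^{(2)}\|_\infty(I-d^2\ketbra{\mathbf 0}{\mathbf 0})$ to $H^{(2)}$, which rewards overlap with $\ket{\mathbf 0}$; in particular $\ket{\mathbf 0}$ gets negative energy on every instance. A fix is to use $(\|H\|_\infty+1)(I-d^2\ketbra{\mathbf 0}{\mathbf 0})$ instead, with coefficient $4(\|H\|_\infty+1)$ on $\sum_i S(i)$. This works because a stabilizer state with $|\langle\mathbf 0|\sigma\rangle|^2=2^{-k}$ and $k<2n$ has a linear support that misses some $\ket{e_i}$. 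It therefore pays $\frac{d^2}{4}2^{-k}$ in that $S(i)$, which outweighs the reward.
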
  

Thus, magic-state resource theory presents a {no-free-lunch scenario}: it is computationally hard both to faithfully identify the stabilizer polytope and to determine whether a given operator can serve as a witness. As a result, capturing magic in states not detected by the relatively few known witnesses remains inherently difficult.

\subsection{The classical-quantum boundary}\label{sec:classicalquantumboundary} 
To come to the point and picking up the core theme of the introduction, we now discuss the implications of our results for delineating the boundary between classical and quantum computation. The classical-quantum boundary is notoriously elusive. Even if a state contains a large amount of magic and cannot be simulated by the stabilizer formalism, it might still be efficiently simulable using other classical methods, such as tensor networks or covariance matrix techniques~\cite{Orus-AnnPhys-2014,RevModPhys.93.045003,CorbozPEPSFermions,PhysRevA.80.042333,AreaReview}. What is well-defined, however, is the classical-quantum boundary \textit{as seen by the stabilizer formalism}. This boundary separates states that can be efficiently simulated by stabilizer methods from those that cannot. 

Although \cref{th1:membershiphard} establishes a fundamental bottleneck in tracing the boundary between free and resourceful states, we can go one step further and consider the classical-quantum boundary from the perspective of magic: given a quantum state, can we determine whether it is classically simulable using the stabilizer formalism? 
This question goes beyond the simple dichotomy between free and resourceful states, as even states containing a small amount of magic can sometimes be classically simulated~\cite{aaronsonImprovedSimulationStabilizer2004,bravyiImprovedClassicalSimulation2016,bravyiSimulationQuantumCircuits2019}. For example, consider a state prepared by a Clifford circuit doped with a finite number $t$ of local non-Clifford gates~\cite{beverlandLowerBoundsNonClifford2020,leoneLearningTdopedStabilizer2024,guDopedStabilizerStates2024,guMagicinducedComputationalSeparation2024}. It is well known 
that such circuits can be simulated in time scaling exponentially in $t$~\cite{aaronsonImprovedSimulationStabilizer2004,bravyiImprovedClassicalSimulation2016,bravyiSimulationQuantumCircuits2019}, which implies polynomial-time simulation in the number of qubits when $t = O(\log n)$. 
Hence, the classical-quantum boundary, from the perspective of magic-state resource theory, is broader than the stabilizer polytope; it includes \emph{$t$-doped stabilizer states} and their convex combinations. Therefore, we ask: how hard is it to identify the boundary of the polytope formed by $t$-doped stabilizer states hereby denoted by $\hat{\mathcal{S}}_t$? Identifying this boundary is a crucial step toward understanding the complete picture: states inside the polytope of $t$-doped states are classically simulable, while states outside---and even very close to the boundary---are, in general, hard to simulate. This sharp transition in computational power makes the polytope of $t$-doped states the natural classical--quantum boundary within the stabilizer framework. 

\begin{theorem}[Tracing the classical-quantum boundary is hard. Informal of \cref{thm:suptdop}]\label{th3:hardnesstdoped}
    For any $\varepsilon=1/\class{poly}(d)$, the problem of deciding whether a state $\rho$ lies in $\hat{\mathcal{S}}_{t}$ or is $\varepsilon$-far from every state in $\hat{\mathcal{S}}_t$ satisfies:
    \begin{itemize}
        \item It belongs to the complexity class $\class{QP}^2$ for any $t < \log n$.
        \item For any $t = O(\log n)$ lies at least in $\class{QP}^2$ and at most in $\class{QP}=\bigcup_k\class{QP}^k$.
    \end{itemize}
\end{theorem}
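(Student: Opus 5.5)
The proof has two directions, mirroring the sketch of \cref{th1:membershiphard}: an upper bound from a convex optimisation over the extreme points of $\hat{\mathcal{S}}_t$, and a lower bound obtained by embedding the stabilizer-polytope membership problem into the $t$-doped one.

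\textbf{Upper bound.} The extreme points of $\hat{\mathcal{S}}_t$ are pure $t$-doped stabilizer states, i.e.\ Clifford circuits interleaved with $t$ single-qubit non-Clifford gates acting on a stabilizer input. I would first reduce the non-Clifford gates to a fixed finite gate set, or to an $\varepsilon$-net of single-qubit unitaries of size $\class{poly}(d)$; this net is accurate enough because $\varepsilon=1/\class{poly}(d)$. Any $t$-doped state can then be written as $C_1 (U_t\otimes \mathbb{1}) C_2 \cdots (U_1\otimes\mathbb{1})|S\rangle$. The structure theorems for doped states let me compress this to $C\,(|\phi\rangle\otimes|0\rangle^{\otimes(n-t)})$, with $C$ Clifford and $|\phi\rangle$ a $t$-qubit state drawn from a discretised family. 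The number of Clifford unitaries is $\class{exp}(O(n^2))$, and the number of $t$-qubit states in the net is $\class{poly}(d)^{2^{t}}$.

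For $t<\log n$ this gives $2^t<n$, so the net contributes at most $\class{exp}(n\cdot O(n))=\class{exp}(O(\log^2 d))$ candidates. The extreme-point set therefore has size $\class{exp}(O(\log^2 d))$. Weak membership in the convex hull of $N$ explicitly listed $d\times d$ points is a linear/semidefinite feasibility problem solvable in $\class{poly}(N,d)$ time, exactly as in the robustness-of-magic argument of \cref{wmemeforstabinqp2}. This places the problem in $\class{QP}^2$. For $t=O(\log n)$ we have $2^t=\class{poly}(n)$, so the net has size $\class{exp}(\class{poly}(\log d))$ and the same LP places the problem in $\class{QP}$.

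\textbf{Lower bound.} I would embed the stabilizer-polytope instance from the reduction behind \cref{th1:membershiphard} into a $t$-doped instance. Given an $m$-qubit instance $\rho$, form $\rho\otimes\tau$ on $n=m+k$ qubits, where $\tau$ is a fixed ancilla state chosen so that the $t$ doping gates are ``used up'' on the ancilla. A concrete choice is to take $\tau$ to be $t$ copies of a maximally magic single-qubit state, for example $|T\rangle^{\otimes t}$, padded with stabilizer qubits. The claim to prove is that $\rho\otimes\tau\in\hat{\mathcal{S}}_t$ if and only if $\rho\in\hat{\mathcal{S}}$, with $\varepsilon$-farness preserved up to $1/\class{poly}$ factors. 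The forward direction is immediate: tensoring a stabilizer mixture of $\rho$ with $\tau$ costs exactly $t$ non-Clifford gates. The converse direction is the main obstacle. I would attack it with a magic monotone that is additive, or at least monotone, under partial trace and that can certify that the $t$ gates are saturated on $\tau$. Candidates are the stabilizer nullity, or the dimension of the stabilizer group of each pure component. The argument would show that any pure $t$-doped state whose reduced state on the ancilla is close to $\tau$ must factorise with a stabilizer state on the first register. Then one traces out the ancilla and uses robustness of these statements to convert distance bounds.

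The alternative is to skip the ancilla and rerun the $3$-\class{SAT} encoding directly. In that version one would verify that the graph-state witness separating YES from NO instances is insensitive to $O(\log n)$ doping, since those states differ from stabilizer states only on $O(\log n)$ qubits while the instance encodes $\Theta(n^2)$ variables. Either route yields $\class{exp}(\Omega(\log^{2-\eta}d))$ hardness under ETH. For $t<\log n$ this matches the upper bound and gives the $\class{QP}^2$ characterisation; for $t=O(\log n)$ it gives the stated sandwich between $\class{QP}^2$ and $\class{QP}$.
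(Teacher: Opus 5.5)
Your upper bound is the paper's argument (\cref{lmm:wmemtdoped}: Clifford orbit of a net of $t$-qubit states, then a linear program). Two adjustments. First, net all $t$-qubit $\ket{\phi}$ as in \cref{def:tcomp} rather than discretising single-qubit gates; Clifford circuits with $t$ single-qubit non-Clifford gates generate a smaller set in general. Second, your count $\class{poly}(d)^{2^t}$, from \cref{lem:packing} with $D=2^t$, is exactly what places $t<\log n$ in $\class{QP}^2$. With the exponent $2^{2t}$ written in \cref{lem:covering-t-doped}, one would only reach $\class{QP}^3$ for $t$ close to $\log n$.

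The genuine gap is in the lower bound. Your ancilla idea is the one the paper uses: \cref{wwdtdoped} lifts the hard Hamiltonian to $\ketbra{\phi}{\phi}\otimes H'$ with $\ket{\phi}$ maximally magic. However, the step that carries all the content is only announced. That step is: a pure $t$-doped $\psi$ with $\tr[(I\otimes\tau)\psi]\approx 1$ must be close to a stabilizer state tensored with $\tau$. Additivity of stabilizer nullity cannot supply it. Additivity applies only when $\psi=\chi\otimes\tau$ exactly, and an integer-valued quantity gives no control over states that are merely close to a product. The paper gets robustness from the continuous, multiplicative $P_6$ together with the inverse theorem $F_{\mathrm{stab}}\ge P_6^{C}$ of \cref{stabentstabfid}. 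This requires choosing $\ket{\phi}=\arg\min P_6$, so that $P_6(\psi)\ge P_6(\phi)$ for every $t$-doped $\psi$. With your $\tau=\ket{T}^{\otimes t}$ that argument does not close. Here $P_6(\ket{T}^{\otimes t})=(5/8)^t$, which exceeds the $t$-qubit minimum (of order $2^{-t}$) by a factor growing like $(5/4)^t$.

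What does work for $\ket{T}^{\otimes t}$ is the stabilizer group itself, not its dimension used as a monotone. The mechanism is a gap in the Pauli spectrum of $\tau$: every non-identity Pauli $Q$ has $\bigl|\langle T^{\otimes t}|Q|T^{\otimes t}\rangle\bigr|\le 2^{-1/2}$.

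\textbf{Pure components.} Suppose $\tr[(I\otimes\tau)\psi]\ge 1-c$. Then $\|\psi-\chi\otimes\tau\|_1\le 2\sqrt{c}$ for the normalised projection $\chi$. Each of the at least $2^{n-t}$ signed Paulis $g_1\otimes g_2$ stabilising $\psi$ therefore obeys $|\langle g_1\rangle_\chi\langle g_2\rangle_\tau|\ge 1-2\sqrt{c}>2^{-1/2}$ for a small constant $c$, which forces $g_2\propto I$. The $g_1$ then form a maximal stabilizer group on the first $n-t$ qubits, so $\psi=\ketbra{s}{s}\otimes\omega$ exactly, with $\ket{s}\in\mathcal{S}_{n-t}$.

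\textbf{Mixtures.} Take $\sigma\in\hat{\mathcal{S}}_{n,t}$ near $\rho\otimes\tau$. A Markov bound on $\tr[(I\otimes\tau)\sigma]$ controls the weight of the remaining components. Tracing out the ancilla then turns $\varepsilon$-farness of $\rho$ into $\Omega(\varepsilon/\sqrt{d})$-farness of $\rho\otimes\tau$.

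Completed this way, your route is a direct state-level reduction. It avoids the paper's second pass through $\class{WWD}$ and the inverse theorem, with correspondingly smaller polynomial losses.

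Two caveats remain.
\begin{itemize}
\item Since $\tau$ is pure, $\rho\otimes\tau$ never lies in the inner core $S(\hat{\mathcal{S}}_{n,t},-\varepsilon)$. The map therefore serves only the one-sided problem, and reaching \cref{thm:suptdop} needs a conversion, e.g.\ via \cref{lem:onesideequivalentbothss} and shrinking toward $I/d$.
\item Your fallback route is not viable as stated. The penalty gaps in \cref{thm:nphardwwd} rely on stabilizer amplitudes and phases being discrete, which states $C(\ket{0}^{\otimes(n-t)}\otimes\ket{\phi})$ violate. Moreover, the Clifford $C$ spreads the magic over all qubits, so it is not confined to $O(\log n)$ of them.
\end{itemize}
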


Unsurprisingly, at this stage, the above theorem shows that this task turns out to be super-polynomially hard. Interestingly, however, as long as $t$ scales logarithmically with the number of qubits, the problem never becomes exponentially hard in the Hilbert space dimension, but instead remains quasi-polynomial. 


\subsection{Summary of conceptual results} 

Altogether, we reveal an ``unbearable hardness'' of magic: not only its quantification and reliable certification are intractable in principle, but even deciding membership in the set of classically simulable states is computationally hard in magic-state resource theory. Moreover, this intractability manifests in a way that is significant in practice. Indeed, from the very beginning, nobody expected these problems to be polynomial-time tractable in the number of qubits; exponential complexity already seemed like the most we could reasonably hope for. Yet even this hope would have been enough: exponential-time tools still allow meaningful large-scale simulations. For example, suppose an optimised classical solver can execute about $2^{40} \sim 10^{12}$ operations. A $2^n$-time algorithm can then handle instances of size $n \approx 25$, large enough to capture genuine asymptotic behaviour.
Super-exponential scaling completely changes this picture. Under the same computational budget, a $2^{n^2}$-time algorithm becomes infeasible already around $n \approx 5$, restricting us to toy models dominated by finite-size effects. In this sense, replacing $2^n$ by $2^{n^2}$ does not merely worsen constant factors; it fundamentally changes what we can hope to analyse in practice.

To conclude, we extend the discussion to highlight the practical consequences of our results. In particular, we show that for noisy quantum circuits, determining whether noise renders them classically tractable is even harder---super-exponential compared to their brute-force simulation. Similarly, in the context of magic-state distillation, we reveal a striking structural feature of magic-state resource theory: there exist pathological magic states whose distillation, although possible in principle, requires prohibitively super-exponential time, making it useless in practice.

\section{Implications of our results}
While our results are rigorous and framed in the language of theoretical computer science, revealing the intrinsic computational complexity of magic-state resource theory, in this section, we discuss some of their implications in more practical, application-orientated scenarios. In particular, despite its foundational goal of distinguishing classical from quantum physics, magic-state resource theory has two main practical purposes.  
First, states or quantum operations containing little magic can often be \emph{simulated efficiently by classical means}~\cite{PhysRevA.70.052328,bravyiImprovedClassicalSimulation2016,StabilizerPolytope,wangQuantifyingMagicQuantum2019}. This is highly valuable, as it provides quantum theorists with an additional powerful simulation tool: alongside tensor 
network methods, it enables the study of processes implemented 
on early quantum devices and supports 
ongoing experimental efforts. 
The second major practical application of magic-state resource theory lies 
in \emph{quantum error correction} and 
fault-tolerant quantum computing, specifically in the task of \emph{magic-state distillation}~\cite{howardApplicationResourceTheory2017}. This important subroutine, combined with transversal Clifford gates, allows for efficient fault-tolerant quantum computation~\cite{PhysRevA.71.022316, paetznickUniversalFaulttolerantQuantum2013}. 
While our results are primarily conceptual, they nevertheless have important implications for both of these practical directions.

\subsection{Classical simulation of noisy quantum circuits}

Quantum computers hold the promise of outperforming classical machines on certain tasks, but this promise remains largely theoretical until fully fault-tolerant hardware is available. In the current noisy intermediate-scale regime, a central goal has been to demonstrate a \textit{quantum advantage}---running a quantum circuit that performs a task beyond the reach of the best classical computers. A leading proposal for such demonstrations has been \emph{random circuit sampling}~\cite{morvanPhaseTransitionRandom2024}, in which a randomly chosen quantum circuit is executed on a device, and one samples from its output distribution. However, the inherent noise of current devices makes these processes partially classically simulable, and improvements in classical simulation techniques---including tensor network methods and massive GPU implementations---have repeatedly challenged the claimed advantage
in an interesting way, at times matching or even surpassing benchmarks~\cite{zhaoLeapfroggingSycamoreHarnessing2024,fuAchievingEnergeticSuperiority2024}. As a result, experimental demonstrations of quantum advantage remain in constant tension with the evolving capabilities of classical simulation.

In this context, a natural systematic question arises: which noisy quantum circuits 
are efficiently simulable by stabilizer-based classical methods? A noisy quantum circuit is, in general, a quantum channel \(\mathcal{E}\), that is, a completely positive trace-preserving linear map on quantum states. Based on our discussion of classical simulability via the stabilizer formalism in \cref{sec:classicalquantumboundary}, we say that a channel \(\mathcal{E}\) is classically simulable if it never generates states outside the tractable region defined by the \(t\)-doped stabilizer polytope $\hat{\mathcal{S}}_t$ for $t=O(\log n)$. Concretely, the noisy quantum circuit \(\mathcal{E}\) is simulable if \(\mathcal{E}(\sigma)\in\hat{\mathcal{S}}_t\) for every input stabilizer state \(\sigma\). Equivalently, \(\mathcal{E}\) injects at most \(t\) non-Clifford operations into any input state, and we call such a map a \emph{completely \(t\)-doped stabilizer preserving channel}.  

Building on the hardness of stabilizer and \(t\)-doped stabilizer membership problems established in \cref{th1:membershiphard,th3:hardnesstdoped}, we obtain the following result regarding the classical simulation of noisy quantum circuits.

\begin{corollary}[Deciding classical simulation of noisy circuits is hard. Informal of \cref{cor1app,cor1appdoped}]  
Given a quantum channel \(\mathcal{E}\), deciding whether \(\mathcal{E}\) is completely \(t\)-doped stabilizer preserving or not does not belong to the complexity class \(\class{QP}^{2-\eta}\) for any $\eta>0$. In particular, this problem is much harder than brute-force simulating the process $\mathcal{E}$ on a classical device.
\end{corollary}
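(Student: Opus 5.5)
The plan is to reduce the (weak) membership problems of \cref{th1:membershiphard,th3:hardnesstdoped} to the channel problem in time polynomial in $d$, so that any $\class{exp}(\log^{2-\eta}d)$ algorithm for deciding complete $t$-doped stabilizer preservation would give one for membership in $\hat{\mathcal{S}}_t$ (with $t=0$ recovering $\hat{\mathcal{S}}$). That would contradict the lower bounds inherited from the $3$-\class{SAT} encoding together with the ETH. The natural reduction uses a \emph{replacement} (preparation) channel. Given an instance $\rho$, define $\mathcal{E}_\rho(X)=\tr(X)\,\rho$. This is manifestly CPTP, and its Choi matrix or any other description is computable in $\class{poly}(d)$ time from $\rho$. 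Since $\mathcal{E}_\rho(\sigma)=\rho$ for every stabilizer input $\sigma$, $\mathcal{E}_\rho$ is completely $t$-doped stabilizer preserving if and only if $\rho\in\hat{\mathcal{S}}_t$.

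The key steps, in order, are as follows.
\begin{enumerate}
\item Fix the formal definition: $\mathcal{E}$ is completely $t$-doped preserving if $(\mathcal{E}\otimes\mathrm{id})(\sigma)\in\hat{\mathcal{S}}_t$ for all stabilizer $\sigma$ on the system together with an ancilla. With this definition, verify that $\mathcal{E}_\rho\otimes\mathrm{id}$ maps a stabilizer $\sigma_{AB}$ to $\rho\otimes\tr_A\sigma_{AB}$.
\item Show that this output lies in $\hat{\mathcal{S}}_t$ whenever $\rho$ does. Here $\tr_A\sigma_{AB}$ is a stabilizer mixture, and tensoring a mixture of $t$-doped states with stabilizer states keeps at most $t$ non-Clifford gates.
\item Show the converse. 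Choosing $\sigma=|0\rangle\langle0|$ with a trivial ancilla, the output is exactly $\rho$.
\item Transfer the promise gap. If $\rho$ is $\varepsilon$-far from $\hat{\mathcal{S}}_t$, then the output $\mathcal{E}_\rho(|0\rangle\langle 0|)=\rho$ is $\varepsilon$-far as well. Hence $\mathcal{E}_\rho$ is $\varepsilon$-far from preserving in any output-based channel distance, and also in diamond norm, since any preserving $\mathcal{E}'$ satisfies $\|\mathcal{E}'(|0\rangle\langle0|)-\rho\|_1\le\|\mathcal{E}'-\mathcal{E}_\rho\|_\diamond$.
\end{enumerate}
The input size of the channel problem is $\class{poly}(d)$, so $\log$ of the input size is $O(\log d)$ and the $\class{QP}^{2-\eta}$ exponent is preserved up to constants.

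The "much harder than brute-force simulation" clause comes from comparing runtimes. Applying $\mathcal{E}$ to a $d\times d$ state costs $\class{poly}(d)=\class{exp}(O(n))$ given its Kraus or Choi form, whereas deciding preservation needs $\class{exp}(\Omega(n^{2-\eta}))$.

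The main obstacle I expect is the closure property in step 2. Membership of $\rho\otimes\tau$ in $\hat{\mathcal{S}}_t$ for a stabilizer-mixture $\tau$ depends on how the paper formally defines $t$-doped states, in particular whether $t$ counts gates globally on the enlarged system. I would first check this directly from the definition via convexity and bilinearity of the tensor product. If ancillas cause trouble, I would fall back on the ancilla-free definition, where steps 2 and 3 are immediate.
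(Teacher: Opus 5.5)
Your proof is correct. Its hard instances coincide with the paper's: the state $\rho\otimes I/d$ that the paper feeds to its Choi-state membership problem is exactly the Choi state of your replacement channel $\mathcal{E}_\rho$. Your verification, however, is more direct.

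\textbf{The paper's route.} It goes through the Choi characterisation (\cref{lem:characterisationstabilizerpreservingchannel} and Lemma~\ref{thm:choitdop}) and embeds $\rho$-\class{WMEM} into Choi-state membership. It then transfers the promise gap between channel outputs and the Choi state using \cref{witweak} and the fact that $W\otimes\sigma^T$ is a witness, paying a polynomial factor in $\varepsilon$.

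\textbf{Your route.} You compute every output $\rho\otimes\tr_A\sigma_{AB}$ explicitly, so you need only closure of $\hat{\mathcal{S}}_{n,t}$ under tensoring with stabilizer mixtures. The closure you were worried about does hold. First, $\tr_A\sigma_{AB}$ is a stabilizer mixture (Lemma~\ref{partialtracetdoped} with $t=0$). Second, $C(\ket{0}^{\otimes(n-t)}\ket{\phi})\otimes C'\ket{0}^{\otimes m}$ has the form of Definition~\ref{def:tcomp} after a qubit permutation, which is Clifford. Your NO case loses nothing in the gap.

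\textbf{The one-sided promise.} Your formulation (YES means exactly preserving) is what makes the YES direction work. YES instances of $\rho$-\class{WMEM}$_\varepsilon$ lie in $\hat{\mathcal{S}}_{n,t}$, so the lower bounds of Theorems~\ref{thm:nphardrhowmem} and~\ref{thm:suptdop} transfer at once. By contrast, the two-sided YES condition of Problem~\ref{channeldopingproblem} requires every output to lie in the inner core $S(\hat{\mathcal{S}}_{2n,t},-\varepsilon)$. No channel can meet it: a product input such as $\ketbra{0}{0}^{\otimes 2n}$ gives a rank-deficient output, which sits on the boundary of the state space and hence in no inner core.

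\textbf{Fixed-size ancilla.} If the ancilla is forced to have $n$ qubits, your NO case still works with input $\ketbra{0}{0}^{\otimes 2n}$. Partial trace maps $\hat{\mathcal{S}}_{2n,t}$ into $\hat{\mathcal{S}}_{n,t}$, and $\|\tr_B X\|_2\le\sqrt{d}\,\|X\|_2$, so the gap shrinks only to $\varepsilon/\sqrt{d}$.

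\textbf{What each route buys.} The paper's route gives a reduction valid for arbitrary channels via their Choi states, which is the natural tool for a matching upper bound. For the lower bound itself, the replacement channel suffices and gives a cleaner argument.
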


The content of the above corollary implies, somewhat ironically, that determining whether a noisy quantum circuit is classically simulable in the stabilizer sense is significantly harder than the brute-force simulation of the circuit itself. This highlights the intrinsic computational hardness of deciding magic resources in quantum states and channels, even when considering weaker notions of simulability. 

 \subsection{Super-polynomial time magic-state distillation}
Magic-state distillation is a key component of universal fault-tolerant quantum computation. In most physical architectures, Clifford gates can be implemented reliably and, crucially, transversally. On their own, however, Clifford operations are not computationally universal. Universality can be achieved by supplementing them with suitable non-Clifford gates. But then the hen-and-egg problem arises: how to realise those non-Clifford gates in the first place. \emph{Magic-state distillation}~\cite{PhysRevA.71.022316,PhysRevX.2.041021,LitinskiMagic,howardApplicationResourceTheory2017,paetznickUniversalFaulttolerantQuantum2013,ComplexInstructionSet,PhysRevA.95.022316,PhysRevLett.120.050504,AdamWills} addresses this challenge by starting from multiple copies of noisy non-stabilizer states and processing them into fewer, higher-fidelity non-stabilizer states. These improved resource states are then injected into Clifford circuits via gate teleportation, enabling non-Clifford operations and, ultimately, universal quantum computation. Operationally, one typically has access to only a limited number of noisy magic states on a small number of qubits and aims to convert them into high-fidelity resources suitable for implementing non-Clifford gates. A central issue in this setting is the computational cost of carrying out such distillation procedures.

While \cref{th1:membershiphard} does not directly provide constructive guidance for designing magic-state distillation protocols, it does offer important insights. First, distillation schemes based on the canonical $\ket{T}\propto\ket{0}+e^{i\pi/4}\ket{1}$ resource state are not unique, and identifying alternative resource states that may lead to improved efficiency is a meaningful research direction. Indeed, resource states involving one-, two-, and three-qubit systems, among others, have been explored \cite{howardApplicationResourceTheory2017,paetznickUniversalFaulttolerantQuantum2013,ComplexInstructionSet,PhysRevA.95.022316,PhysRevLett.120.050504}. However, a first consequence of \cref{th1:membershiphard} is that the very search for new magic resource states is hindered by the intrinsic difficulty of deciding whether a given state is a magic state at all; no efficient optimisation procedure can accomplish this task.

A further implication of \cref{th1:membershiphard} is the existence—presumably in rather pathological cases, though this is hard to assess—of magic states for which distillation, while not ruled out in principle, cannot be achieved by any protocol running in less than super-polynomial time in the Hilbert space dimension, as the next corollary rigorously establishes.

\begin{corollary}[Super-polynomial time magic-state distillation. Informal of \cref{cor2app}]  
There exists a (possibly mixed) non-stabilizer state \(\rho\) on \(n\) qubits such that any stabilizer-based distillation protocol that can be found and executed within polynomial time in $d$ produces at most a super-polynomially small $\class{exp}(-\log^2d)$ fraction of high-fidelity magic states.  
\end{corollary}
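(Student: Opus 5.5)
The plan is a proof by contradiction: an efficient distillation protocol for every non-stabilizer state would give an efficient membership test for the stabilizer polytope, contradicting \cref{th1:membershiphard}.

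\textbf{Setup.} Suppose that for every non-stabilizer state $\rho$ on $n$ qubits there is a stabilizer-based distillation protocol with three properties: it can be found and executed in time $\class{poly}(d)$; it acts on copies of $\rho$ using stabilizer operations (Clifford unitaries, Pauli measurements, classical post-processing, stabilizer ancillas); and it outputs a high-fidelity magic state with success probability $p(\rho)\ge \class{exp}(-\log^{2-\eta} d)$ for some $\eta>0$.

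\textbf{Key monotonicity fact.} Stabilizer operations map the stabilizer polytope $\hat{\mathcal{S}}$ into itself. Hence if $\rho\in\hat{\mathcal{S}}$, every output of such a protocol, conditioned on any outcome, also lies in $\hat{\mathcal{S}}$. A high-fidelity magic state (say close to $\ket{T}$) sits at constant trace distance from $\hat{\mathcal{S}}$. So for free inputs the "success" outcome can only occur at fidelity bounded away from $1$. Checking the output fidelity with a constant number of qubits, or running a fixed witness, therefore distinguishes the two cases.

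\textbf{The decision procedure for weak membership.} Given $\rho$ that is either in $\hat{\mathcal{S}}$ or $\varepsilon$-far from it:
\begin{enumerate}
\item Run the protocol-finding procedure on $\rho$.
\item Simulate the protocol classically, exactly on density matrices. The cost is $\class{poly}(d)$ per run, times the number of copies, which is polynomial by assumption.
\item Compute the success probability and output fidelity. Because the simulation is exact and no sampling is involved, even a success probability as small as $\class{exp}(-\log^{2-\eta}d)$ can be read off directly.
\item Accept "magic" if a high-fidelity magic output arises with nonzero weight above the threshold; otherwise accept "stabilizer".
\end{enumerate}
The total time is $\class{poly}(d)$ up to factors $\class{exp}(\log^{2-\eta}d)$. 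This places the weak membership problem in $\class{QP}^{2-\eta}$, contradicting the lower bound half of \cref{th1:membershiphard} under ETH. Hence some $\rho$ admits only success fractions of order $\class{exp}(-\log^2 d)$.

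\textbf{Main obstacle.} The delicate step is robustness for inputs that are $\varepsilon$-far but possibly close to the boundary. The protocol's guarantee has to apply uniformly to all states $\varepsilon$-far from $\hat{\mathcal{S}}$, with $\varepsilon=1/\class{poly}(d)$. The way to handle this is to formalise "distillation protocol" as a promise family indexed by $\varepsilon$, so that the hard instances produced by the reduction behind \cref{thm:nphardrhowmem} are covered. A second point is that the contradiction needs a protocol that is \emph{found} efficiently. This is exactly why the statement of the corollary restricts to protocols that can be found and executed in polynomial time: the existence of good protocols for each individual $\rho$ is not ruled out, only their efficient discovery.
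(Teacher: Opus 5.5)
Your architecture is the same as the paper's. You negate the statement, use that stabilizer operations map $\hat{\mathcal{S}}_n$ into itself (so no conditional output of a free input is close to a magic state), and turn the assumed protocol into a distinguisher that contradicts \cref{thm:nphardrhowmem}.

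The step that fails is your step~2. The cost you quote, ``$\class{poly}(d)$ per run, times the number of copies'', is the cost of \emph{executing} the protocol, i.e.\ its gate count. It is not the cost of simulating it classically on density matrices. A protocol executable in time $\class{poly}(d)$ may act jointly on $k=\class{poly}(d)$ copies of $\rho$; code-based schemes with large block length do exactly this. Then $\rho^{\otimes k}$ lives on $kn$ qubits, so exact density-matrix simulation costs $2^{\Theta(kn)}=d^{\Theta(k)}$. This fits your budget $\class{exp}(\log^{2-\eta}d)$ only when $k=O(n^{1-\eta})$; for $k=\class{poly}(d)$ it is doubly exponential in $n$.

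Nor can you invoke a cleverer classical simulation in general. Stabilizer operations fed with many copies of a magic state are universal, so a classical simulation polynomial in $k$ would efficiently simulate arbitrary quantum circuits. Exact postselected success probabilities of such circuits are $\#\class{P}$-hard. Hence your procedure does not put $\rho$-$\class{WMEM}_\varepsilon(\hat{\mathcal{S}}_n)$ in $\class{QP}^{2-\eta}$, and the contradiction does not go through. Your remark that exact simulation lets you read off success probabilities as small as $\class{exp}(-\log^{2-\eta}d)$ rests on the same step.

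The paper never simulates the protocol; its distinguisher \emph{is} the execution. It prepares $\class{poly}(d)$ copies of $\tilde\rho$, runs $\mathcal{E}_{\tilde\rho}$, and repeats this $\class{poly}(d)$ times. A magic output then appears with non-negligible probability for $\tilde\rho$ but never for inputs in $\hat{\mathcal{S}}_n$. This is also why its formal version, \cref{cor2app}, concludes only a $1/\omega(\class{poly}(d))$ success probability rather than $\class{exp}(-\log^2 d)$.

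To repair your argument you have two options. You can restrict the class of protocols to ones whose classical simulation fits the budget, e.g.\ $O(n^{1-\eta})$ jointly processed copies. Alternatively, you can adopt the paper's model. In that case you should state explicitly that the lower bound of \cref{thm:nphardrhowmem} is being applied to a procedure that prepares and processes copies of $\rho$ from its matrix, which is a $\class{poly}(d)$-time quantum algorithm. This requires the ETH-based hardness to hold against such algorithms, a point the paper's proof leaves implicit.
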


\section{Discussion and open questions}
In this work, we have shown that the geometry of stabilizer states conceals an unexpectedly severe computational barrier: both the recognition of classicality and the quantification of genuinely quantum resources require 
$\class{exp}(n^2)$ time. This result fundamentally reframes the practical and conceptual status of the stabilizer polytope, of magic resource theory, and of the broader task of delineating classical from quantum computational power.
If one identifies the class of systems that can be efficiently simulated by Pauli propagation methods—and that therefore admit no quantum advantage—with a ``classical world,'' and its complement with a ``quantum world,'' then our results imply that the very act of separating these two regimes is itself computationally intractable. In this sense, the classical–quantum boundary is not merely subtle or poorly understood, but provably hard to determine. This adds a striking and somewhat ironic twist to an already challenging foundational problem.

Many questions remain open. Can these hardness results be circumvented for restricted families of states or channels? Are there physically natural approximations that avoid the 
$\class{exp}(n^2)$ barrier? Can the promise gap discussed in the present proofs be made larger? Similar exponentially small promise gaps are known from studies of the computational complexity of the separability problem \cite{ioannouComputationalComplexityQuantum2007}, but is there room for improvement?  
Do analogous phenomena arise in other resource theories? And, ultimately, what concrete lessons do these insights offer for the search for new quantum algorithms that achieve genuine advantages over classical computation? It is hoped that the present work invites the solution of some of these open problems.
  

\section{Acknowledgments}
We thank Scott Aaronson for pointing out a typographic error to us.
This work resulted from presentations and discussions at the Second International Workshop on Many-Body Quantum Magic, held by the InQubator for Quantum Simulation (IQuS) at the University of Washington, 18-29 August 2025.  The workshop has been  supported by the U.S. Department of Energy, Office of Science, Office of Nuclear Physics  Award Number DOE (NP) Award DE-SC0020970 via the program on Quantum Horizons: QIS Research and Innovation for Nuclear Science. L.L. acknowledges funding from the Italian Ministry of University and Research, PRIN PNRR 2022, project “Harnessing topological phases for quantum technologies”, code P202253RLY, CUP D53D23016250001, and PNRR-NQSTI project ”ECoN: End-to-end long-distance entanglement in quantum networks”, CUP J13C22000680006. This work has been supported by the BMFTR (DAQC, MuniQC-Atoms, QuSol, Hybrid++, PasQuops), Clusters of Excellence (ML4Q, MATH+), the Munich Quantum Valley, Berlin Quantum, the Quantum Flagship (Millenion, Pasquans2), the DFG (CRC 183, SPP 2541), the European Research Council (DebuQC), PraktiQOM, and the Alexander-von-Humboldt Foundation. 

\bibliographystyle{apsrev4-1}
\bibliography{bib,BigReferences68}
\newpage
\appendix

\setcounter{secnumdepth}{2}
\setcounter{equation}{0}
\setcounter{figure}{0}
\setcounter{table}{0}
\setcounter{section}{0}
\renewcommand{\thetable}{S\arabic{table}}
\renewcommand{\theequation}{S\arabic{section}.\arabic{equation}}
\renewcommand{\thefigure}{S\arabic{figure}}
\renewcommand{\thesection}{S.\arabic{section}} 
\begin{center}
\textbf{\large Supplemental Material}
\end{center}
\tableofcontents

\resumetoc

\section{Preliminaries}
This section is devoted to notation and preliminaries. In~\cref{sec:nota} we introduce basic notation and fundamental concepts from quantum information. In~\cref{sec:comp} we review standard notions from complexity theory. In~\cref{sec:epsilonnet} we recall preliminary results on $\varepsilon$-nets. In~\cref{sec:magic} we introduce the resource theory of magic states.

\subsection{Notation and basic notions}\label{sec:nota}
We fix the notation and recall basic notions from quantum information theory that will be used throughout the paper.

We denote the Hilbert space of $n$ qubits by $\mathcal{H}_n$, with dimension $d = 2^n$. The space of linear operators acting on $\mathcal{H}_n$ is denoted by $\mathcal{B}(\mathcal{H}_n)$, and the set of density operators on $\mathcal{H}_n$ by $\mathcal{D}(\mathcal{H}_n) \subset \mathcal{B}(\mathcal{H}_n)$. 
Many of the quantities considered in this work are naturally expressed in terms of operator norms. In particular, we will make extensive use of Schatten norms, which we now recall.

\begin{definition}[Schatten $p$-norms]
Let $A \in \mathcal{B}(\mathcal{H}_n)$. The Schatten $p$-norm of $A$ is defined as
\begin{align}
    \|A\|_p \coloneqq \left[\tr\!\left(|A|^p\right)\right]^{1/p},
\end{align}
where $|A| \coloneqq \sqrt{A^\dagger A}$ and $p \ge 1$. In particular, with $p=2$ we have the Frobenius norm $\|A\|_2\coloneqq\sqrt{\tr(A^{\dag}A)}$
\end{definition}

The Hilbert--Schmidt inner product on $\mathcal{B}(\mathcal{H}_n)$ is given by $\langle A,B\rangle \coloneqq \tr(A^\dagger B)$. 
Hölder's inequality, which generalises the Cauchy--Schwarz inequality, states that
\begin{align}
    |\tr(A^\dagger B)| \le \|A\|_p \, \|B\|_q ,
\end{align}
for $p,q \in [1,\infty]$ such that $p^{-1} + q^{-1} = 1$.
As a consequence, the Schatten $p$-norm can be shown to admit the  variational characterisation
\begin{align}
    \|A\|_p = \max_{\|B\|_q \le 1} \tr(A^\dagger B),
\end{align}
where again $p^{-1} + q^{-1} = 1$.

\subsection{Basics of complexity theory: $3$-SAT, quasi-polynomial classes and ETH}\label{sec:comp}

This section reviews the basic notion of complexity theory, with an emphasis on the $3$-\class{SAT} problem. For a comprehensive treatment of complexity theory, we refer the reader to standard references such as Ref.~\cite{aroraComputationalComplexityModern2009}.
Throughout this section, let $D$ denote the input size of the computational problems under consideration. For the remainder of the manuscript, we restrict attention to the decision problems defined below.

\begin{definition}[Decision problem] A decision problem is a computational problem defined by a set $L \subseteq \{0,1\}^D$ (a formal language) such that for every input string $x \in \{0,1\}^D$ the answer is either \class{YES} or \class{NO}. The task of a decision problem is to determine whether a given input $x\in L$. A Turing machine that always halts and outputs \class{YES} exactly when $x\in L$ and \class{NO} otherwise is said to solve the decision problem.
\end{definition}

We define two complexity classes of primary interest in the remainder of the manuscript.
\begin{definition}[\class{P} and \class{NP} complexity classes] The class \class{P} consists of all decision problems that can be decided in polynomial time by a deterministic Turing machine. The class \class{NP} consists of all decision problems for which \class{YES} instances can be verified in polynomial time by a deterministic Turing machine.
\end{definition}

The 3-\class{SAT} problem is a fundamental computational problem in complexity theory. It belongs to the class of problems known as \class{NP}-complete, which implies that it is both in the complexity class \class{NP} and is at least as hard as the hardest problems in \class{NP}.
In 3-SAT, we are given a Boolean formula consisting of a \emph{conjunction} (AND) of clauses. Each clause is a \emph{disjunction} (OR) of exactly three literals. A literal can be either a variable or its negation. For example, a clause may take the form \( (x_1 \vee \neg x_2 \vee x_3) \), where \(x_i\) represents a Boolean variable and \(\neg\) denotes negation.
The goal of the 3-\class{SAT} problem is to determine if there exists an assignment of truth values (true or false) to the variables such that the entire Boolean formula evaluates to true. This assignment is often referred to as a satisfying assignment.
Given a 3-\class{SAT} formula \( \phi \) with \(n\) variables and \(m\) clauses, the question is whether there exists an assignment of truth values \(x_1, x_2, \ldots, x_n\) such that \( \phi(x_1, x_2, \ldots, x_n) \) evaluates to true.
Subsequently, we provide a rigorous definition of the problem.
\begin{problem}[$3$-\class{SAT}]\label{prob:3sat} Problem:
    \begin{itemize}
        \item Input: A list of indices $\vec i\in [D]^{3,n_c}$ and a negation table $s\in \{\cdot, \lnot\}^{3,n_c}$.
        \item Output: Determine, if a Boolean variable $x\in\{0,1\}^{D}$ exists which satisfies all clauses.
        \begin{align}
            \forall c\in[n_c]:\quad s_{1,c} x_{i_1(c)}\lor s_{2,c} x_{i_2(c)}\lor s_{3,c} x_{i_3(c)}.
        \end{align}
    \end{itemize}
    \end{problem}
Here, $[D]^{3,n_c}$ means that we take $n_c$ subsets of $3$ elements belonging to $[D]$.
To establish the \class{NP}-hardness of a problem, a commonly employed approach involves demonstrating a polynomial-time reduction from $3$-\class{SAT} to the specific problem at hand. This process entails the transformation of $3$-\class{SAT} instances into instances of the target problem in a manner ensuring that a solution to the transformed instance directly implies a solution to the original $3$-\class{SAT} instance.
The key insight lies in recognising that efficient resolution of the target problem would also imply an efficient resolution of $3$-SAT, which is known to be \class{NP}-hard. To show the legitimacy of the reduction, one must prove that it operates in polynomial time and satisfies the conditions of ``completeness" and ``soundness".
``Completeness" asserts that if a solution exists for $3$-SAT, then a solution also exists for the transformed version of the problem of interest. 
``Soundness" states that if a solution does not exist for $3$-SAT, then there is no solution for the transformed version of the problem of interest.
Thus, the \class{NP}-hardness of 3-\class{SAT} serves as a reference point for proving the computational intractability of other problems, including the ones currently under consideration in our work.

We can give an analogous formulation of the $3$-\class{SAT} problem as a ground-state problem, which may be more familiar to physicists. Consider spin (Boolean) variables $x_{\alpha} \in \{0,1\}$, and a sign vector $s$ with components $s_{\alpha}\in\{0,1\}$, for $\alpha \in \{1,\dots,D\}$, and define the Hamiltonian
\begin{align}\label{hamiltonianformulation3sat}
H_s(x) = \sum_{\alpha < \beta < \gamma} (x_{\alpha}\oplus s_{\alpha}) (x_{\beta}\oplus s_{\beta}) (x_{\gamma}\oplus s_{\gamma}).
\end{align}
where $\oplus$ denotes the addition modulo $2$ such that $x_{\alpha}\oplus 1=\lnot x_{\alpha}$. The goal is to determine whether there exists an assignment $\bar{x}$ of the variables $x_{\alpha}$ such that $H(\bar{x}) = 0$, or whether $H(x) \ge 1$ for all possible assignments $x$.

Besides the canonical complexity classes of \class{P} and \class{NP}, we define the intermediate regime of quasi-polynomial complexity classes.

\begin{definition}[Quasipolynomial time and $\class{QP}^k$]\label{def:quasipolyclasses}
A decision problem is said to be solvable in \emph{quasipolynomial time} if there exists a deterministic Turing machine that decides it in time
\begin{equation}
2^{(\log D)^{O(1)}},
\end{equation}
where $D$ is the size of the input. The corresponding complexity class is
\begin{equation}
\class{QP} := \bigcup_{k \ge 1} \class{TIME}\big( 2^{(\log D)^k} \big),
\end{equation}
where $\class{TIME}(f(D))$ denotes the class of problems solvable by a deterministic Turing machine in $O(f(D))$ time. For a fixed number $\mathbb{R}\ni k \ge1$, define
\begin{equation}
\class{QP}^k := \class{TIME}\big( 2^{O((\log D)^k)} \big).
\end{equation}
In particular, $\class{QP}^1 = \class{P}$ up to polynomial factors in $D$, and $\class{QP}^2$ corresponds to problems solvable in time $2^{O((\log D)^2)}$.  
Thus, we have a \emph{hierarchy of quasipolynomial time classes}.
\begin{equation}
\class{QP}^1 \subseteq \class{QP}^2 \subseteq \class{QP}^3 \subseteq \cdots \subseteq \class{QP}
\end{equation}
\end{definition}

Let us now introduce one of the most widely accepted assumptions in complexity theory, namely the \emph{exponential time hypothesis} (ETH). We will assume ETH throughout this work in order to establish the computational hardness of problems arising in the resource theory of magic states.

\begin{definition}[Exponential time hypothesis (ETH)]\label{def:eth}
The ETH posits that there exists a constant $c > 0$ such that 3-\class{SAT} on $D$ variables cannot be solved in time $2^{c D}$.
\end{definition}
Under the ETH assumption, the quasi-polynomial complexity classes are strictly separated from one another. Indeed, consider a $3$-\class{SAT} instance on $\log^{k} D$ variables for $k\ge 2$. Since it can be solved in time $2^{\log^{k} D}$, it belongs to $\class{QP}^k$. However, under ETH, it cannot be solved in time $2^{c \log^{k} D}$ for any constant $c>0$, and therefore it cannot belong to $\class{QP}^{k-\eta}$ for any $\eta>0$. This assumption will be used to prove the main result of this work.

\subsection{Convex sets and related computational problems}\label{sec:convexsetspreliminaries}
In this section, we introduce the computational problems related to convex sets of $\mathbb{R}^D$ for $D\in\mathbb{N}$.

A subset \(K \subseteq \mathbb{R}^D\) is called convex if, for any two points \(\mathbf x,\mathbf y \in K\), the entire straight line segment \(\{\lambda \mathbf x + (1-\lambda)\mathbf y : \lambda \in [0,1]\}\) lies in \(K\); the convex hull of a set \(S \subseteq \mathbb{R}^D\) is the smallest convex set containing \(S\).

A central computational question is the {\em membership} problem: given a point \(\mathbf y \in \mathbb{R}^D\) and a description of \(K\), decide whether \(\mathbf y \in K\).  Because exact membership may be numerically delicate or computationally hard, one often considers the weak membership problem \(\mathrm{WMEM}_\varepsilon(K)\), where a tolerance parameter \(\varepsilon \ge 0\) is given, and the task is to decide whether \(\mathbf y\) lies within a distance \(\varepsilon\) of \(K\) or is at least \(\varepsilon\)-far from the interior of \(K\).  This relaxed formulation affords robustness to small perturbations or approximate descriptions and plays a fundamental role in convex-body algorithms and the complexity analysis of convex-set membership~\cite{grotschelGeometricAlgorithmsCombinatorial1988,ioannouComputationalComplexityQuantum2007}.

The weak membership framework is particularly useful in high-dimensional settings or when \(K\) is given implicitly (e.g., by a family of constraints or an oracle): it permits stable, approximate decisions even under numerical or representational imprecision. In many algorithmic contexts (volume estimation, optimisation, feasibility testing), a weak-membership oracle suffices to implement efficient procedures. To precisely formulate the weak membership problem \( \class{WMEM}_{\varepsilon}(K) \), we must first establish several key concepts that underpin its definition.
First, let us lay down some notational conventions used throughout the work. Given a vector $\mathbf{x}\in\mathbb{R}^D$, we define 
\begin{equation}\label{l2norm}
\|\mathbf{x}\|_2\coloneqq\sqrt{\sum_{i=1}^{D}x_i^2} 
\end{equation}
the usual Euclidean  $\ell_2$-norm. Given two vectors $\mathbf{x},\mathbf{y}$, we denote their scalar product as $\mathbf{y}^T\mathbf{x}\in\mathbb{R}$. 

\begin{definition}[Inner and outer core of a convex set]\label{def:set1} Let \( K \subseteq \mathbb{R}^D \) be a subset of $\mathbb{R}^D$. We define the following.  
\begin{itemize}
    \item We define  
    $
        S(K, \varepsilon) \coloneqq \{\mathbf{x} \in \mathbb{R}^D \mid \exists\, \mathbf{y} \in K : \|\mathbf{x} - \mathbf{y}\|_2 \le \varepsilon\}.
    $
    \item We define  
    $
        S(K, -\varepsilon) \coloneqq \{\mathbf{x} \in \mathbb{R}^D \mid S(\{\mathbf{x}\}, \varepsilon) \subseteq K\},
    $
    where \(\{\mathbf{x}\}\) denotes the 
    singleton set containing \(\mathbf{x}\).
\end{itemize}

\end{definition}
In words: given a subset \(K \subset \mathbb{R}^D\), \(S(K,\varepsilon)\) is the \(\varepsilon\)-\emph{outer neighborhood} of \(K\), i.e.\ the set obtained by expanding \(K\) outward by \(\varepsilon\); while \(S(K,-\varepsilon)\) is the \emph{inner core} of \(K\), i.e.\ the set of points whose \(\varepsilon\)-ball lies entirely within \(K\).

For our purposes, it will also be useful to formalise the following concept.  

\begin{definition}[Well-bounded centered set]\label{def:set2}
Let $K$ be a subset of $\mathbb{R}^D$. Then, $K$ is called \emph{well-bounded and \(p\)-centered} if there exist rationals \(R > 0\), \(r > 0\), and a point \(\mathbf{p} \in K\) such that, with the origin \(\mathbf{0} \in \mathbb{R}^D\),
\begin{align}
    K &\subseteq S(\{\mathbf{0}\}, R), \\
    K &\supseteq S(\{\mathbf{p}\}, r) .
\end{align}
\end{definition}
Intuitively, the condition $K \subseteq S(\{\mathbf{0}\},R)$ ensures that $K$ is contained within some large ball of radius $R$ around the origin, i.e., $K$ has a bounded outer size. Meanwhile, the condition $K \supseteq S(\{\mathbf{p}\},r)$ asserts that $K$ contains a full ball of radius $r$ around some point $\mathbf{p} \in K$, i.e., $K$ has a non-trivial \emph{inner core}. Thus, the definition rules out both unboundedness and degeneracy: it ensures that $K$ is a reasonably sized set with some guaranteed volume (or thickness) from the inside, as well as a global bound from the outside.

Given \cref{def:set1,def:set2}, for the subset $K$, we can define weak membership as a rigorous decision problem.

\begin{problem}[Weak membership (\class{WMEM})]\label{def:weakmem}
	Let $K\subseteq \mathbb{R}^{D}$ be a compact, convex, and well-bounded $p$-centered set. Then, given $\mathbf{y}\in\mathbb{Q}^{D}$ and an error parameter $\varepsilon\in\mathbb{Q}$, s.t.,  $\varepsilon>0$, decide: 
\begin{itemize}
    \item If $\mathbf{y}\in S(K,-\varepsilon)$, then output \class{YES}.
    \item If $\mathbf{y}\notin S(K,\varepsilon)$, then output \class{NO}.

\end{itemize}
An instance of \class{WMEM} is denoted as $\class{WMEM}_{\varepsilon}(K)$.
	\end{problem}
Similarly, it is possible to define a (polynomially) equivalent notion of weak membership called $1$-side weak membership problem.

\begin{problem}[$1$-sided weak membership problem ($\class{WMEM}^{1}$)]\label{def:weak1mem}
	Let $K\subseteq \mathbb{R}^{D}$ be a compact, convex, and well-bounded $p$-centered set. Then, given $\mathbf{y}\in\mathbb{Q}^{D}$ and the error parameter $\varepsilon\in\mathbb{Q}$, s.t.\  $\varepsilon>0$, decide: 
\begin{itemize}
    \item If $\mathbf{y}\in K$, then output \class{YES}.
    \item If $\mathbf{y}\notin S(K,\varepsilon)$, then output \class{NO}.

\end{itemize}
An instance of $\class{WMEM}^{1}$ is denoted as $\class{WMEM}_{\varepsilon}^{1}(K)$.
	\end{problem}

In what follows, we restate a theorem—previously established in Refs.~\cite{liuComplexityConsistencyNrepresentability2007,grotschelGeometricAlgorithmsCombinatorial1988}—that demonstrates the polynomial equivalence of these problems. For the purposes of this section, we focus on one direction of the equivalence: specifically, we show that oracle access to $\class{WMEM}^1$ suffices to solve $\class{WMEM}$.

\begin{theorem}[Lemma 2.5 and Lemma 2.6 of Refs.~\cite{liuComplexityConsistencyNrepresentability2007,grotschelGeometricAlgorithmsCombinatorial1988}]\label{thm:wmem1wmem}Let $K\subseteq \mathbb{R}^{D}$ be a compact, convex and well-bounded $p$-centered set with an associated radii $(R,r)$. Given an instance of $\class{WMEM}_{\delta}(K)$, there exists an algorithm that runs in time $\class{poly}(D,R,\lceil 1/\delta \rceil)$ and solves $\class{WMEM}_{\delta}(K)$ using an oracle for $\class{WMEM}_{\varepsilon}^{1}(K)$ with 
\begin{equation}
\varepsilon \coloneqq \frac{\delta^3r^2}{3\cdot 2^{12} D^5 R^4}.
\end{equation}
\end{theorem}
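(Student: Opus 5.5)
We prove this with the standard reduction from two-sided to one-sided weak membership, as in Grötschel–Lovász–Schrijver. Given $\mathbf{y}$ and $\delta$, we must decide between $\mathbf{y}\in S(K,-\delta)$ and $\mathbf{y}\notin S(K,\delta)$. The plan is to shrink $K$ slightly toward the centre $\mathbf{p}$, so that points deep inside $K$ become genuine members of a nearby convex body. The one-sided oracle, applied to suitable points, can then certify them.

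Concretely, for a contraction parameter $\lambda\in(0,1)$ consider
\begin{equation}
\mathbf{y}_\lambda \coloneqq \mathbf{p}+(1+\lambda)(\mathbf{y}-\mathbf{p}).
\end{equation}
Since $K$ is convex and contains the ball $S(\{\mathbf{p}\},r)$, the following holds. If $\mathbf{y}\in S(K,-\delta)$ and $\lambda\le \delta/(2R)$, then $\mathbf{y}_\lambda\in K$. This is because $\|\mathbf{y}_\lambda-\mathbf{y}\|\le \lambda\cdot 2R\le\delta$, and the whole $\delta$-ball around $\mathbf{y}$ lies in $K$. Conversely, suppose $\mathbf{y}\notin S(K,\delta)$. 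We want $\mathbf{y}_\lambda\notin S(K,\varepsilon)$ for the oracle's small $\varepsilon$.

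By convexity, moving away from $\mathbf{p}$ can only increase the distance to $K$, up to the standard "cone" argument. If $\mathbf{y}_\lambda$ were within $\varepsilon$ of some $\mathbf{z}\in K$, then the convex hull of $\mathbf{z}$ and $S(\{\mathbf{p}\},r)$ lies in $K$. This hull contains a ball around $\mathbf{y}$ of radius of order $r\lambda/(1+\lambda)$, minus a correction of order $\varepsilon$. It would follow that $\mathbf{y}\in K$, a contradiction once $\varepsilon\ll r\lambda$.

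The oracle call $\class{WMEM}^1_\varepsilon(K)$ on the rational point $\mathbf{y}_\lambda$ then returns YES in the first case and NO in the second, which decides $\class{WMEM}_\delta(K)$.

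The main obstacle is bookkeeping, for two reasons.

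First, the oracle requires rational inputs of polynomial bit size. We therefore round $\mathbf{y}_\lambda$ to a nearby rational point, and this rounding error must be absorbed into the margins on both sides. The first case needs slack $\delta$ minus the rounding error, and the second needs a gap exceeding $\varepsilon$ plus the rounding error.

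Second, the stated $\varepsilon=\delta^3 r^2/(3\cdot2^{12}D^5R^4)$ comes from the specific chain in Lemmas 2.5–2.6 of the cited works. That chain passes through a weak validity/separation step: it uses a simultaneous Diophantine approximation in dimension $D$ to produce rational, polynomially sized data, which is where the $D^5$ and $R^4$ factors arise.

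I would reproduce that chain: Lemma 2.5 gives WMEM from a weak-separation-type test built from $\class{WMEM}^1$ via the above contraction, and Lemma 2.6 handles the rounding. I would then track the constants, checking that every parameter is at most polynomial in $D$, $R$ and $\lceil1/\delta\rceil$, so that the total running time is $\class{poly}(D,R,\lceil1/\delta\rceil)$ with a single or polynomially many oracle calls.
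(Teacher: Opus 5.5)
Your argument is correct, but it takes a different and far more elementary route than the paper. The paper gives no proof at all; it imports the statement, constant included, from the cited lemmas of Liu and of Gr\"otschel, Lov\'asz and Schrijver.

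Your first two paragraphs are already a complete proof with a single oracle query. With $\lambda=\delta/(2R)$, the YES case goes through as you wrote. Your cone argument in the NO case only needs $\varepsilon<r\lambda$, and the stated $\varepsilon$ satisfies this whenever $\delta\le R$: using $r\le R$, one gets $\varepsilon/(r\lambda)=2\delta^2r/(3\cdot 2^{12}D^5R^3)\le 2/(3\cdot 2^{12})$. For $\delta>R$ one has $S(K,-\delta)=\emptyset$, so always answering NO is correct.

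Hence the second half of your plan is a misdirection. Nothing from the cited chain has to be reproduced, Diophantine approximation or otherwise. In the direction asserted, $\varepsilon$ only has to be checked, not derived: decreasing $\varepsilon$ enlarges the set of inputs on which the one-sided oracle must be correct, so it only strengthens the oracle.

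Even your expansion step is superfluous. We have $S(K,-\delta)\subseteq K$, and $\mathbb{R}^D\setminus S(K,\delta)\subseteq\mathbb{R}^D\setminus S(K,\varepsilon)$ for $\varepsilon\le\delta$; the stated $\varepsilon$ is below $\delta$ whenever $\delta\le R$. So querying the oracle on the given rational $\mathbf{y}$ itself decides $\class{WMEM}_{\delta}(K)$, with no rounding and no use of $\mathbf{p}$ or $r$. Likewise, your cone argument could be replaced by $\mathrm{dist}(\mathbf{y}_\lambda,K)\ge(1+\lambda)\,\mathrm{dist}(\mathbf{y},K)$, which needs only $\mathbf{p}\in K$ and convexity.

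The centre and the inner radius genuinely matter only for the converse reduction, where a two-sided oracle solves the one-sided problem and one must contract toward $\mathbf{p}$. That is not what is asserted here.

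What the direct route buys is a self-contained proof and the observation that the cubic loss in $\delta$ is unnecessary for this direction. The same remark shows that the paper's later one-sided-to-two-sided lemma for states could use one-sided precision $\varepsilon$ instead of $\varepsilon^3/(2^{12}\cdot 3\cdot d^{12})$.
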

    
Let us now introduce a notion closely related to that of membership. For a convex set $K$, deciding whether a point belongs to $K$ can be framed in terms of the existence of a linear witness that certifies non-membership. In this viewpoint, one seeks a hyperplane that supports $K$ and separates it from a given query point. This motivates the following definition, which formalises the notion of such a witness in terms of linear functionals bounded over $K$.

\begin{definition}[Witness of a set $K$] Let $K$ be a compact, convex and well-bounded centered set. Let $\mathbf{w}\in\mathbb{R}^{D}$ and $\gamma\in\mathbb{Q}$. Then $\mathbf{w}$ is a weak-witness for $K$ if and only if
\begin{align}
    \mathbf{w}^{T}\mathbf{x}\le \gamma,\quad\forall \mathbf{x}\in K. 
\end{align}
Equivalently, $\max_{\mathbf{x}\in K} \mathbf{w}^{T}\mathbf{x}\le \gamma$.
\end{definition}

For a given convex set \( K \), an alternative to the membership problem is to consider its dual: rather than determining whether a point lies in \( K \), we seek a separating hyperplane that certifies membership by distinguishing \( K \) from its exterior, i.e., a witness for $K$. The precise formulation follows below, building on 
the \emph{weak optimisation} problem (\class{WOPT}).
 
\begin{problem}[Weak optimisation (\class{WOPT})]\label{def:wopt}
    Let $K\subseteq \mathbb{R}^{D}$ be a compact, convex and well-bounded $p$-centered set. Then given $\mathbf{w}\in\mathbb{Q}^{D}$ with $\left\|\mathbf{w}\right\|_{2}\le1$ and $\gamma,\delta\in\mathbb{Q}$ such that $\delta>0$ decide: 
\begin{itemize}
  \item  If there exists $\mathbf{y}\in S(K,-\delta)$ with $\mathbf{w}^{T}\mathbf{y}\ge\gamma+\delta$, then output \class{YES}.
  \item If for all $\mathbf{y}\in S(K,\delta)$, $\mathbf{w}^{T}\mathbf{y}\le \gamma -\delta $, then output \class{NO}.
\end{itemize}
An instance of \class{WOPT} is denoted as $\class{WOPT}_{\delta}(K)$.
\end{problem}
Let us explain in words what \cref{def:wopt} is about: given a vector $\mathbf{w}$, \class{WOPT} is the decision problem that determines whether  $\mathbf{w}$ is a witness for the set $K$ or not. 
As shown in Refs.~\cite{liuComplexityConsistencyNrepresentability2007,grotschelGeometricAlgorithmsCombinatorial1988} and restated below, the weak optimisation problem can be solved 
efficiently
given oracle access to $\class{WMEM}$. This implies the existence of a polynomial-time reduction in $D$ from $\class{WMEM}$ to $\class{WOPT}$.
\begin{theorem}[Proposition 2.8 of Refs.~\cite{liuComplexityConsistencyNrepresentability2007,grotschelGeometricAlgorithmsCombinatorial1988}]\label{thm:wmemimpwopt}
	Let $K\subseteq \mathbb{R}^{D}$ be a compact, convex and well-bounded $p$-centered set with an associated radii $(R,r)$. Given an instance of $\class{WOPT}_{\delta}(K)$ with parameters $(\mathbf w, \gamma)$, there exists an algorithm that runs in time $\class{poly}(D,R,\lceil 1/\delta \rceil)$ and solves $\class{WOPT}_{\delta}(K)$ using an oracle for $\class{WMEM}_{\varepsilon}(K)$ with $\varepsilon = r^{3}\delta^{3}/[2^{13}3^{3}D^{5}R^{4}(R+r)]$.
\end{theorem}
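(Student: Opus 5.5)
The plan is to follow the Grötschel--Lovász--Schrijver route, which goes in two stages: first weak membership to weak separation, then weak separation to weak optimisation via the ellipsoid method. All accuracy parameters are tracked so that only a single membership tolerance $\varepsilon$, polynomial in $\delta, r, 1/D, 1/R$, is ever needed. I would not rederive the ellipsoid method. Instead I would state the two reductions, check that the hypotheses of \cref{def:wopt} (compactness, convexity, $K\subseteq S(\{\mathbf 0\},R)$, $S(\{\mathbf p\},r)\subseteq K$) are exactly what they require, and bound the running time.

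First, recast the \class{WOPT} question as a feasibility question. Consider $K_\gamma \coloneqq K\cap\{\mathbf y:\mathbf w^T\mathbf y\ge \gamma\}$. A \class{YES} instance gives a point of $S(K,-\delta)$ with $\mathbf w^T\mathbf y\ge\gamma+\delta$. Since $\|\mathbf w\|_2\le 1$, a whole ball of radius of order $\delta$ then lies inside $K_\gamma$. A \class{NO} instance forces $S(K,\delta)\cap\{\mathbf w^T\mathbf y> \gamma-\delta\}=\emptyset$. It therefore suffices to decide, approximately, whether the convex set $K_\gamma$ contains a ball of radius about $\delta/2$ or is empty even after $\delta$-fattening.

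Second, construct a weak separation oracle for $K$ from the \class{WMEM} oracle. Given a query $\mathbf y$, the oracle either asserts $\mathbf y\in S(K,\eta)$ or returns a unit vector $\mathbf c$ with $\mathbf c^T\mathbf x\le \mathbf c^T\mathbf y+\eta$ for all $\mathbf x\in S(K,-\eta)$. I would use the Yudin--Nemirovskii argument, which is where the centre $\mathbf p$ and the inner radius $r$ enter. If the membership oracle rejects $\mathbf y$, I look at the segment from $\mathbf p$ to $\mathbf y$ and binary-search along it to find a boundary point $\mathbf b$. I then run a (shallow-cut) ellipsoid method on the polar side to find an approximate supporting hyperplane at $\mathbf b$, querying membership on small perturbations of $\mathbf b$. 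Because $K$ contains $S(\{\mathbf p\},r)$, the cone from $\mathbf b$ towards this ball forces any approximate tangent direction to be accurate up to a factor of $R/r$. This step is where the powers $r^3\delta^3/(D^5R^4(R+r))$ come from.

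Third, with a separation oracle for $K$ in hand, a separation oracle for $K_\gamma$ is immediate: the extra halfspace is checked exactly. I then run the central-cut ellipsoid method starting from $S(\{\mathbf 0\},R)$. If a centre is certified to lie in $S(K,\eta)$ with $\mathbf w^T\mathbf y\ge\gamma-\delta$, the algorithm outputs \class{YES}. Otherwise, once the ellipsoid volume drops below that of a ball of radius $\delta/2$, it outputs \class{NO}. This takes $O(D^2\log(R/\delta))$ iterations, and the arithmetic stays polynomial by the usual rounding of ellipsoids. Correctness in both promise cases follows by comparing the surviving volume with the inner ball found in the first step.

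The main obstacle I expect is the second step. Turning approximate membership answers into an approximate separating hyperplane, while keeping the error within a polynomial factor, requires careful bookkeeping of how $\eta$ degrades with $R/r$ and $D$. This is the source of the specific $\varepsilon$ in the statement. I would first try to quote the corresponding lemma of Ref.~\cite{grotschelGeometricAlgorithmsCombinatorial1988} directly and only verify the parameter composition.
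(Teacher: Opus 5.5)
Your proposal follows the same route as the paper. The paper does not reprove this statement: it cites Proposition~2.8 and sketches exactly your two stages, first simulating a weak separation oracle from the weak membership oracle, then running the ellipsoid method from the outer ball $S(\{\mathbf 0\},R)$. In both cases the specific value $\varepsilon = r^{3}\delta^{3}/[2^{13}3^{3}D^{5}R^{4}(R+r)]$ is taken from the cited analysis rather than derived.

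One small fix in your third step: accept only when the certified centre satisfies the halfspace you actually cut with, $\mathbf w^T\mathbf y\ge\gamma$, not $\mathbf w^T\mathbf y\ge\gamma-\delta$. In a \class{NO} instance, points of $S(K,\eta)\subseteq S(K,\delta)$ may attain $\mathbf w^T\mathbf y=\gamma-\delta$ exactly, so the weaker threshold could wrongly accept.
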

Let us provide some intuition behind the theorem. The proof that having oracle access to \(\class{WMEM}_{\varepsilon}(K)\) solves \(\class{WOPT}_{\delta}(K)\) proceeds in two conceptual steps. First, one shows that the membership oracle can be used to simulate a weak separation oracle: by querying membership on carefully chosen test points, one either confirms that a point is “inside” (approximately), or else obtains evidence that the point lies outside \(K\), from which one can infer a separating hyperplane. Secondly, starting from a known outer ball containing \(K\), one alternates between querying the separation oracle and using the returned hyperplanes to cut away regions of space that cannot contain an optimal solution, thereby gradually shrinking the feasible region. After a polynomial number of iterations (in the dimension and precision parameters), this procedure yields a point that approximately maximises the objective over \(K\), thus solving $\class{WOPT}_{\delta}(K)$.

\subsection{Preliminaries on $\varepsilon$-nets}\label{sec:epsilonnet}
This section reviews key notions related to $\varepsilon$-nets, which are part of our subsequent complexity-theoretic arguments. The material presented here follows the treatment in Ref.~\cite{vershyninHighDimensionalProbabilityIntroduction2018}; readers seeking further details are referred to that work.
We begin with the concept of a covering net.

\begin{definition}[$\varepsilon$-covering net~\cite{vershyninHighDimensionalProbabilityIntroduction2018}]
Let $(T,\|\cdot\|)$ be a normed space with the metric induced by the norm, let $K\subseteq T$, and let $\varepsilon>0$. A subset $C\subseteq K$ is called an \emph{$\varepsilon$-covering net} of $K$ if for every $x\in K$ there exists $x_0\in C$ such that $\|x-x_0\|\le \varepsilon$. The \emph{$\varepsilon$-covering number} of $K$, denoted by $\mathscr{C}(K,\|\cdot\|,\varepsilon)$, is defined as the minimal cardinality among all $\varepsilon$-covering nets of $K$. An $\varepsilon$-covering net $C$ is said to be \emph{optimal} if $|C|=\mathscr{C}(K,\|\cdot\|,\varepsilon)$.
\end{definition}

Intuitively, an $\varepsilon$-covering net provides a finite approximation of the set $K$, in the sense that every point in $K$ lies within distance $\varepsilon$ of some net point. A dual notion is that of packing, which captures the maximum number of mutually distinguishable points that can be placed in $K$.

\begin{definition}[($\varepsilon$-packing net)~\cite{vershyninHighDimensionalProbabilityIntroduction2018}]\label{def_pack}
Let $(T,\|\cdot\|)$ be a normed space with the metric induced by the norm, let $K\subseteq T$, and let $\varepsilon>0$. A subset $P\subseteq K$ is called an \emph{$\varepsilon$-packing net} of $K$ if $\|x-y\|>\varepsilon$ for every pair of distinct points $x,y\in P$. The \emph{$\varepsilon$-packing number} of $K$, denoted by $\mathscr{P}(K,\|\cdot\|,\varepsilon)$, is defined as the maximal cardinality among all $\varepsilon$-packing nets of $K$. An $\varepsilon$-packing net $P$ is said to be \emph{optimal} if $|P|=\mathscr{P}(K,\|\cdot\|,\varepsilon)$.
\end{definition}

While a covering net ensures that all points are \emph{close} to a net point, a packing net ensures that all net points are \emph{far} from each other. This duality is formalised by the following fundamental lemma.

\begin{lemma}[Relation between covering and packing~\cite{vershyninHighDimensionalProbabilityIntroduction2018}]\label{lemma:cover_pack_relation}
Let $(T,\|\cdot\|)$ be a normed space, $K\subseteq T$, and let $\varepsilon>0$. Then
\begin{align}
\mathscr{P}(K,\|\cdot\|,\varepsilon)
\;\ge\;
\mathscr{C}(K,\|\cdot\|,\varepsilon)
\;\ge\;
\mathscr{P}(K,\|\cdot\|,2\varepsilon).
\label{eq:cover_pack_inequality}
\end{align}
\end{lemma}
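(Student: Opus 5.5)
The statement to prove is the covering–packing relation of \cref{lemma:cover_pack_relation}: for any $K\subseteq T$ and $\varepsilon>0$, $\mathscr{P}(K,\|\cdot\|,\varepsilon)\ge\mathscr{C}(K,\|\cdot\|,\varepsilon)\ge\mathscr{P}(K,\|\cdot\|,2\varepsilon)$. I will prove the two inequalities separately. Each uses only the triangle inequality and the maximality or covering property of the relevant net. If a packing number is infinite, the corresponding inequality holds trivially, so in each step I may assume the relevant nets are finite.

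\emph{First inequality,} $\mathscr{P}(K,\varepsilon)\ge\mathscr{C}(K,\varepsilon)$.
1. Take a maximal $\varepsilon$-packing $P\subseteq K$. Maximal means no point of $K$ can be added while keeping pairwise distances $>\varepsilon$. If the packing number is finite, an optimal packing is automatically maximal.
2. By maximality, every $x\in K\setminus P$ satisfies $\|x-y\|\le\varepsilon$ for some $y\in P$.
3. Points $x\in P$ are covered trivially by taking $y=x$.
4. Hence $P$ is an $\varepsilon$-covering net, so $\mathscr{C}\le|P|=\mathscr{P}(K,\varepsilon)$.

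\emph{Second inequality,} $\mathscr{C}(K,\varepsilon)\ge\mathscr{P}(K,2\varepsilon)$.
1. Let $C$ be any $\varepsilon$-covering net of $K$ and $P$ any $2\varepsilon$-packing of $K$.
2. Define a map $f:P\to C$ by sending each $x\in P$ to some $f(x)\in C$ with $\|x-f(x)\|\le\varepsilon$. Such a point exists because $C$ covers $K$.
3. Show $f$ is injective. If $f(x)=f(y)=c$ for distinct $x,y\in P$, then
\[
\|x-y\|\le\|x-c\|+\|c-y\|\le 2\varepsilon,
\]
which contradicts $\|x-y\|>2\varepsilon$.
4. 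Injectivity gives $|P|\le|C|$. Taking the supremum over $P$ and the infimum over $C$ yields the claim.

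Neither step presents a real obstacle. The only points needing care are these:
\begin{itemize}
\item The first inequality needs a maximal packing, which exists whenever the packing number is finite.
\item The strict inequality in the packing definition combined with the non-strict inequality in the covering definition is exactly what makes step 3 of the second argument go through.
\end{itemize}
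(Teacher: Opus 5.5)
Your proof is correct and is the standard argument from the cited Vershynin reference: a maximal $\varepsilon$-packing is an $\varepsilon$-covering net, and the triangle inequality makes the map from a $2\varepsilon$-packing into a covering net injective. The paper states the lemma without proof and simply defers to that reference. One small slip: for the second inequality the trivial case is an infinite covering number, not an infinite packing number. Your injection argument works for arbitrary $P$ and $C$ anyway, so nothing is lost.
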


In what follows, we will primarily bound covering numbers via packing numbers using~\cref{lemma:cover_pack_relation}, since packing bounds are often easier to obtain through volumetric arguments. Building upon these foundational definitions, we now review a series of standard bounds on covering and packing numbers, progressing from classical geometric objects to the quantum state spaces relevant to our analysis. For conciseness, we state these results without proof.

Let us specialise to the Euclidean space $\mathbb{R}^D$ equipped with the $\ell_2$-norm (see \cref{l2norm}). We define the ball of radius $r$ and its boundary, the sphere, as
\begin{align}
B_r(D) &\coloneqq \{x\in\mathbb{R}^D : \|x\|_2 \le r\}, \\
\partial B_r(D) &\coloneqq \{x\in\mathbb{R}^D : \|x\|_2 = r\}.
\end{align}

The following proposition provides volumetric bounds on covering and packing numbers.

\begin{proposition}[Proposition 4.2.10 in Ref.~\cite{vershyninHighDimensionalProbabilityIntroduction2018}]\label{prop:packingupplow}
Let $K\subset \mathbb{R}^D$ and let $\varepsilon>0$. Then
\begin{align}
\frac{\mathrm{Vol}(K)}{\mathrm{Vol}(B_{\varepsilon}(D))}
\;\le\;
\mathscr{C}(K,\|\cdot\|_2,\varepsilon)
\;\le\;
\mathscr{P}(K,\|\cdot\|_2,\varepsilon)
\;\le\;
\frac{\mathrm{Vol}(K + B_{\varepsilon/2}(D))}{\mathrm{Vol}(B_{\varepsilon/2}(D))}.
\end{align}
\end{proposition}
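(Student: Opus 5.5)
The proof rests on the volumetric comparison between $K$ and Euclidean balls. I take $\mathrm{Vol}$ to be Lebesgue measure on $\mathbb{R}^D$, $K$ measurable, and $K+B_{\varepsilon/2}(D)$ the Minkowski sum. I will prove the three inequalities separately, from left to right.

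\textbf{Lower bound.} Let $C=\{x_1,\dots,x_N\}$ be an optimal $\varepsilon$-covering net of $K$, so $N=\mathscr{C}(K,\|\cdot\|_2,\varepsilon)$. By the covering property every point of $K$ lies within distance $\varepsilon$ of some $x_i$, so
\begin{align}
K\subseteq\bigcup_{i=1}^N \big(x_i+B_\varepsilon(D)\big).
\end{align}
Monotonicity, subadditivity and translation invariance of Lebesgue measure then give $\mathrm{Vol}(K)\le N\,\mathrm{Vol}(B_\varepsilon(D))$. Rearranging yields the first inequality. If $K$ admits no finite cover, then $N=\infty$ and the bound is trivial.

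\textbf{Middle inequality.} This is the left half of \cref{lemma:cover_pack_relation}, which we may assume. For completeness, here is the argument. Take a maximal $\varepsilon$-packing net $P$ of $K$. It is also an $\varepsilon$-cover: otherwise some point of $K$ would be more than $\varepsilon$ away from every point of $P$, and adding it to $P$ would contradict maximality. The same reasoning applies to any maximal, possibly infinite, packing, so $\mathscr{C}\le\mathscr{P}$ even when $\mathscr{P}=\infty$.

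\textbf{Upper bound.} Let $P=\{y_1,\dots,y_M\}$ be an $\varepsilon$-packing net, so $\|y_i-y_j\|_2>\varepsilon$ for $i\neq j$.

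1. The balls $y_i+B_{\varepsilon/2}(D)$ are pairwise disjoint. If $z$ lay in two of them, the triangle inequality would give $\|y_i-y_j\|_2\le\varepsilon$, a contradiction.

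2. Each of these balls is contained in $K+B_{\varepsilon/2}(D)$, because every $y_i$ lies in $K$.

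3. Additivity of measure over disjoint sets then gives $M\,\mathrm{Vol}(B_{\varepsilon/2}(D))\le \mathrm{Vol}(K+B_{\varepsilon/2}(D))$.

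Since this holds for every packing, it bounds the supremum $\mathscr{P}$ as well. When the right-hand side is finite, this also shows that a maximal packing exists.

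\textbf{Expected obstacle.} The only real subtlety is measurability. Neither $K$ nor the Minkowski sum needs to be measurable in general. If $K$ is compact or convex, as in every application in this paper, both sets are Borel, since the Minkowski sum is then compact or convex. Otherwise one replaces $\mathrm{Vol}$ by outer and inner measure respectively, and the same arguments go through.
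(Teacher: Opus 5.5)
The paper states this proposition without proof and cites Vershynin. Your argument is correct and is exactly the standard volumetric proof from that reference: a closed $\varepsilon$-ball cover for the lower bound, a maximal packing being a covering for the middle inequality, and disjointness of the closed $\varepsilon/2$-balls (where the strict inequality in the packing definition is used) for the upper bound. One harmless slip in your closing remark: convex sets are Lebesgue measurable (their boundary is null) but need not be Borel, and $K+B_{\varepsilon/2}(D)$ is always Lebesgue measurable, so Lebesgue measurability is the right claim and all the argument needs.
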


Since $\partial B_1(D) \subseteq B_1(D)$, any $\varepsilon$-packing of the sphere is also an $\varepsilon$-packing of the ball. Consequently, the packing number of the ball upper bounds that of its boundary as
\begin{align}\label{eq:vol}
\mathscr{P}(\partial B_1(D),\|\cdot\|_2,\varepsilon)
\;\le\;
\mathscr{P}(B_1(D),\|\cdot\|_2,\varepsilon)
\;\le\;
\frac{\mathrm{Vol}(B_1(D)) + \mathrm{Vol}(B_{\varepsilon/2}(D))}{\mathrm{Vol}(B_{\varepsilon/2}(D))}.
\end{align}
We now turn to packing numbers for the set of pure quantum states, which will allow us to discretise continuous families of quantum states.

\begin{definition}[Pure-state manifold~\cite{meleLearningQuantumStates2025}]\label{def:pure_states}
Let $D\in\mathbb{N}$ and $\mathcal{H}=\mathbb{C}^D$ equipped with the norm $\|\cdot\|_2$. Define the set of pure states as
\begin{align}
\mathcal{V}_{D}
\coloneqq
\{\ket{\psi}\in\mathcal{H} : \langle\psi|\psi\rangle = 1\}.
\end{align}
\end{definition}

Note that $\mathcal{V}_{D}$ can be identified with the unit sphere in $\mathbb{R}^{2D}$ via the real and imaginary parts of the complex amplitudes. Explicitly, the map $\ket{\psi}\mapsto (\Re\psi,\Im\psi)$ is an isometry between
$(\mathbb{C}^D,\|\cdot\|_2)$ and $(\mathbb{R}^{2D},\|\cdot\|_2)$. This observation allows us to transfer volumetric bounds from the Euclidean setting to the quantum state space.

\begin{lemma}[Packing bound for pure states]\label{lem:packing}
Let $D\in\mathbb{N}$, $\mathcal{H}=\mathbb{C}^D$, and $\varepsilon\in[0,1)$. Then
\begin{align}
\mathscr{P}(\mathcal{V}_{D},\|\cdot\|_2,\varepsilon)
\;\le\;
\left(\frac{3}{\varepsilon}\right)^{2D}.
\end{align}
\end{lemma}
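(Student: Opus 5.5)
The plan is to identify $\mathcal{V}_D$ with the unit sphere $\partial B_1(2D)\subset\mathbb{R}^{2D}$ and reduce to the Euclidean volumetric bound. The map $\ket{\psi}\mapsto(\Re\psi,\Im\psi)$ is an isometry from $(\mathbb{C}^D,\|\cdot\|_2)$ to $(\mathbb{R}^{2D},\|\cdot\|_2)$. It therefore sends $\varepsilon$-packings of $\mathcal{V}_D$ bijectively onto $\varepsilon$-packings of $\partial B_1(2D)$, which gives
\begin{align}
\mathscr{P}(\mathcal{V}_D,\|\cdot\|_2,\varepsilon)=\mathscr{P}(\partial B_1(2D),\|\cdot\|_2,\varepsilon).
\end{align}

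Next I will invoke \cref{eq:vol}, which follows from \cref{prop:packingupplow}, in dimension $2D$. The one point to get right is that \cref{prop:packingupplow} involves the Minkowski sum $B_1(2D)+B_{\varepsilon/2}(2D)=B_{1+\varepsilon/2}(2D)$, not a sum of volumes. So I will use
\begin{align}
\mathscr{P}(\partial B_1(2D),\|\cdot\|_2,\varepsilon)\le\frac{\mathrm{Vol}(B_{1+\varepsilon/2}(2D))}{\mathrm{Vol}(B_{\varepsilon/2}(2D))}=\left(\frac{1+\varepsilon/2}{\varepsilon/2}\right)^{2D}=\left(1+\frac{2}{\varepsilon}\right)^{2D}.
\end{align}
The second equality uses the scaling $\mathrm{Vol}(B_r(2D))=r^{2D}\,\mathrm{Vol}(B_1(2D))$.

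It remains to note that for $\varepsilon\in(0,1)$ we have $1+2/\varepsilon\le 1/\varepsilon+2/\varepsilon=3/\varepsilon$, which yields the claimed bound $(3/\varepsilon)^{2D}$. The case $\varepsilon=0$ is degenerate: the right-hand side is $+\infty$ under the convention $3/0=\infty$, so the inequality holds trivially.

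There is no real obstacle. The only points needing care are reading the Minkowski sum in \cref{eq:vol} correctly and using the fact that the isometry preserves pairwise distances, so that packings correspond exactly.
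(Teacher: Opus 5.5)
Your proof is correct and follows the same route as the paper: identify $\mathcal{V}_D$ isometrically with $\partial B_1(2D)$, then apply the volumetric packing bound of \cref{prop:packingupplow} with the scaling $\mathrm{Vol}(B_r(2D))=r^{2D}\,\mathrm{Vol}(B_1(2D))$. Your numerator $\mathrm{Vol}(B_{1+\varepsilon/2}(2D))$ is the correct reading of the proposition (the paper's \cref{eq:vol} writes a sum of volumes, which the proposition does not give), and your final step $1+2/\varepsilon\le 3/\varepsilon$ for $\varepsilon\le 1$ closes the argument.
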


\begin{proof}
The set $\mathcal{V}_{D}$ can be identified with the boundary $\partial B_1(2D)$ of the unit ball in $\mathbb{R}^{2D}$. Applying the volumetric upper bound in Proposition~\ref{prop:packingupplow} to $B_1(2D)$ and using~\eqref{eq:vol} yields
\begin{align}
\mathscr{P}(\partial B_1(2D),\|\cdot\|_2,\varepsilon)
\;\le\;
\frac{\mathrm{Vol}(B_1(2D)) + \mathrm{Vol}(B_{\varepsilon/2}(2D))}{\mathrm{Vol}(B_{\varepsilon/2}(2D))}
\;\le\;
\left(\frac{3}{\varepsilon}\right)^{2D},
\end{align}
where the last inequality follows from the standard scaling $\mathrm{Vol}(B_r(2D)) = r^{2D}\mathrm{Vol}(B_1(2D))$. Identifying $\mathcal{V}_{D}$ with $\partial B_1(2D)$ completes the proof.
\end{proof}

\subsection{Magic-state resource theory}\label{sec:magic}
In this section, we review the main notions of magic-state resource theory, which form the core of our work.

Magic-state resource theory provides a rigorous and operational framework for quantifying the non-stabilizer features of quantum states and channels that are necessary for universal quantum computation. In this formalism, one identifies a distinguished set of operations and states that are considered \emph{free} because they can be implemented efficiently and simulated classically. Specifically, the stabilizer formalism—comprising Clifford unitaries, preparations of stabilizer states, Pauli measurements, classical feed-forward, and discarding of subsystems—defines a subclass of quantum computations that admit efficient classical simulation via the Gottesman–Knill theorem~\cite{gottesmanHeisenbergRepresentationQuantum1998}.

Within this resource theory, the \emph{free states} are those confined to the stabilizer polytope $\hat{\mathcal{S}}_n$, i.e.\ the convex hull of all stabilizer states, which can be prepared using only free operations and therefore do not contribute any computational advantage beyond classical simulation.  Free operations are taken to be stabilizer-preserving maps that, by definition, cannot generate non-stabilizer features from free states. A state lying outside this polytope is said to possess \emph{magic}~\cite{bravyiUniversalQuantumComputation2005}—a resource quantified by magic monotones that capture the extent to which the state enables computational tasks beyond the stabilizer regime.

In what follows, we define $\mathcal{S}_n$ the set of pure stabilizer states and $\hat{\mathcal{S}}_n$ its convex hull, i.e., the stabilizer polytope. We denote by $\mathbb{P}_n$ the Pauli group on $n$ qubits and $P\in\mathbb{P}_n$ Pauli strings.  
Let us formally define magic monotones below. 

\begin{definition}[Magic monotones]\label{def:stabilizermonotone} A stabilizer monotone $\mathcal{M}$ is a real-valued function for all $n$ qubit systems (or collections thereof) such that (i) $\mathcal{M}(\rho)=0$ if and only if $\rho\in\hat{\mathcal{S}}_n$; (ii) $\mathcal{M}$ is non-increasing under free operations $\mathcal{M}(\mathcal{E}(\rho))\le\mathcal{M}(\rho)$ for any free operation $\mathcal{E}$. A function $\mathcal{M}$ is pure-state stabilizer monotone if condition (i) is satisfied for pure states and (ii) holds for any pair $(\ket{\psi},\mathcal{E})$, obeying $\mathcal{E}(\ketbra{\psi})=\ketbra{\phi}$. 
\end{definition}

The arguably most important  pure-state magic monotone is the stabilizer entropy, defined below.

\begin{definition}[Stabilizer R\'enyi entropy~\cite{leoneStabilizerRenyiEntropy2022}] Given a pure state $\psi$, the $\alpha$-stabilizer entropy is defined as
\begin{align}\label{eq:stabentropydef}
    M_{\alpha}(\ket{\psi})=\frac{1}{1-\alpha}\log\frac{1}{d}\sum_{P}\tr^{2\alpha}(P\psi)\,.
\end{align}
$M_{\alpha}$ obeys the following properties.
\begin{enumerate}[label=(\roman*)]
    \item It is a pure-state magic monotone~\cite{leoneStabilizerEntropiesAre2024} for any $\alpha\ge 2$.
    \item Its computation lies in $\class{TIME}(2^{O(n)})$ and thus in $\class{P}$.
    \item It provides a family of efficiently computable witnesses~\cite{haugEfficientWitnessingTesting2025}: let $\rho$ be a mixed state and $M_{\alpha}(\rho)$ computed through \cref{eq:stabentropydef}, then
    \begin{align}
        M_{\alpha}(\rho)-\frac{1-2\alpha}{1-\alpha}S_{2}(\rho)\,,
    \end{align}
    where $S_{2}(\rho)\coloneqq-\log\tr\rho^2$, is a magic witness for any $\alpha\ge \frac{1}{2}$.
    \item It can be extended to mixed states through the ``convex roof'' or convex hull construction~\cite{leoneStabilizerEntropiesAre2024}
    \begin{align}
        \tilde{M}_{\alpha}(\rho)=\frac{1}{1-\alpha}\log \sup_{(q_j,\phi_j)}\sum_{j}q_{j}\left(\frac{1}{d}\sum_{P\in\mathbb{P}_n}\tr^{2\alpha}(P\psi)\right)
    \end{align}
which can be computed in time $\class{TIME}(2^{O(nr^2)})$ with $r=\operatorname{rank}(\rho)$. 
\end{enumerate}
\end{definition}

The key fact that makes the stabilizer entropy relevant in the context of this work is that it can be computed in polynomial-time in the Hilbert space dimension. Indeed, given a quantum state $\rho=\ketbra{\psi}{\psi}$ corresponding to a pure state, decision problems associated with the computation of $M_{\alpha}(\rho)$ lie in $\class{QP}^1=\class{P}$ in the input parameter $d=2^n$. In sharp contrast, the main result of this paper demonstrates that the computation of any magic resource monotone working on mixed states is fundamentally super-polynomial; in particular (according to \cref{cor1main}), it lies in $\class{QP}^{k}$ for $k\ge 2$ (see \cref{def:quasipolyclasses}). Below, we introduce a magic monotone whose computation requires $\class{TIME}(2^{O(\log^2d)})$ thus lying in $\class{QP}^2$, thus matching the optimal computational time for magic monotones. 

\begin{definition}[Robustness of magic~\cite{howardApplicationResourceTheory2017}]\label{def:robustnessofmagic} Given a $n$-qubit density matrix $\rho$. The robustness of magic is defined 
\begin{align}
    \mathcal{R}(\rho)\coloneqq\min_{x}\left\{\sum_i|x_i|\,:\, \rho=\sum_ix_i\sigma_i,\,\,\sigma_i\in\mathcal{S}_{n}\right\}\,.
\end{align}
$\mathcal{R}$ is a magic monotone.
\end{definition}

Similarly to magic monotones, one can define witnesses within the magic-state resource theory.
\begin{definition}[Magic witness]
Let $\gamma\in\mathbb{Q}$. A Hermitian operator $W$ is called a \emph{magic witness} if $\tr(W\sigma) \le \gamma$ for every stabilizer state $\sigma \in \hat{\mathcal{S}}_n$.
\end{definition}

\medskip

As anticipated in the main text, a clarifying remark is in order. For general mixed states, the robustness of magic—despite its optimality as a resource monotone—incurs a computational cost scaling as $2^{\Theta(n^2)}$, rendering it intractable for systems beyond $n\simeq 5$, as also noted in Ref.~\cite{howardApplicationResourceTheory2017}. Remarkably, however, the extended stabiliser entropy $\tilde{M}_{\alpha}$ admits a polynomial-time computation when the rank satisfies $r = O(1)$. This feature is crucial, as it enables the systematic study of magic-state resource theory beyond pure states, extending its applicability to low-rank density matrices. Low rank states are ubiquitous in many-body physics. A prominent example is provided by \emph{matrix product states} (MPS) 
with constant bond dimension, which naturally arise as ground states of one-dimensional gapped many-body Hamiltonians. For these systems, the magic content of reduced density matrices can be faithfully quantified using the extended stabilizer entropy, with a computational cost that is polynomial in the Hilbert space dimension. As emphasised in the main text, this represents a substantial practical advantage over the super-exponential scaling $2^{\Theta(n^2)}$ associated with robustness-based measures. An intriguing open problem is whether the computational cost of the extended stabilizer entropy for reduced density matrices of MPS states can be further reduced from $2^{\Theta(n)}$ to $\class{poly}(n)$, as is the case for the stabilizer entropy of the full MPS state~\cite{olivieroMagicstateResourceTheory2022,Haug_2023}.

Another magic monotone that plays a crucial role in magic-state resource theory due to its intuitive operational meaning is the {\em stabilizer fidelity}~\cite{bravyiSimulationQuantumCircuits2019}, i.e., the fidelity of a given pure state with the closest state in the stabilizer polytope.

\begin{definition}[Stabilizer fidelity~\cite{bravyiSimulationQuantumCircuits2019}] Let $\rho$ be a (possibly mixed) state. Its stabilizer fidelity $F_{\text{stab}}$ is defined as
\begin{align}
    F_{\text{stab}}(\rho)=\max_{\sigma\in\hat{\mathcal{S}}_n}\|\sqrt{\sigma}\rho\sqrt{\sigma}\|_1.
\end{align}
\end{definition}
For pure states, the stabilizer fidelity is closely related to the stabilizer entropy. The following lemma will be useful in the remainder of the manuscript.
\begin{lemma}[Relation between stabilizer entropy and stabilizer fidelity~\cite{arunachalamPolynomialTimeTolerantTesting2025,baoTolerantTestingStabilizer2025,bittelOperationalInterpretationStabilizer2025}]\label{stabentstabfid} Let $\psi$ be a pure quantum state. Let $P_6(\psi)\coloneqq2^{-2M_{3}(\psi)}$ with $M_3$ the $(\alpha=3)$ stabilizer entropy. Then it holds that
\begin{equation}\label{eq:stabentropystabfidelityrelation}
(P_{6}(\psi))^{C} \le F_{\text{stab}}(\psi) \le (P_{6}(\psi))^{1/6}
\end{equation}
for some constant $C>0$.

\end{lemma}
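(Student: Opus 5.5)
The two inequalities differ greatly in difficulty, so I would treat them separately. Throughout, $\psi$ is pure. For pure states the maximum of the linear functional $\sigma\mapsto\tr(\sigma\psi)$ over the polytope $\hat{\mathcal{S}}_n$ is attained at a vertex. Hence $F_{\text{stab}}(\psi)=\max_{\ket{S}\in\mathcal{S}_n}|\langle S|\psi\rangle|^2$. By definition \cref{eq:stabentropydef} with $\alpha=3$, we also have $P_6(\psi)=\frac1d\sum_{P}\tr^{6}(P\psi)$.

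\emph{Upper bound.} Fix the optimal stabilizer state $\ket{S}$ and let $G$ be its stabilizer group. This is a set of $d$ commuting signed Pauli strings with $\ketbra{S}=\frac1d\sum_{g\in G}g$. Then
\begin{align}
F_{\text{stab}}(\psi)=\frac1d\sum_{g\in G}\tr(g\psi)\le\Big(\frac1d\sum_{g\in G}|\tr(g\psi)|^6\Big)^{1/6}\Big(\frac1d\sum_{g\in G}1\Big)^{5/6},
\end{align}
by Hölder's inequality with exponents $(6,6/5)$ applied to the uniform measure weighted by $1/d$. Since $|G|=d$, the second factor equals $1$. The remaining sum runs over a subset of all Pauli strings and has nonnegative terms, so it is at most $P_6(\psi)$. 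This gives $F_{\text{stab}}\le P_6^{1/6}$. This step is routine.

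\emph{Lower bound.} Here one needs an inverse theorem: a large sixth moment of the Pauli spectrum should force correlation with a stabilizer state. I would proceed in three steps.

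First, rewrite $P_6(\psi)=\mathbb{E}_{P\sim q}\,\tr^4(P\psi)$, where $q(P)=\tr^2(P\psi)/d$ is the characteristic (Bell-sampling) distribution. By Markov/averaging, a lower bound $P_6\ge\eta$ yields a set $A$ of Pauli strings with $q$-mass $\ge\eta/2$ on which $|\tr(P\psi)|\ge(\eta/2)^{1/4}$.

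Second, use the additive structure of $A$. The Weyl relations imply that $A$ has many additive quadruples, equivalently large additive energy in $\mathbb{F}_2^{2n}$, with density polynomial in $\eta$. This step is the analogue of a large Gowers $U^3$-type norm.

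Third, invoke the polynomial Freiman--Ruzsa theorem over $\mathbb{F}_2$ (Gowers--Green--Manners--Tao). It gives a subspace $V$ of size $\lesssim|A|$ that captures a $\mathrm{poly}(\eta)$ fraction of $A$. Then pass to a large isotropic subspace, a stabilizer subgroup, inside it that still captures $\mathrm{poly}(\eta)$ of the $q$-mass. Projecting $\psi$ onto the joint eigenspaces of this commuting group and taking the best sign choice produces a stabilizer state $\ket{S}$ with $|\langle S|\psi\rangle|^2\ge \mathrm{poly}(\eta)$, which is $F_{\text{stab}}\ge P_6^{C}$. This is the route taken in Refs.~\cite{arunachalamPolynomialTimeTolerantTesting2025,baoTolerantTestingStabilizer2025}.

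\emph{Main obstacle.} The difficulty lies entirely in this third step. Weaker inverse theorems (Green--Tao / Samorodnitsky for $U^3$ over $\mathbb{F}_2^n$) give only quasi-polynomial dependence, and a polynomial exponent $C$ requires PFR. The passage from an approximate subgroup to an isotropic, i.e.\ commuting, subgroup also needs care: the symplectic form must be controlled so that a large Lagrangian-compatible piece survives. In writing this up I would state these cited results as black boxes and track only the polynomial losses, since reproducing the full argument is beyond the scope of this preliminary lemma.
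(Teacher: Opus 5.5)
The paper does not prove this lemma; it imports it from the cited works. Your treatment (a self-contained upper bound plus a cited inverse theorem for the lower bound) therefore matches the paper's level of rigour, and your Hölder argument for $F_{\text{stab}}\le P_6^{1/6}$ is correct.

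There is, however, a genuine slip at the end of your lower-bound sketch. The chain Balog--Szemer\'edi--Gowers, PFR, isotropic extraction only yields $F_{\text{stab}}(\psi)\ge c\,P_6(\psi)^{C_0}$ with a small absolute constant $c<1$. This does \emph{not} imply $F_{\text{stab}}\ge P_6^{C}$ for any $C$: at $P_6=1$ the claim requires $F_{\text{stab}}=1$, while $c\,P_6^{C_0}=c$. You need to treat the high-fidelity regime separately. One way is the robustness theorem of Gross--Nezami--Walter for their Bell-difference-sampling test, whose acceptance probability is $(1+P_6)/2$; it gives $1-F_{\text{stab}}=O(1-P_6)$. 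Then $\max\{c\,P_6^{C_0},\,1-O(1-P_6)\}\ge P_6^{C}$ on $(0,1]$ once $C$ is large enough.

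Two smaller points.

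First, the additive structure comes from the symplectic Fourier self-duality of $g(P)=\tr^2(P\psi)$ for pure $\psi$. Writing $[\cdot,\cdot]$ for the symplectic form on Pauli labels, this reads $\sum_{a}g(a)(-1)^{[a,x]}=d\,g(x)$, and it gives $\sum_{a,b}g(a)g(b)g(a+b)=d^2P_6$. This identity controls Schur triples, and hence additive energy, of the level set $\{g\ge\eta/4\}$: triples with an entry below threshold carry at most $\tfrac34\eta d^2$. It does not control your Markov set $\{g\ge\sqrt{\eta/2}\}$. There the same estimate only bounds the discarded part by $3\sqrt{\eta/2}\,d^2$, which exceeds $\eta d^2$, so nothing survives.

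Second, the isotropic step you flag has a clean route. Enlarge the PFR coset to a subspace $V=I\oplus W$, with $I$ the radical and $W$ symplectic of dimension $2m$.
\begin{itemize}
\item Purity gives $\sum_{P\in V}\tr^2(P\psi)\le 2^{\dim I+m}$.
\item Combined with $|V|\le d/\mathrm{poly}(\eta)$ and $\frac1d\sum_{P\in V}\tr^2(P\psi)\ge\mathrm{poly}(\eta)$, this forces $2^m\le\mathrm{poly}(1/\eta)$.
\item Some coset $I+w$ then carries $\mathrm{poly}(\eta)$ of the mass, and $I+\langle w\rangle$ is isotropic because $w$ commutes with the radical.
\item For any Lagrangian $L\supseteq I+\langle w\rangle$, choosing the best sign gives $F_{\text{stab}}\ge\frac1d\sum_{P\in L}\tr^2(P\psi)\ge\mathrm{poly}(\eta)$.
\end{itemize}
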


Resource monotones are useful in the central tasks of quantum resource theory, which is resource distillation~\cite{bravyiUniversalQuantumComputation2005,howardApplicationResourceTheory2017,beverlandLowerBoundsNonClifford2020}.
Indeed, 
magic-state distillation~\cite{bravyiUniversalQuantumComputation2005} is a cornerstone of fault-tolerant quantum computing within the magic resource theory framework. While Clifford gates and stabilizer states---the free resources in this theory---can be made fault-tolerant, they are insufficient for universal quantum computation. Universality requires non-Clifford gates, such as the $T$-gate or the Toffoli gate, which are enabled by high-fidelity resource states like the single-qubit $\ket{T}$ or the multi-qubit $\ket{CCZ}$ state vector. Since these magic states cannot be prepared fault-tolerantly directly, distillation protocols are essential: they consume multiple noisy copies to produce fewer copies of a purified magic state.

A primary historical bottleneck has been the resource overhead, defined as the number of noisy input states required per clean output. Early protocols~\cite{bravyiUniversalQuantumComputation2005} exhibited overhead scaling as $O(\log^{\gamma}(1/\varepsilon))$ for a target error $\varepsilon$, with $\gamma > 1$. Recent theoretical breakthroughs have dramatically improved this scaling. Notably, protocols achieving constant overhead~\cite{willsConstantoverheadMagicState2025} ($\gamma = 0$) have been developed by constructing asymptotically good quantum codes with transversal non-Clifford gates and efficient decoders, making them suitable for distilling magic state vectors like $\ket{T}$ and $\ket{CCZ}$. This progress is being matched by experimental advances, including the first logical-level demonstrations~\cite{salesrodriguezExperimentalDemonstrationLogical2025}. Beyond engineering efficient protocols, the formal framework of magic resource theory~\cite{veitchResourceTheoryStabilizer2014}---extended to quantum channels~\cite{seddonQuantifyingMagicMultiqubit2019}---provides the necessary tools to quantify magic and establish fundamental lower bounds on the resource overhead of any distillation protocol, delineating what is possible from what is provably hard.

Besides the task of resource distillation through stabilizer protocols, the magic-state resource theory is also important because it aims to quantify the resources a quantum algorithm requires to achieve some form of advantage over classical computation. In particular, stabilizer states do not yield any computational advantage, as their outcome distributions—when measured in the computational basis—can be efficiently simulated classically in polynomial time~\cite{aaronsonImprovedSimulationStabilizer2004}. From this fact, within the resource theory of magic, one identifies additional classes of operations that strictly extend the stabilizer operations, known as \textit{completely stabilizer operations}. These are quantum channels that, when applied to stabilizer states, return only stabilizer states. We define them precisely below.

\begin{definition}[Completely stabilizer-preserving channel~\cite{seddonQuantifyingMagicMultiqubit2019}]
Let $\mathcal{E}$ be a quantum channel acting on an $n$-qubit input state. We say that $\mathcal{E}$ is completely stabilizer-preserving iff $[\mathcal{E}\otimes I](\sigma)\in\hat{\mathcal{S}}_{n+m}$ for all $\sigma\in\hat{\mathcal{S}}_{n+m}\,\,\forall m$.
\end{definition}

These channels play a pivotal role because, as they cannot create non-stabilizer states from scratch, their action can be simulated classically in full, and no magic state can be distilled from their output. Moreover, as it has been shown in Ref.~\cite{heimendahlAxiomaticOperationalApproaches2022}, the set of completely stabilizer operations is strictly larger than the one obtained by arbitrary combinations of the basic stabilizer protocols introduced above. 

Ref.~\cite{seddonQuantifyingMagicMultiqubit2019} provides a characterisation of completely stabilizer-preserving channels in terms of the Choi state $\rho_{\mathcal{E}}$ uniquely associated with the channel, which we recall below for completeness.

\begin{lemma}[Characterisation of completely stabilizer preserving channels~\cite{seddonQuantifyingMagicMultiqubit2019}]\label{lem:characterisationstabilizerpreservingchannel} Given a channel $\mathcal{E}$, it is stabilizer preserving if and only if the Choi state $\rho_{\mathcal{E}}\in\hat{\mathcal{S}}_{2n}$. 
\end{lemma}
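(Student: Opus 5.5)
We will prove the two implications separately. Throughout we use the Choi state $\rho_{\mathcal{E}}=(\mathcal{E}\otimes I)(\Phi)$, where $\Phi=\ketbra{\Phi}$ with $\ket{\Phi}=d^{-1/2}\sum_{i}\ket{i}\ket{i}$ is the maximally entangled state on $2n$ qubits. The statement should be read with ``completely stabilizer preserving'' as in the preceding definition.

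\emph{($\Rightarrow$)} This direction is immediate. $\ket{\Phi}$ is a product of $n$ Bell pairs, so it is a pure stabilizer state, stabilized by $X_jX_{j+n}$ and $Z_jZ_{j+n}$. Taking $m=n$ in the definition of a completely stabilizer-preserving channel gives $\rho_{\mathcal{E}}=(\mathcal{E}\otimes I_n)(\Phi)\in\hat{\mathcal{S}}_{2n}$.

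\emph{($\Leftarrow$)} Assume $\rho_{\mathcal{E}}\in\hat{\mathcal{S}}_{2n}$ and fix $m$.

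\begin{enumerate}
\item \textbf{Reduce to pure inputs.} The map $\mathcal{E}\otimes I_m$ is linear and $\hat{\mathcal{S}}_{n+m}$ is convex. It therefore suffices to show $(\mathcal{E}\otimes I_m)(\sigma)\in\hat{\mathcal{S}}_{n+m}$ for every pure $\sigma\in\mathcal{S}_{n+m}$.

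\item \textbf{Gate-teleportation identity.} Label the Choi registers $B$ (output of $\mathcal{E}$) and $R$ (reference). Label the input registers $A$ ($n$ qubits) and $C$ ($m$ qubits). The identity we need is
\begin{align}
\tr_{RA}\!\left[(I_B\otimes \Phi_{RA}\otimes I_C)\,(\rho_{\mathcal{E}}^{BR}\otimes\sigma^{AC})\right]=\frac{1}{d^{2}}\,(\mathcal{E}\otimes I_m)(\sigma)_{BC}.
\end{align}
It follows from linearity together with the transpose trick $\bra{\Phi}_{RA}(\ket{\Phi}_{BR}\otimes\ket{\psi}_{A})=d^{-1}\ket{\psi}_B$. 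Hence $(\mathcal{E}\otimes I_m)(\sigma)$ is exactly the normalized post-measurement state obtained by projecting $\rho_{\mathcal{E}}\otimes\sigma$ onto $\Phi_{RA}$ and tracing out $RA$. The postselection probability is the constant $1/d^2$, independent of $\sigma$ and of $\mathcal{E}$.

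\item \textbf{Closure under the projection.} Write $\rho_{\mathcal{E}}=\sum_j p_j\tau_j$ with each $\tau_j\in\mathcal{S}_{2n}$ pure.
\begin{itemize}
\item Each $\tau_j\otimes\sigma$ is a pure stabilizer state on $3n+m$ qubits.
\item $\Phi_{RA}$ is the product of the commuting Pauli projectors $(I+X_kX_{k'})/2$ and $(I+Z_kZ_{k'})/2$. By the standard stabilizer-formalism fact that a Pauli projector maps a pure stabilizer state either to zero or to an unnormalized pure stabilizer state, $(I\otimes\Phi_{RA}\otimes I)(\tau_j\otimes\sigma)(I\otimes\Phi_{RA}\otimes I)=q_j\,\omega_j$ with $q_j\ge0$ and $\omega_j$ a pure stabilizer state.
\item Partial traces of pure stabilizer states are uniform mixtures over stabilizer codes, hence lie in the stabilizer polytope. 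So $\tr_{RA}\omega_j\in\hat{\mathcal{S}}_{n+m}$.
\end{itemize}
Combining these with the identity of step 2 gives
\begin{align}
(\mathcal{E}\otimes I_m)(\sigma)=d^{2}\sum_j p_j q_j\,\tr_{RA}\omega_j .
\end{align}
Taking traces shows $d^{2}\sum_j p_jq_j=1$, so the right-hand side is a convex combination of elements of $\hat{\mathcal{S}}_{n+m}$, as required.

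\item \textbf{Main obstacle.} The delicate point is that one cannot simply perform a full Bell measurement and then correct. Correcting the outcome would require commuting the Pauli correction through $\mathcal{E}$, which is not a Clifford map. Postselection avoids this, and the argument survives because of two facts that must be stated carefully:
\begin{itemize}
\item the success probability is exactly $1/d^2$ for every input, so the weights $d^2p_jq_j$ are nonnegative and sum to one;
\item Pauli projections and partial traces each preserve stabilizer-ness, at the level of pure states and of the polytope respectively.
\end{itemize}
I would verify the teleportation identity explicitly in the computational basis, and cite the stabilizer update rules for the closure facts.
\end{enumerate}
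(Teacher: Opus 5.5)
Your proof is correct, but it takes a different route from the paper. The paper only cites this lemma. The argument it gives for the $t$-doped analogue (\cref{thm:choitdop}), which it says follows Seddon--Campbell, has two steps. First, it reduces an arbitrary ancilla size $m$ to $m=n$ using the normal form of bipartite stabilizer states: Bell pairs across the cut plus local stabilizer states, up to a Clifford on the ancilla. Second, it argues by contradiction with separating hyperplanes. A witness $W$ detecting $(\mathcal{E}\otimes I)(\phi)$ is lifted to the witness $W\otimes\phi^{T}$ on $4n$ qubits, which would then detect $\rho_{\mathcal{E}\otimes I}=\rho_{\mathcal{E}}\otimes\phi^{+}_{2n}$, contradicting $\rho_{\mathcal{E}}\in\hat{\mathcal{S}}_{2n}$.

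Your postselected-teleportation identity replaces both steps. It handles every $m$ uniformly and needs neither the bipartite normal form nor any duality argument. It also produces an explicit convex decomposition of $(\mathcal{E}\otimes I_m)(\sigma)$ into stabilizer states, with trace preservation of $\mathcal{E}$ fixing the normalisation $d^{2}\sum_j p_jq_j=1$. The key closure property is the same in both proofs: projecting a pure stabilizer state onto a stabilizer state of a subsystem gives an unnormalised stabilizer state. You use it in primal form; the paper uses it in dual form, to certify that $W\otimes\phi^{T}$ is a witness.

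A small simplification: since $\Phi_{RA}$ has rank one, $\omega_j=\Phi_{RA}\otimes\chi_j$ with $\chi_j$ a pure stabilizer state on $BC$. So your appeal to partial traces of stabilizer states being uniform code mixtures is more than you need. Your argument would also carry over to the $t$-doped setting. This requires only the analogous closure of $t$-doped states under projection onto a stabilizer state of a subsystem (item~1 of \cref{partialtracetdoped}), since $\tau_j\otimes\sigma$ is again $t$-doped.
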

\subsection{Summary of results on $t$-doped stabilizer states}\label{sec:dopedstates}

Within the stabilizer formalism, stabilizer states are not the only states that can be efficiently simulated classically. Indeed, small perturbations of stabilizer states can also remain classically simulable. In particular, adding $t$ local non-Clifford gates produces states that can be simulated in time $\class{poly}(n,2^{O(t)})$~\cite{aaronsonImprovedSimulationStabilizer2004,bravyiImprovedClassicalSimulation2016}. Below, we define a class of states that is strictly larger than this: the $t$-doped states.

\begin{definition}[$t$-doped states~\cite{guMagicInducedComputationalSeparation2025,guDopedStabilizerStates2024}]\label{def:tcomp}
A state vector $\ket{\psi}$ on $n$ qubits is called \emph{$t$-doped} if there exists a Clifford unitary $C \in \mathcal{C}_n$ and a $t$-qubit state vector $\ket{\phi}_t$ such that
\begin{align}
\ket{\psi} = C\bigl(\ket{0}^{\otimes (n-t)} \otimes \ket{\phi}_t\bigr).
\end{align}
We denote by $\mathcal{S}_{n,t}$ the set of all $t$-doped pure states, and by $\hat{\mathcal{S}}_{n,t}$ its convex hull.
\end{definition}

This definition induces a natural hierarchy of quantum states indexed by the parameter $t$. When $t=0$, the class $\mathcal{S}_{n,0}$ coincides exactly with the set of stabilizer states. More generally, the definition imposes a strong structural constraint on the state. When $C$ is the identity, the reference state vector $\ket{0}^{\otimes (n-t)}$ is stabilised by the operators $Z_1,\dots,Z_{n-t}$, implying that it admits a stabilizer group of size at least $2^{n-t}$. Since Clifford unitaries preserve stabilizer structure, every $t$-doped state is stabilised by at least $2^{n-t}$ Pauli operators. Consequently, for small values of $t$, such as $t=O(\log n)$, the state differs from a stabilizer state only on a logarithmic number of qubits, up to a Clifford transformation, and all non-stabilizer structure can be localised to a small subsystem. At the opposite extreme, when $t=\Theta(n)$, the class becomes universal, as any $n$-qubit state is trivially $n$-doped. In this sense, $t$ quantifies the effective ``non-stabilizer dimension'' of the state, namely the number of qubits required to store all its magic once stabilizer correlations have been disentangled.
From a computational perspective, $t$-compressibility guarantees efficient classical evaluation of measurement statistics whenever $t$ is small~\cite{bravyiImprovedClassicalSimulation2016}. The parameter $t$ also marks a sharp computational boundary. For $t=O(\log n)$, the associated classical simulation algorithms run in polynomial time. By contrast, when $t$ grows superlogarithmically, there exist $t$-doped stabilizer states that are believed to be classically intractable~\cite{guMagicInducedComputationalSeparation2025}: 
i.e., there must exist $t$-doped stabilizer states with $t=\omega(\log n)$ that cannot be efficiently simulated classically, in line with standard expectations for the emergence of quantum advantage.

The class of $t$-doped stabilizer states naturally contains many previously studied families of efficiently simulable states. In particular, any state obtained from a stabilizer state by applying at most $t$ local non-Clifford gates is a $O(t)$-doped stabilizer state~\cite{Leone_2024,leoneLearningTdopedStabilizer2024}. In the following lemma, we bound the number of $t$-doped stabilizer states given a tolerance $\varepsilon$ in trace distance. 
\begin{lemma}[Packing number of $t$-doped stabilizer states]\label{lem:covering-t-doped}
Let $\varepsilon > 0$, and let $\mathcal{S}_{n,t}$ denote the set of all pure $t$-doped stabilizer states on $n$ qubits. The $\varepsilon$-packing number of $\mathcal{S}_{n,t}$ with respect to the trace distance satisfies
\begin{align}
\mathscr{P}(\mathcal{S}_{n,t}, \varepsilon) \le 2^{2n^2+3n}\left(\frac{3}{\varepsilon}\right)^{2^{2t}}.
\end{align}
\begin{proof}
By definition, any $t$-doped stabilizer state can be written, using~\cref{def:tcomp}, in the form
\begin{align}
\ket{\psi} = C\ket{0}^{\otimes (n-t)} \otimes \ket{\phi}_t,
\end{align}
where $\ket{\phi}_t$ is a state vector on $t$ qubits, up to the action of a Clifford unitary on $n$ qubits. Consequently, an $\varepsilon$-packing of $\mathcal{S}_{n,t}$ can be constructed by taking the orbit of an $\varepsilon$-packing of the set $\mathcal{V}_{2^t}$ of pure $t$-qubit states under the action of the $n$-qubit Clifford group. This yields the upper bound $\mathscr{P}(\mathcal{S}_{n,t}, \varepsilon) 
\le \lvert \mathcal{C}_n \rvert \, \mathscr{P}(\mathcal{V}_{2^t}, \varepsilon)$,
where $\mathcal{C}_n$ denotes the $n$-qubit Clifford group.
Using the standard upper bound $\lvert \mathcal{C}_n \rvert \le 2^{n^2+2n}\prod_{j=1}^{n}(4^j-1)$ and applying~\cref{prop:packingupplow}, which gives $\mathscr{P}(\mathcal{V}_{2^t}, \varepsilon) \le \left(\frac{3}{\varepsilon}\right)^{2^{2t}}$, we obtain
\begin{align}
\mathscr{P}(\mathcal{S}_{n,t}, \varepsilon) 
\le 2^{n^2+2n}\prod_{j=1}^{n}(4^j -1)\left(\frac{3}{\varepsilon}\right)^{2^{2t}}.
\end{align}
Finally, using the bound $\prod_{j=1}^{n}(4^j-1)\le 2^{n(n+1)}$, we can conclude that
\begin{align}
\mathscr{P}(\mathcal{S}_{n,t}, \varepsilon) 
\le 2^{2n^2+3n}\left(\frac{3}{\varepsilon}\right)^{2^{2t}}.
\end{align}
\end{proof}
\end{lemma}

\section{Computational problems in magic-state resource theory}
In this section, we define the computational problems addressed in this paper within the framework of magic state resource theory. We begin by restating the definition introduced in \cref{sec:convexsetspreliminaries}, adapting it to the case of convex sets of quantum states. This specialisation is motivated by the fact that the stabilizer polytope, denoted by $\hat{\mathcal{S}}_n$, is a convex set by definition. However, we keep the discussion as general as possible by referring to an arbitrary convex set of states, which we denote by $\mathcal{K}$ throughout.

\begin{definition}\label{def:set1s} Let \(\mathcal K \subseteq \mathcal{H}_n \) be a convex subset of $\mathcal{H}_n$. We define the following.  
\begin{itemize}
    \item We define  
    $
        S(\mathcal K, \varepsilon) \coloneqq \{\rho \in \mathcal{H}_n \mid \exists\, \sigma \in \mathcal K : \|\rho - \sigma\|_2 \le \varepsilon\}.
    $
    \item We define  
    $
        S(\mathcal K, -\varepsilon) \coloneqq \{\rho \in \mathcal K \mid S(\{\rho\}, \varepsilon) \subseteq\mathcal K\},
    $
    where \(\{\rho\}\) denotes the singleton set containing the state \(\rho\).
\end{itemize}

\end{definition}
This definition mimics the one given for convex sets of $\mathbb{R}^{D}$, introducing a notion of the inner core and outer neighbourhood for convex sets of states. The careful reader will notice that in \cref{def:set1s} we use the Frobenius norm instead of the trace distance, which is the metric typically used to characterise the distance between quantum states. The two norms are related by the inequalities 
$\|\rho-\sigma\|_2 \le \|\rho-\sigma\|_1 \le \sqrt{d}\,\|\rho-\sigma\|_2$, 
that is, they differ by at most a factor polynomial in the dimension of the Hilbert space. Since the access model in the computational problems studied here consists of a $d \times d$ density matrix, the two distances are therefore equivalent up to polynomial factors. We thus work with the Frobenius norm because of its direct connection to the Euclidean $\ell_2$-norm on vectors, as we explain below. We can also generalise the notion of a well-bounded centered set, and we obtain the following. 

\begin{definition}[Well-bounded centered set for states]\label{def:set2states}
Let $\mathcal K$ be a convex subset of $\mathcal{H}_n$. Then, $\mathcal K$ is called \emph{well-bounded and \(\overline \rho\)-centered} if there exist rationals \(R > 0\), \(r > 0\), and a state \(\overline\rho \in \mathcal K\) such that, with the 
state \(I/d \in \mathcal{H}_n\),
\begin{align}
  \mathcal  K &\subseteq S(I/d, R), \\
   \mathcal K &\supseteq S(\overline \rho, r) .
\end{align}
\end{definition}
Again, this notion provides a generalisation of the concept of a \emph{well-bounded centered} set, but for states.
We can now proceed to introduce the notion of the weak membership problem for a set of quantum states. We define it for a general convex set $\mathcal{K}$, since in what follows we will consider not only the stabilizer polytope but also the convex hull of $t$-doped states introduced in \cref{sec:dopedstates}.

\begin{problem}[$\rho$-weak membership problem ($\rho$-\class{WMEM})]\label{prob:rwmem} Let $\mathcal K\subseteq \mathcal{H}_n$ be a convex and well-bounded $\overline{\rho}$-centered set for states on $n$ qubits. Let $\rho\in\mathcal{D}(\mathcal{H}_n)$ be a quantum state and $\varepsilon>0$ a error parameter. Decide:
\begin{itemize}
    \item If  $\rho\in S(\mathcal K, -\varepsilon)$  output \class{YES}.
    \item If $\rho\notin S(\mathcal K,\varepsilon)$  output \class{NO}. 
\end{itemize}
An instance of  $\rho$-\class{WMEM} is denoted as $\rho\text{-}\class{WMEM}_\varepsilon(\mathcal K)$.
\end{problem}
Let us now define a (polynomially) equivalent version of the membership problem, namely the $1$-sided weak membership problem.
\begin{problem}[$1$-sided $\rho$-weak membership problem $\rho$-$\class{WMEM}^1$]~\label{prob:1rweak} Let $\mathcal K\subseteq \mathcal{H}_n$ be a convex and well-bounded $\overline{\rho}$-centered set for $n$ qubits. Let $\rho\in\mathcal{D}(\mathcal{H}_n)$ be a quantum state and $\varepsilon>0$ a error parameter. Decide:
\begin{itemize}
    \item If  $\rho\in\mathcal{K}$  output \class{YES};
    \item If $\rho\notin S(\mathcal K,\varepsilon)$  output \class{NO}. 
\end{itemize}
An instance of  $\rho$-$\class{WMEM}^1$ is denoted as $\rho\text{-}\class{WMEM}^1_\varepsilon(\mathcal K)$.
\end{problem}


Here, we have presented a generalisation of the membership problem to an arbitrary convex set of quantum states. Since quantum states can be represented as vectors in $\mathbb{R}^{D}$ with $D = d^2 - 1$~\cite{bengtsson2006geometry},  $\rho\text{-}\class{WMEM}_{\varepsilon}(\mathcal{K})$  can be naturally related to $\class{WMEM}_{\varepsilon}(K)$. The following lemma formalise this relationship by establishing the equivalence of the two problems.

\begin{lemma}[Equivalence between $\rho$-$\class{WMEM}^{(1)}$ and $\class{WMEM}^{(1)}$]\label{lemma:rmemwmem}
Let $\mathcal{K}$ be a convex and well-bounded $\overline{\rho}$-centered set for quantum states on $\mathcal{H}_n$, and let $K\subseteq\mathbb{R}^{D}$ with $D=d^2-1$ denote its image under the Pauli-coordinate map $\mathbf{x}:\rho\mapsto \mathbf{x}(\rho)$, where $\mathbf{x}(\rho)=(\mathrm{tr}(\rho P)/\sqrt{d}:P\neq I)$. The set $K$ is well-bounded and centered in the origin with
\begin{align}
    R&=\sqrt{1-1/d},\\
    r&=\frac{1}{\sqrt{d(d-1)}}.
\end{align}
Moreover an instance of $\rho$-\class{WMEM}$_\varepsilon(\mathcal{K})$ is equivalent to an instance of \class{WMEM}$_\varepsilon(K)$. Similarly, an instance  of $\rho$-$\class{WMEM}^1_\varepsilon(\mathcal{K})$ is equivalent to an instance of $\class{WMEM}^1_\varepsilon(K)$.
\begin{proof}
Every state has a Pauli expansion $\rho=\frac{1}{d}I+\frac{1}{d}\sum_{P\neq I}x_P P$ with $x_P=\mathrm{tr}(\rho P)$. For pure states, the coefficients satisfy $\sum_{P\neq I}x_P^{2}=1-1/d$, which shows that the entire state space embeds into the Euclidean ball $B(0,\sqrt{1-1/d})\subset\mathbb{R}^D$. Thus $K=\mathbf{x}(\mathcal{K})$ is compact and convex; moreover, by Ref.\ \cite[Section 1.2]{bengtsson2006geometry},  the full state space is a well-bounded centered set with radii $R=\sqrt{(1-1/d)}$ and $r=\sqrt{1/(d^2-d)}$.
The map $\mathbf{x}$ maps the Hilbert-Schmidt distance 
\begin{align}
\|\rho-\rho'\|_2=\|\mathbf{x}(\rho)-\mathbf{x}(\rho')\|_2 
\end{align}
to the Euclidean distance with no prefactors. Hence, an $\varepsilon$-membership test in state space coincides exactly with an $\varepsilon$-membership test in the image set $K$. This proves the claim.
\end{proof}
\end{lemma}

After defining the membership problem, one can extend the notion of a separating hyperplane. In this sense, we can discuss \emph{witnesses}, operators that separate the states belonging to the convex set \(\mathcal{K}\) from those that do not belong to it. This leads to the following definition:

\begin{definition}[Witness for a convex set of states]\label{def:witK}
Let $\mathcal{K}$ be a convex set of states on $\mathcal{H}_n$, let $W$ be a Hermitian operator, and let $\gamma\in\mathbb{Q}$. We say that $W$ is a \emph{witness} for $\mathcal{K}$ if
\begin{align}\label{eq:wit}
    \max_{\sigma\in\mathcal{K}} \,\mathrm{tr}(W\sigma)\le \gamma .
\end{align}
\end{definition}
Some authors define a witness for a convex set by additionally requiring the existence of at least one state $\rho \notin \mathcal{K}$ such that $\tr(W\rho) > \gamma$, thereby ensuring that $W$ is a ``useful'' witness. We instead adopt the above definition without this extra condition in order to guarantee that the set of witnesses remains convex, which is crucial for the proof of \cref{wmemeforstabinqp2}, and centered at the identity, which is, of course, a ``useless'' witness.

Note that if $\mathcal{V}$ is a set of pure states and $\mathcal{K}=\mathrm{conv}(\mathcal{V})$ is their convex hull, then the condition of \cref{eq:wit} reduces to
\begin{align}
    \max_{\psi\in\mathcal{V}} \langle \psi \vert W \vert \psi\rangle \le \gamma.
\end{align}
This follows because the map $\sigma\mapsto\mathrm{tr}(W\sigma)$ is linear, and a linear functional attains its maximum over a convex set at one of its extreme points. Since the extreme points of $\mathcal{K}$ are precisely the pure states in $\mathcal{V}$, it suffices to verify the inequality only on $\mathcal{V}$.

Related to the notion of a witness, we state the following lemma, which explores the relationship between membership and witnesses.

\begin{lemma}[Witness vs. weak membership]\label{witweak} 
Given a state \(\rho\) such that \(\|\rho - \sigma\|_2 > 0\) for all \(\sigma \in \mathcal{K}\), there exists a witness operator \(W\) that detects \(\rho\) in the sense that
\end{lemma}

\begin{align}
    \tr(W\rho)> \max_{\sigma\in \mathcal{K}}\tr(W\sigma)+\min_{\sigma\in\mathcal{K}}\|\rho-\sigma\|^2_2\,.
\end{align} 
Conversely, if there exists a witness that detects $\rho$, i.e., $\tr(W\rho)>\max_{\sigma\in\mathcal{K}}\tr(W\sigma)+\varepsilon$, then it holds that $\min_{\sigma\in\mathcal{K}}\|\rho-\sigma\|_2> \frac{\varepsilon}{\|W\|_2}$.

\begin{proof}
    First of all, let us define $\tau=\arg \min_{\sigma\in \mathcal K}\|\rho-\sigma\|_2$. We have $\|\rho-\tau\|_2>\varepsilon$. The state $\tau+t(\sigma-\tau)\in\mathcal K$ for $\sigma\in\mathcal{S}_n$ since it 
    is a convex combination of 
    $\tau$ and $\sigma$. 
    The function $f(t)=\|\rho-\tau-t(\sigma-\tau)\|_2^2=\tr[(\rho-\tau-t(\sigma-\tau))^2]$ achieves its minimum at $t=0$ by construction. Differentiating with respect to $t$, we have
    \begin{align}
        f'(t)=2\tr[(\rho-\tau-t(\sigma-\tau))(\tau-\sigma)]\implies f'(0)=\tr((\rho-\tau)(\tau-\sigma)).
    \end{align}
Since $t=0$ is a minimum, we have $f'(0)\ge 0$. We define the witness as $W=\rho-\tau$. Indeed, we have
\begin{align}
    \tr(W(\rho-\sigma))&=\tr((\rho-\tau)(\rho-\sigma))\\
    \nonumber
    &=\tr((\rho-\tau)(\rho-\tau))+\tr((\rho-\tau)(\tau-\sigma))\\
    &\ge\|\rho-\tau\|_2^2,
    \nonumber
\end{align}
proving the first statement. Conversely, if there exists $W$ such that $\tr(W\rho)> \max_{\sigma} \tr(W\sigma)+\varepsilon$. It follows that for any $\tau\in \mathcal{K}$, it holds that $\tr(\tau W)\le \max_{\sigma}\tr(W\sigma)< \tr(W\rho)-\varepsilon$. Hence, we have $\varepsilon<\tr(W[\rho-\tau])\le \|W\|_2\|\rho-\tau\|_2$, proving the statement.
\end{proof}

Similarly to what was done for convex sets of $\mathbb{R}^D$, we can introduce the problem of deciding whether an operator is a witness or not.

\begin{problem}[Weak witnessing detection (\class{WWD})]\label{problem:WWD}
Let $\mathcal{K}\subseteq\mathcal{B}(\mathcal{H}_n)$ be a convex set and well-bounded $\overline{\rho}$-centered set for $n$-qubit states, let $W\in\mathcal{B}(\mathcal{H}_n)$ satisfy $\|W\|_2\le 1$, and let $\gamma,\delta\in\mathbb{Q}$ with $\delta>0$. Decide:
\begin{itemize}
    \item If some $\rho\in S(\mathcal{K},-\delta)$ satisfies $\tr(W\rho)\ge \gamma+\delta$, output \class{YES}.
    \item If all $\rho\in S(\mathcal{K},\delta)$ satisfy $\tr(W\rho)\le \gamma-\delta$, output \class{NO}.
\end{itemize}
An instance 
is denoted $\class{WWD}_{\delta}(\mathcal{K})$.
\end{problem}
The careful reader may have noticed that, in the promise of \cref{problem:WWD}, a witness $W$ is required to satisfy $\|W\|_2 \le 1$, whereas in the definition of a witness given in \cref{def:witK} no such condition is imposed. This discrepancy merely reflects a rescaling of the constant $\gamma \in \mathbb{Q}$ and ultimately only affects the scaling in the reduction used in \cref{lem:reductionswmemwwd}.

\begin{lemma}\label{lem:wwdwopt}
An instance of $\class{WWD}_{\delta}(\mathcal{K})$ with parameters $(W,\gamma)$ corresponds to $\class{WOPT}_{\delta}(K)$ with parameters $(\mathbf{w},\gamma-\tr(W)/d)$, where $K=\mathbf{x}(\mathcal{K})\subset\mathbb{R}^{D}$ with $D=d^2-1$ and $\mathbf{x}(\rho)$ is the Pauli-coordinate map, and $\mathbf{w}=\mathbf{x}(W)$.
\begin{proof}
Mapping $W$ to $\mathbf{w}=(\tr(WP):P\neq I)$ gives $\tr(W\rho)=\mathbf{w}^T\mathbf{x}+\tr(W)/d$ for any $\rho$ with coordinates $\mathbf{x}$. optimising $\tr(W\rho)$ over $\mathcal{K}$ is therefore equivalent to optimising $\mathbf{w}^T\mathbf{x}$ over $K$, up to the constant shift $\tr(W)/d$. For a \class{YES} instance of $\class{WOPT}_{\delta}(K)$, the condition $\mathbf{w}^T\mathbf{x}\ge\gamma-\tr(W)/d+\delta$ implies $\tr(W\rho)\ge\gamma+\delta$, matching the \class{YES} case of $\class{WWD}_{\delta}(\mathcal{K})$. For a \class{NO} instance, $\mathbf{w}^T\mathbf{x}\le\gamma-\tr(W)/d-\delta$ yields $\tr(W\rho)\le\gamma-\delta$, matching the \class{NO} case.
\end{proof}
\end{lemma}
Given the connection between $\class{WMEM}_{\varepsilon}$ and $\class{WOPT}_{\delta}$, one can expect an analogous relation between $\rho$-$\class{WMEM}_{\varepsilon}$ and $\class{WWD}_{\delta}$. This correspondence follows directly from~\cref{thm:wmemimpwopt} and from the observation that $\rho$-$\class{WMEM}_{\varepsilon}$ can be embedded into $\class{WMEM}_{\varepsilon}$, while $\class{WWD}_{\delta}$ can be embedded into $\class{WOPT}_{\delta}$. For completeness, we state the following lemma to formalise this connection.

\begin{lemma}[Reduction from $\rho$-$\class{WMEM}_{\varepsilon}$ to $\class{WWD}_\delta$]\label{lem:reductionswmemwwd}
Let $\mathcal{K}$ be a convex and well-bounded $\overline{\rho}$-centered set for quantum states. Given an instance of $\class{WWD}_\delta(\mathcal{K})$ with parameters $(W, \gamma)$, where $W$ is an operator on $\mathcal{H}_n$ with $\|W\|_2 \leq 1$, $\gamma \in \mathbb{Q}$, and $\delta \in (0,1)$, there exists an algorithm that runs in time $\class{poly}(d, \delta^{-1})$ and solves $\class{WWD}_{\delta}(\mathcal{K})$ using an oracle for $\rho$-$\class{WMEM}_{\varepsilon}(\mathcal{K})$ with $\varepsilon = \frac{\delta^3}{2^{13} \cdot 3^3 \cdot d^{13}}$.
\end{lemma}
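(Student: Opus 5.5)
The plan is to chain the dictionary between state-space problems and their Euclidean counterparts with the Grötschel--Lovász--Schrijver reduction, \cref{thm:wmemimpwopt}. First, \cref{lem:wwdwopt} maps the given instance of $\class{WWD}_\delta(\mathcal{K})$ with parameters $(W,\gamma)$ to an instance of $\class{WOPT}_\delta(K)$ with parameters $(\mathbf{w},\gamma-\tr(W)/d)$. Here $K=\mathbf{x}(\mathcal{K})\subseteq\mathbb{R}^D$ and $D=d^2-1$. The map is computed in $\class{poly}(d)$ time by evaluating the $d^2$ Pauli traces $\tr(WP)$.

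Two checks are needed. The first is the normalisation $\|\mathbf{w}\|_2\le 1$. With the coordinate map of \cref{lemma:rmemwmem}, $\mathbf{x}(W)=(\tr(WP)/\sqrt d)_{P\neq I}$, so $\|\mathbf{w}\|_2^2\le\frac1d\sum_P\tr(WP)^2=\|W\|_2^2\le1$. The second is consistency of the inner product: $\tr(W\rho)=\mathbf{w}^T\mathbf{x}(\rho)+\tr(W)/d$. Both hold after fixing the $1/\sqrt d$ convention consistently, which \cref{lem:wwdwopt} implicitly does. Finally, the sets $S(\mathcal{K},\pm\delta)$ correspond exactly to $S(K,\pm\delta)$, because $\mathbf{x}$ is an isometry from Frobenius norm to $\ell_2$. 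Hence the YES/NO promises coincide.

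Next, we invoke \cref{thm:wmemimpwopt} on $K$. By \cref{lemma:rmemwmem}, $K$ is well-bounded and centered with $R=\sqrt{1-1/d}\le1$ and $r=1/\sqrt{d(d-1)}\ge 1/d$. The theorem gives an algorithm running in time $\class{poly}(D,R,\lceil1/\delta\rceil)=\class{poly}(d,\delta^{-1})$. It uses an oracle for $\class{WMEM}_{\varepsilon'}(K)$ with
\begin{align}
\varepsilon'=\frac{r^3\delta^3}{2^{13}3^3D^5R^4(R+r)} .
\end{align}
We lower bound this using $r^3\ge d^{-3}$, $D^5\le d^{10}$, $R^4\le1$ and $R+r\le2$, which gives $\varepsilon'\ge\delta^3/(2^{14}3^3d^{13})$.

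Because the weak membership promise only weakens as $\varepsilon$ shrinks, an oracle with a smaller tolerance suffices. The statement's $\varepsilon=\delta^3/(2^{13}3^3d^{13})$ is off by a factor $2$ from my bound. I would either sharpen the estimate (for instance $r^3\ge d^{-3}$ is loose since $r\approx 1/d$, and $(R+r)\le 1+1/d$) or absorb the factor, since only polynomial scaling matters. Each oracle call to $\class{WMEM}_{\varepsilon}(K)$ at a query point $\mathbf{y}$ is then answered by \cref{lemma:rmemwmem}. We form the operator $\rho_{\mathbf y}=I/d+\frac{1}{\sqrt d}\sum_P y_P P$ and call $\rho$-$\class{WMEM}_\varepsilon(\mathcal{K})$; this conversion takes $\class{poly}(d)$ time.

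The main obstacle is this last step. The ellipsoid-based procedure may query points $\mathbf y$ outside the image of the state space, i.e.\ non-positive operators, whereas $\rho$-$\class{WMEM}$ is defined only for density matrices $\rho$. My first attempt is to argue that \cref{lemma:rmemwmem} identifies the two problems on all of $\mathbb{R}^D$ by letting $\rho$ range over unit-trace Hermitian operators. For such an operator outside the state space, the correct answer is forced: it is NO, since $\mathcal{K}\subseteq\mathcal{D}(\mathcal{H}_n)$. We can also pre-check positivity in $\class{poly}(d)$ time. If $\rho_{\mathbf y}$ is farther than $\varepsilon$ from $\mathcal{D}$, we answer NO directly. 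Otherwise we project $\rho_{\mathbf y}$ onto $\mathcal{D}$, which is an eigenvalue clipping, and query the oracle with a tolerance adjusted by a constant factor.
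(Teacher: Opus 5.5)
Your proposal is correct and follows the paper's own proof, which is exactly the chain \cref{lem:wwdwopt} $\to$ \cref{thm:wmemimpwopt} $\to$ \cref{lemma:rmemwmem}; you also supply details the paper passes over (the $1/\sqrt d$ normalisation of $\mathbf w$, the explicit tolerance, and non-positive query points). Like the paper, you take $r=1/\sqrt{d(d-1)}$ from \cref{lemma:rmemwmem}, but that is the inradius of the full state space and a general $\mathcal{K}$ need not inherit it: the one-qubit stabilizer octahedron has Hilbert--Schmidt inradius $1/\sqrt{6}$, not $1/\sqrt{2}$, so the explicit exponent $13$ rests on that lemma.

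Two simplifications. No factor of $2$ is actually lost: $R+r=\sqrt{d/(d-1)}=1/R$ and $r/R=1/(d-1)$ give $\varepsilon'=\delta^3/\bigl(2^{13}3^3(d^2-1)^5(d-1)^3\bigr)\ge\varepsilon$. Non-positive queries can simply be answered \class{NO}, because a \class{YES} answer is only required on $S(K,-\varepsilon')\subseteq K$, so the projection step and the tolerance adjustment (which would alter the stated $\varepsilon$) are unnecessary.
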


\begin{proof}The proof follows directly from~\cref{lemma:rmemwmem,lem:wwdwopt}. Given the equivalence between the protocols $\class{WMEM}_{\varepsilon}$ and $\rho$-$\class{WMEM}_{\varepsilon}$, as well as between $\class{WOPT}_\delta$ and $\class{WWD}_\delta$, we can construct a polynomial-time algorithm for $\class{WWD}_\delta$ with oracle access to $\rho$-$\class{WMEM}_{\delta}$. The approach proceeds by first reducing the weak witness problem to its optimisation counterpart $\class{WOPT}_\delta$, then applying~\cref{thm:wmemimpwopt} to complete the reduction. 
\end{proof}
To conclude this preliminary section, we recall that the two membership problems defined above, namely $\rho$-$\class{WMEM}$ and $\rho$-$\class{WMEM}^1$, are polynomially equivalent. This correspondence follows directly from~\cref{thm:wmem1wmem} and from the observation that $\rho$-$\class{WMEM}_{\varepsilon}$ can be embedded into $\class{WMEM}_{\varepsilon}$, while $\rho$-$\class{WMEM}^1$ can be embedded into $\class{WMEM}^1$. For completeness, we state the following lemma to formalise this connection.

\begin{lemma}[Reduction from $\rho$-$\class{WMEM}^{1}$ to $\rho$-$\class{WMEM}$]\label{lem:onesideequivalentbothss} Let $\mathcal{K}$ be a convex and well-bounded $\overline{\rho}$-centered set for quantum states. Given an instance of $\rho$-$\class{WMEM}_\varepsilon(\mathcal{K})$, and $\varepsilon \in (0,1)$, there exists an algorithm that runs in time $\class{poly}(d, \delta^{-1})$ and solves $\rho$-$\class{WMEM}_\varepsilon(\mathcal{K})$ using an oracle for $\rho$-$\class{WMEM}_{\delta}^{1}(\mathcal{K})$ with $\delta = \frac{\varepsilon^3}{2^{12} \cdot 3 \cdot d^{12}}$.
    
\end{lemma}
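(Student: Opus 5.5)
The plan is to mirror the proof of \cref{lem:reductionswmemwwd}: translate both problems to Euclidean space, apply \cref{thm:wmem1wmem}, and translate back. First, by \cref{lemma:rmemwmem}, the Pauli-coordinate map $\mathbf{x}:\rho\mapsto(\tr(\rho P)/\sqrt d:P\neq I)$ sends $\mathcal K$ to a compact, convex set $K\subseteq\mathbb{R}^D$ with $D=d^2-1$. This set $K$ is well-bounded and centered, with radii $R=\sqrt{1-1/d}\le1$ and $r=1/\sqrt{d(d-1)}\ge 1/d$. The map is an isometry from the Frobenius norm to the Euclidean norm. Hence $S(\mathcal K,\pm\varepsilon)$ is sent exactly to $S(K,\pm\varepsilon)$, and a $\rho$-$\class{WMEM}_\varepsilon(\mathcal K)$ instance becomes a $\class{WMEM}_\varepsilon(K)$ instance with query point $\mathbf{y}=\mathbf{x}(\rho)$. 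In the same way, every oracle call to $\class{WMEM}^1_\delta(K)$ on a query point $\mathbf{z}$ can be answered by the $\rho$-$\class{WMEM}^1_\delta(\mathcal K)$ oracle applied to the operator $\mathbf{x}^{-1}(\mathbf z)=I/d+\sum_{P\neq I}z_P P/\sqrt d$.

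Second, I invoke \cref{thm:wmem1wmem} with these $R,r,D$. It gives an algorithm running in time $\class{poly}(D,R,\lceil1/\varepsilon\rceil)=\class{poly}(d,\varepsilon^{-1})$ that uses a $\class{WMEM}^1$ oracle with tolerance $\varepsilon^3r^2/(3\cdot2^{12}D^5R^4)$. Substituting $r^2\ge 1/d^2$, $R\le1$ and $D\le d^2$ lower-bounds this tolerance by $\varepsilon^3/(3\cdot 2^{12} d^{12})=\delta$. An oracle with a smaller tolerance is at least as strong, since its NO promise set contains that of the larger tolerance. So I can run the algorithm with the $\delta$-oracle, or simply restate the theorem with tolerance $\delta$. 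The algorithm's running time is polynomial in $d$ and $\varepsilon^{-1}$, and therefore also in $\delta^{-1}$, as the statement requires. Finally, the translation by $\mathbf{x}^{-1}$ takes $O(d^2)$ Pauli operations per query, which is polynomial, so the overall runtime stays $\class{poly}(d,\delta^{-1})$.

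The main obstacle is that the algorithm of \cref{thm:wmem1wmem} may query points $\mathbf z$ that are not images of density matrices, so $\mathbf x^{-1}(\mathbf z)$ can fail to be positive semidefinite. Meanwhile, \cref{prob:1rweak} is stated only for inputs $\rho\in\mathcal D(\mathcal H_n)$. I would handle this in one of two ways. The first is to observe that \cref{def:set1s} and the oracle definition make sense for any Hermitian unit-trace operator, so the oracle can be read on that domain. The second is to pre-check membership in the ball of radius $R$: if $\mathbf z\notin S(\{0\},R)$ we can answer NO directly, because then $\mathbf z\notin K$, and the only remaining case is queries lying within distance $\delta$ of the boundary. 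A related subtlety is that the problem asks for rational inputs, while the radii involve $\sqrt{d}$. I would replace them with the rational bounds $R=1$ and $r=1/d$, which only affect the constant used in the choice of $\delta$.
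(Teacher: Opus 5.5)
Your proposal is correct and is essentially the paper's proof, only spelled out in more detail: you transfer both problems to $\mathbb{R}^{d^2-1}$ via the isometric Pauli-coordinate map of \cref{lemma:rmemwmem}, then apply \cref{thm:wmem1wmem} with $R\le 1$, $r\ge 1/d$, $D\le d^2$, which yields exactly $\delta=\varepsilon^3/(2^{12}\cdot 3\cdot d^{12})$. Two of your remarks do not appear in the paper's one-line argument: that a smaller-tolerance $\class{WMEM}^1$ oracle can stand in for a larger-tolerance one, and how to treat queries that are not density matrices. For the latter, the simplest fix is to answer \class{NO} on any query that is not positive semidefinite; this is always admissible because such a point lies outside $\mathcal K$, whereas your ball-of-radius-$R$ pre-check alone does not catch all such points.
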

\begin{proof}
   The proof follows directly from~\cref{lemma:rmemwmem}. Given the equivalence between the protocols $\class{WMEM}_{\varepsilon}$ and $\rho$-$\class{WMEM}_{\varepsilon}$, as well as between $\class{WMEM}^{1}_\delta$ and $\rho$-$\class{WMEM}^{1}_\delta$, we can construct a polynomial-time algorithm for $\rho$-$\class{WMEM}_{\varepsilon}$ with oracle access to $\rho$-$\class{WMEM}^{1}_{\delta}$. The approach proceeds by first reducing $\rho$-$\class{WMEM}_{\varepsilon}$ to $\class{WMEM}_{\varepsilon}$ , then applying~\cref{thm:wmem1wmem} to complete the reduction. 
\end{proof}
\subsection{Weak membership for the stabilizer polytope}\label{sec:wmemstab}

This section investigates the computational problem of determining membership in the convex hull of pure stabilizer states.
Formally, letting \( \mathcal{S}_n \) denote the set of pure stabilizer states on $n$ qubits and \( \hat{\mathcal{S}}_n \) its convex hull, we analyse the complexity of the weak membership problem \( \rho\text-\class{WMEM}_{\varepsilon}(\hat{\mathcal{S}}_n) \). 
We begin by examining the hardness of the decision problem \(\rho\text-\class{WMEM}_{\varepsilon}(\hat{\mathcal{S}}_n) \).

\begin{theorem}[Superpolynomial complexity of $\rho$-$\class{WMEM}_{\varepsilon}(\hat{\mathcal{S}_n})$]\label{thm:nphardrhowmem}
    Let $\hat{\mathcal{S}_n}$ denote the stabilizer polytope on $n$ qubits. Under the \emph{exponential time hypothesis}  (ETH) (see Definition~\ref{def:eth}), for any $\varepsilon \le \frac{1}{2^{13} 3^3 d^{53/2}\log^{18}d }$, the decision problem $\rho$-$\class{WMEM}_{\varepsilon}(\hat{\mathcal{S}_n})$ is not contained in the complexity class $\class{QP}^{2-\eta}$ for any $\eta > 0$.
\end{theorem}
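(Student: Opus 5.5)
The plan is to reduce 3-\class{SAT} on $N=\Theta(n^2)$ variables to an instance of $\class{WWD}_\delta(\hat{\mathcal S}_n)$ with $\delta=1/\class{poly}(d)$, and then invoke \cref{lem:reductionswmemwwd}. That lemma converts a $\class{QP}^{2-\eta}$ algorithm for $\rho$-$\class{WMEM}_\varepsilon(\hat{\mathcal S}_n)$, with $\varepsilon\sim\delta^3/d^{13}$, into a $\class{QP}^{2-\eta}$ algorithm for $\class{WWD}_\delta$. Composed with the reduction, this would solve 3-\class{SAT} on $\Theta(\log^2 d)$ variables in time $2^{O(\log^{2-\eta}d)}=2^{o(N)}$, contradicting ETH (\cref{def:eth}). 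The stated threshold on $\varepsilon$ should come from plugging $\delta\sim 1/(d^{9/2}\operatorname{polylog} d)$ into $\varepsilon=\delta^3/(2^{13}3^3d^{13})$: this gives $d^{13+27/2}=d^{53/2}$ and $\log^{18}d$, which tells me what promise gap the construction must achieve.

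The core step is the encoding. Because $\tr(W\sigma)$ is linear, its maximum over $\hat{\mathcal S}_n$ is attained on pure stabilizer states, so $\class{WWD}$ amounts to maximising $\langle\psi|W|\psi\rangle$ over $\psi\in\mathcal S_n$. I would restrict attention to graph states $|G\rangle$, whose $\binom n2\approx n^2/2$ edge bits $x_{ij}$ serve as Boolean variables. Every stabilizer state is locally Clifford equivalent to a graph state, up to a few extra phase/bit parameters. The key identity I would exploit is that diagonal Pauli-type observables built from $X$ and $Z$ have graph-state expectations that are polynomial (multilinear, low degree) in the adjacency bits. For instance, $\langle G|X_a Z_b|G\rangle$ vanishes or equals $\pm1$ according to whether $b\in N(a)$, and products $X_aX_b\ldots$ produce signs $(-1)^{x_{ab}}$. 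I would assemble, for each clause $c$, a Hermitian term $W_c$ that is a linear combination of $O(1)$ Pauli strings, with $\langle G|W_c|G\rangle=1$ if the clause is satisfied by the edge bits and $0$ otherwise. Since $(1-s_1x)(1-s_2y)(1-s_3z)$ expands into monomials in $\pm1$ variables, each monomial should be realised by a Pauli whose graph-state expectation is the corresponding parity of edge signs. I would then set $W=\frac{1}{\mathrm{norm}}\sum_c W_c$, normalised so that $\|W\|_2\le 1$, and $\gamma=(n_c-\tfrac12)/\mathrm{norm}$.

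The main obstacle is soundness: ensuring that no stabilizer state outside the graph-state family does better. A general stabilizer state is $\bigotimes_i C_i|G\rangle$ with local Cliffords, and under such states the chosen Paulis can take values that do not correspond to any consistent Boolean assignment, including $0$ values from non-stabilizer Paulis. My first attempt would be to add penalty terms that force the local Clifford frame. For example, if a reference ancilla block must be in a fixed graph-state pattern, this pins down the $X/Z$ frame on each qubit. Then, for any stabilizer state, I would argue that its expectation on each $W_c$ is a convex mixture of Boolean-valued evaluations, so the total is at most $n_c-1$ whenever the formula is unsatisfiable. Alternatively, I could symmetrise $W$ by twirling over local Cliffords, to argue that the maximum is attained on graph-state-like configurations. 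Completeness is straightforward: the satisfying graph state achieves $n_c/\mathrm{norm}$.

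Finally, I would bound the parameters. The resulting gap is $\Omega(1/\mathrm{norm})$ with $\mathrm{norm}\le\|\sum_c W_c\|_2=O(\sqrt d\, n_c)$ and $n_c=O(N^3)=O(\log^6 d)$. This fixes $\delta$ polynomially small with polylog factors, and it should reproduce the exponents in the hypothesis. I would close by checking that $W$ is writable in time $\class{poly}(d)$, so the whole reduction runs in polynomial time.
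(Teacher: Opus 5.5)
\textbf{Verdict: there is a genuine gap.} Your outer reduction is the paper's: hardness of $\class{WWD}_\delta(\hat{\mathcal S}_n)$ on $\Theta(\log^2 d)$ SAT variables, then \cref{lem:reductionswmemwwd}, then ETH. Your back-solving of $d^{53/2}\log^{18}d$ from $\delta\sim d^{-9/2}\log^{-6}d$ is also right.

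\textbf{The clause encoding cannot be built from Pauli strings.} For $P=X^{u}Z^{v}$ and a graph state, $\langle G|P|G\rangle\neq 0$ only if $P$ lies in the stabilizer group, i.e.\ only if $v=Au$. That condition fixes the entire neighbourhood parity vector of $\mathrm{supp}(u)$. Hence $\langle G|X_aZ_b|G\rangle=1$ iff $N(a)=\{b\}$, not iff $b\in N(a)$. Likewise $\langle G|X_aX_b|G\rangle$ is the indicator that $a,b$ are non-adjacent twins, not $(-1)^{x_{ab}}$. More decisively, for $u\neq0$ the condition $v=Au$ amounts to $n-1$ independent linear constraints on the edge bits, and $Z$-type strings have zero expectation on graph states. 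So any $W$ built from $\class{poly}(n)$ Pauli strings has $\langle G|W|G\rangle=\tr(W)/d$ on all but a $\class{poly}(n)\,2^{-n}$ fraction of graphs, and it cannot compute a clause indicator. Edge bits are visible only to operators with exponentially many Pauli components. The paper uses the rank-one operator $X^{s}(i,j)=\frac d4\bigl(\ket{\mathbf 0}+(-1)^s\ket{i,j}\bigr)\bigl(\bra{\mathbf 0}+(-1)^s\bra{i,j}\bigr)$. It compares the amplitudes $1/\sqrt d$ and $(-1)^{A_{ij}}/\sqrt d$ and returns $\bigl(\tfrac{1+(-1)^{s+A_{ij}}}{2}\bigr)^2$. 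With this gadget a single-copy expectation is a product of only two edge-bit factors. The cubic clause term therefore needs two copies: $X_\alpha\otimes\tfrac12(Y_{\beta\gamma}+Y_{\beta\gamma}^\dagger)$ evaluated on $\ket{G}^{\otimes 2}$.

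\textbf{Soundness is not carried out.} You rightly call it the heart of the proof, but you only list candidate strategies (frame-pinning ancillas, twirling). The claim that every stabilizer state's value is a convex mixture of Boolean evaluations is asserted, not proved. The paper supplies soundness through an explicit chain of penalty terms scaled by $\|H\|_\infty+1$:
\begin{enumerate}
\item cross-edge terms $\bar X(i_A,j_B)$ and consistency terms $W(i_A,j_A,i_B,j_B)$ force a $2n$-vertex graph state to equal $\ket{G}^{\otimes2}$;
\item terms $S(i)$ remove diagonal local-Clifford phases;
\item $\ketbra{\mathbf 0}{\mathbf 0}$-terms restrict to full computational-basis support, where every stabilizer state is a graph state up to diagonal local Cliffords.
\end{enumerate}

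\textbf{Caution on step 3.} If you reconstruct it, let $D=2^{2n}$ and $p=|\langle\mathbf 0|\sigma\rangle|^2$. No penalty affine in $p$ can vanish at $p=1/D$ and be positive at both $p=0$ and $p\ge 2/D$. The paper's $H^{(4)}$ in fact assigns $\ket{\mathbf 0}$ a large negative energy. A cleaner route is to note that $S(i)$ already penalises any proper support containing $\mathbf 0$, because some $e_i$ lies outside it. Then only $p=0$ needs the extra term $(\|H\|_\infty+1)(I-D\ketbra{\mathbf 0}{\mathbf 0})$, with the weight of $S(i)$ raised to $4(\|H\|_\infty+1)$.

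\textbf{Completeness is not automatic in $\class{WWD}$.} The satisfying graph state is an extreme point, not a point of $S(\hat{\mathcal S},-\delta)$. You must move inward at cost $O(\delta R/r)$, which costs a further polynomial factor in the admissible $\delta$.
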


\begin{proof}
   We prove that $\rho$-$\class{WMEM}_{\varepsilon}(\hat{\mathcal{S}}_n)$ belongs to $\class{QP}^2$ in \cref{wmemeforstabinqp2}. To show that this inclusion is strict, we reduce the membership problem to the weak witness detection problem $\class{WWD}_{\delta}(\hat{\mathcal{S}}_n)$. In \cref{thm:nphardwwd}, we show that $\class{WWD}_{\delta}(\hat{\mathcal{S}}_n)$ does not belong to $\class{QP}^{2-\eta}$ for any $\eta>0$, assuming the exponential time hypothesis stated in \cref{def:eth}, provided that $\delta \le \frac{1}{d^{9/2} \log^6d}$. The reduction follows from the existence of a polynomial-time reduction between $\rho$-$\class{WMEM}_{\varepsilon}(\hat{\mathcal{S}}_n)$ and $\class{WWD}_{\delta}$, established in \cref{lem:reductionswmemwwd}. As a consequence, $\rho$-$\class{WMEM}_{\varepsilon}(\hat{\mathcal{S}}_n)$ with $\varepsilon \le \frac{1}{2^{13} 3^3 d^{53/2}\log^{18}d }$ requires time $2^{\Omega(\log^2d)}$, which implies the desired lower bound. This concludes the proof. 
\end{proof}
    It should be remarked that while the promised gap $\varepsilon$ in \cref{thm:nphardrhowmem} is small, it is ``only'' polynomially small in the input parameter $d$ and such a situation is common in 
conceptual hardness proofs of this kind. Original proofs stating the hardness of testing membership in the set of bipartite separable quantum states featured promise 
gaps that 
were even inversely exponential in the dimension of the quantum system and hence doubly exponential in the system size \cite{gurvitsClassicalDeterministicComplexity2003,ioannouComputationalComplexityQuantum2007}. When this was brought down to an inverse polynomial scaling, a situation reminiscent of the one encountered here, this was dubbed the ``strong \class{NP}-hardness of the quantum separability problem'' \cite{gharibianStrongNPHardnessQuantum2009}. 
The underlying reason for this behavior lies in the structure of the reductions: rather than proving hardness directly, one relies on a chain of polynomial-time reductions between different computational problems. In our case, following the same strategy as in the proof of strong \class{NP}-hardness for quantum separability, we exploit the polynomial equivalence between the classes $\class{WMEM}_{\varepsilon}$ and $\class{WOPT}_{\delta}$. This equivalence, however, comes at a quantitative cost, as made explicit in \cref{thm:wmemimpwopt}, where one must fix $\varepsilon = O(\delta^3 / d^{13})$. In principle, these constants could be improved in two ways: either by strengthening the polynomial-time reduction between $\class{WMEM}_{\varepsilon}$ and $\class{WOPT}_{\delta}$, or by shortening the overall reduction chain. For instance, a direct reduction showing that oracle access to $\rho$-$\class{WMEM}_{\varepsilon}$ suffices to solve $3$-\class{SAT} would lead to significantly improved parameter dependencies.


\begin{lemma}[$\rho$-$\class{WMEM}_{\varepsilon}(\hat{\mathcal{S}}_{n})\in \class{QP}^2$]\label{wmemeforstabinqp2}
    For any $0<\varepsilon<1$, there exists an algorithm running in time $\class{poly}(2^{\log^2d},\log\frac{1}{\varepsilon})$ that solves $\rho$-$\class{WMEM}_{\varepsilon}(\hat{\mathcal{S}}_n)$.
    \begin{proof}
         As noted in Ref.~\cite{howardApplicationResourceTheory2017}, the optimisation for the robustness of magic (see \cref{def:robustnessofmagic}) can be written as the following linear program:
\begin{align}\label{robustnessoptimisation}
    \mathcal{R}(\rho)=\max_{ \mathbf{w}\,:\,\|A^{T}\mathbf{w}\|_\infty\le 1}(-\mathbf{x}^T\cdot \mathbf{w})
\end{align}
where $A_{i,j}=\frac{1}{\sqrt{d}}\tr(P_i\sigma_j)$ and $\mathbf{x}_i=\frac{1}{\sqrt{d}}\tr(P_i\rho)$. \cref{robustnessoptimisation} states that the dual problem associated with computing the robustness of magic is precisely the maximisation of a magic witness. Indeed, it can be recast as
\begin{align}\label{eq:linearprogramrobustness}
    \mathcal{R}(\rho)=\max_{\substack{W\\\tr(W\sigma)\le1\,\,\forall\sigma\in\mathcal{S}_n}}\tr(W\rho).
\end{align}
Indeed, the condition $\|A^T\mathbf{w}\|_{\infty}\le 1$ corresponds to the condition $\max_{\sigma\in\mathcal{S}_n}\tr(W\sigma)\le 1$, which is exactly the definition of a witness for a convex set (see \cref{def:witK} with $\gamma=1$). We define the set of such witnesses $\mathcal{W}\coloneqq\{W\,:\, \max_{\sigma\in\mathcal{S}_n}\tr(\sigma W)\le 1\}$. So far we have implicitly assumed an ideal setting in which the full space of witnesses can be accessed exactly. We now relax this assumption and consider the more realistic case in which only a restricted subset of witnesses is available, due to finite numerical precision. Given that the set of witnesses is a convex set, we can define its inner and outer boundary according to \cref{def:set1}. It is immediate to see that we can write the inner boundary of $\mathcal{W}$ as
\begin{align}
    S(\mathcal{W},-\nu) = \{\, W \in \mathcal{W} \;:\; W + H \in \mathcal{W} \ \text{for all } H \text{ with } \|H\|_2 \le \nu \,\}.
\end{align}
That is, $S(\mathcal{W},-\nu)$ consists of witnesses that remain feasible under any Hilbert--Schmidt perturbation of norm at most $\nu$. For the reminder of the proof, we adopt the lighter notation $\mathcal{W}_{\nu}\equiv S(\mathcal{W},-\nu)$.
The corresponding optimisation problem is
\begin{align}
    \mathcal R_{\nu}(\rho) \coloneqq \max_{W \in \mathcal{W}_{\nu}} \tr(\rho W),
\end{align}
which is a linear program over $\mathcal{W}_{\nu}$. This problem can be solved up to additive error $\nu$ using the ellipsoid algorithm~\cite{grotschelGeometricAlgorithmsCombinatorial1988} in time $\poly\bigl(2^{\log^2d}, \log\nu^{-1}\bigr)$ provided that $\mathcal{W}_{\nu}$ is nonempty.
Our goal is to show that the linear program associated with the robustness of magic can be used to solve the one-sided problem $\rho$-\class{WMEM}$^{1}_{\delta}(\hat{\mathcal{S}}_{n})$ (see~\cref{prob:1rweak}). The full weak membership problem \class{WMEM}$_{\varepsilon}(\hat{\mathcal{S}}_{n})$ then follows by invoking~\cref{lem:onesideequivalentbothss}.  

We begin with the \class{YES} case. Let $\rho \in \hat{\mathcal{S}}_{n}$. By construction, any admissible witness $W \in \mathcal{W}_{\nu}$ satisfies $\tr(W\rho)\le 1$. Since the linear program is solved only up to additive error $\nu$, the algorithm may return a value of at most $1+\nu$. Therefore, no \class{YES} instance can be misclassified as \class{NO}, provided that for every $\rho \notin S(\hat{\mathcal{S}}_{n},\delta)$ the value returned by the algorithm is strictly larger than $1+\nu$.

We now turn to the \class{NO} case. Let $\rho$ be a state such that $    \min_{\sigma \in \hat{\mathcal{S}}_{n}} \|\rho-\sigma\|_2 \ge \delta$ . Let us show that there exists a witness $W$ such that $\tr(W\rho)>1+\nu$. By \cref{witweak}, there exists a Hermitian operator $\tilde W$ satisfying
\begin{align}
    \tr(\tilde W\rho) > \max_{\sigma\in\hat{\mathcal{S}}_{n}}\tr(\tilde W\sigma) + \delta^2 .
\end{align}

We now show how to rescale $\tilde W$ into a valid element of $\mathcal{W}_{\nu}$. Recall that for $W\in\mathcal{W}_{\nu}$ the constraint $\tr[(W+H)\sigma]\le 1$ for all $\sigma\in\hat{\mathcal{S}}_{n}$ implies, by linearity and Hölder's inequality, that $\tr(W\sigma)\le 1-\nu$. Define
\begin{align}
    W \coloneqq \frac{\tilde W}{\max_{\sigma\in\hat{\mathcal{S}}_{n}}\tr(\tilde W\sigma)} - \nu I .
\end{align}
By construction, $W\in\mathcal{W}_{\nu}$.  Evaluating $W$ on $\rho$ yields
\begin{align}
    \tr(W\rho)
    &= \frac{\tr(\tilde W\rho)}{\max_{\sigma\in\hat{\mathcal{S}}_{n}}\tr(\tilde W\sigma)} - \nu \\
    &> 1 + \frac{\delta^2}{\max_{\sigma\in\hat{\mathcal{S}}_{n}}\tr(\tilde W\sigma)} - \nu .
    \nonumber
\end{align}
In order to distinguish \class{YES} from \class{NO} instances, we require $\tr(W\rho)>1+\nu$, which is ensured if
\begin{align}
    \frac{\delta^2}{\max_{\sigma\in\hat{\mathcal{S}}_{n}}\tr(\tilde W\sigma)} > 3\nu ,
\end{align}
where we are accounting for the additive error $\nu$. Using the bound
\begin{align}
    \max_{\sigma\in\hat{\mathcal{S}}_{n}}\tr(\tilde W\sigma)
    \le \|\tilde W\|_2
    = \|\rho-\tau\|_2
    \le 2 ,
\end{align}
we obtain the sufficient condition $\nu \le \delta^2/6$.

Finally, note that the explicit construction above also shows that the feasible set $\mathcal{W}_{\nu}$ is nonempty, completing the implications that computing the robustness offers a way to perform $1$-sided $\rho$-$\class{WMEM}_{\delta}^{1}(\hat{\mathcal{S}}_{n})$. By~\cref{lemma:rmemwmem}, we know that $\rho$-$\class{WMEM}_{\delta}^{1}(\hat{\mathcal{S}}_{n})$ and $\rho$-$\class{WMEM}_{\varepsilon}(\hat{\mathcal{S}}_{n})$ are polynomially equivalent, as long $\delta=\frac{\varepsilon^3}{2^{12} \cdot 3 \cdot d^{12}}$, meaning that in order to solve $\rho$-$\class{WMEM}_{\varepsilon}$, $\nu\le \frac{\varepsilon^6}{2^{25} \cdot 3^3 \cdot d^{24}}$, implying the desired runtime for the algorithm.

    \end{proof}
\end{lemma}

\subsection{Weak witness detection}

In this section, we prove one of the main results of the paper, namely that determining whether a Hermitian operator $W$ functions as a witness for the stabilizer polytope does not belong to $\class{QP}^{2-\eta}$ for any $\eta>0$ under the ETH assumption \cref{def:eth}.

\begin{theorem}[Superpolynomial complexity of $\class{WWD}_{\delta}(\hat{\mathcal{S}_n})$]\label{thm:nphardwwd}
    Let $\hat{\mathcal{S}_n}$ denote the stabilizer polytope on $n$ qubits. Under the \emph{exponential time hypothesis} (ETH) (see Definition~\ref{def:eth}), for any $\delta \leq \frac{1}{d^{9/2}\log^6d}$, the decision problem $\class{WWD}_{\delta}(\hat{\mathcal{S}_n})$ is not contained in the complexity class $\class{QP}^{2-\eta}$ for any $\eta > 0$.
\end{theorem}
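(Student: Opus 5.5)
The plan is a reduction from $3$-\class{SAT} on $D=\Theta(n^2)$ variables to $\class{WWD}_{\delta}(\hat{\mathcal{S}}_n)$, computable in time $\class{poly}(d)$. Given such a reduction, an algorithm for $\class{WWD}_\delta$ in time $2^{O(\log^{2-\eta} d)}$ would decide $3$-\class{SAT} in time $2^{O(D^{1-\eta/2})}$, contradicting ETH (\cref{def:eth}). The relevant stabilizer states are \emph{graph states}. The Boolean variables of the formula are identified with the potential edges $\{u,v\}$ of an $n$-vertex graph $G$ (after padding and relabelling, $D\le n(n-1)/2$). We then aim to build a Hermitian $W$ with $\|W\|_2\le 1$ and a threshold $\gamma$ such that $\max_{\sigma\in\hat{\mathcal{S}}_n}\tr(W\sigma)=\max_{\psi\in\mathcal{S}_n}\langle\psi|W|\psi\rangle$ is at least $\gamma+\delta$ if the formula is satisfiable and at most $\gamma-\delta$ otherwise. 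Since the maximum of a linear functional is attained at pure stabilizer states, the $\pm\delta$ fattening of the polytope in \cref{problem:WWD} only costs an additive $\delta\|W\|_2$, which we absorb by doubling the gap.

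\textbf{Completeness: clause gadgets.} The workhorse is the following computation. For a graph state $|G\rangle$ with adjacency matrix $\Gamma$, a set $T$ of vertices and an operator $Q$ diagonal on the qubits outside $T$, the quantity $\langle G|X_T\otimes Q|G\rangle$ equals a sign $(-1)^{|E(G[T])|}$ times the normalised Fourier coefficient of the diagonal of $Q$ at the vector $\sum_{v\in T}\Gamma_{v,\cdot}$. This follows from $|G\rangle=2^{-n/2}\sum_x(-1)^{x^T\Gamma_{\mathrm{up}}x}|x\rangle$.

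Choosing the diagonal of $Q$ as the inverse Fourier transform of a chosen function $f$ of the neighbourhood pattern, we can realise any polynomial in the edges incident to $T$. For a clause on edges $e_1,e_2,e_3$, we first preprocess the formula so that every clause involves edges with a common endpoint, or edges inside a small vertex set $T$ with $|T|\le 3$. We do this by routing each clause through fresh auxiliary edge-variables tied by equality constraints, which keeps the variable count $\Theta(n^2)$. We then take $W_c$ to reproduce the indicator that clause $c$ is satisfied.

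Setting $W=\kappa\sum_c W_c$, with $\kappa$ chosen so that $\|W\|_2\le1$, a satisfying assignment gives a graph state of value $\kappa\, n_c$, while any unsatisfying graph gives at most $\kappa(n_c-1)$. This yields a gap of order $\kappa\sim 1/(n_c\,\mathrm{poly}(d))$. Tracking the normalisations, I expect $\kappa$ to land at the claimed $\delta\le 1/(d^{9/2}\log^6 d)$, where the $\log^6 d$ comes from $n_c=O(D^3)=O(n^6)$.

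\textbf{Soundness: the main obstacle.} The hard step is that the maximisation runs over \emph{all} stabilizer states, not just graph states. Every stabilizer state is $|G\rangle$ up to a local Clifford $\bigotimes_v C_v$, and a local Clifford could rotate $X_T\otimes Q$ into something with larger overlap. My first attempt is to symmetrise: add penalty terms, for example $\lambda\sum_v(\text{projector onto }X_v\text{-type stabilizer structure})$, or twirl each gadget over the relevant local subgroup. The goal is that any stabilizer state not of graph form in the fixed local frame loses at least $\Omega(\kappa)$ relative to the graph-state optimum. Then the maximum over $\mathcal{S}_n$ is attained, up to the gap, on graph states, whose values encode the formula.

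Concretely, I would bound $\langle\psi|W|\psi\rangle$ for $\psi=C|G\rangle$ with $C$ local Clifford by expanding $W$ in Paulis. A stabilizer state has expectation $\pm1$ on exactly $2^n$ Paulis and $0$ on the rest, so the value is a signed sum over $W$'s Pauli support restricted to a Lagrangian subgroup. The design target is that the only Lagrangian subgroups hitting many heavy Pauli terms of $W$ with positive sign are those of graph states in the computational frame. If this combinatorial step resists, the fallback is to use more copies of each variable, i.e.\ encode on a subset of qubits while the remaining qubits act as a frame fixer, at the cost of a constant factor in $D$. With soundness in hand, the theorem follows from the ETH argument above.
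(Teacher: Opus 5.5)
Your completeness idea matches the paper's: edges of an $n$-vertex graph state serve as the $\Theta(n^2)$ SAT variables. \textbf{The gap is soundness, and for your gadgets it is not a technicality: every formula is mapped to a \class{YES} instance.} The identity $\langle G|X_v\otimes Q|G\rangle=2^{1-n}\sum_{y}(-1)^{y\cdot N(v)}f(y)$, with $Q=\mathrm{diag}(f)$ and $N(v)$ the neighbourhood of $v$, must reproduce $g(\Gamma_{va},\Gamma_{vb},\Gamma_{vc})$ for \emph{all} graphs. Fourier inversion then forces $f(y)=\sum_{u\in\{0,1\}^{n-1}}(-1)^{u\cdot y}g(u_{abc})$, i.e.\ $Q=2^{n-4}\,\mathrm{diag}(\tilde g)_{abc}\otimes|\mathbf 0\rangle\langle\mathbf 0|_{\mathrm{rest}}$ with $\tilde g(y)=\sum_{u\in\{0,1\}^3}(-1)^{u\cdot y}g(u)$. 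So each gadget has operator norm $\Theta(d)$, concentrated next to $|\mathbf 0\rangle$. The stabilizer state $|+\rangle_v\otimes|\mathbf 0\rangle$ scores:
\begin{itemize}
\item $2^{n-4}\tilde g(000)=7d/16$ on a clause gadget centred at $v$;
\item nonnegative amounts on the other gadgets centred at $v$;
\item $0$ on gadgets centred at $w\neq v$, because their $X_w$ flips a qubit that is in $|0\rangle$.
\end{itemize}
A graph state scores at most $1$ per gadget. Hence $\max_{\sigma\in\mathcal{S}_n}\tr(W\sigma)=\Omega(\kappa d)\gg\kappa n_c$, whether or not the formula is satisfiable. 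The remedies you sketch (twirling, a Lagrangian-subgroup count, a ``frame fixer'') are not carried out. Twirling over local Cliffords would also wipe out the graph-state values that completeness relies on.

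\textbf{The missing idea is a penalty that vanishes exactly on graph states and is large on every other stabilizer state.} It comes from the computational-basis normal form. A stabilizer state is supported on an affine subspace of dimension $k$, with flat amplitudes $2^{-k/2}$ and relative phases in $\{\pm1,\pm i\}$. Graph states are precisely those with $\langle\mathbf 0|\sigma\rangle=\langle e_i|\sigma\rangle=d^{-1/2}$ for all $i$, up to a global phase. Set $T_i=\frac d4(|\mathbf 0\rangle-|e_i\rangle)(\langle\mathbf 0|-\langle e_i|)$ and $P=\mu\sum_iT_i+\lambda(I-d|\mathbf 0\rangle\langle\mathbf 0|)$ with $\mu\ge4\lambda$. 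Then $\langle P\rangle$ is:
\begin{itemize}
\item $0$ on graph states;
\item $\ge\mu/2$ on full-support states with some phase $\phi_i\neq1$;
\item $\ge 2^{n-k}(\mu/4-\lambda)+\lambda\ge\lambda$ on supports of dimension $k<n$ containing $\mathbf 0$, since some $e_i$ is missing;
\item $\ge\lambda$ on supports avoiding $\mathbf 0$.
\end{itemize}
Choosing $\lambda>\|W_{\mathrm{SAT}}\|_\infty+1$ and maximising $W_{\mathrm{SAT}}-P$ reduces the optimisation over $\mathcal{S}_n$ to graph states. This is the role of the paper's $S(i)$ and $|\mathbf 0\rangle$-overlap penalties (the sets $\mathcal{C}_{2n}$, $\mathcal{T}_{2n}$).

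Be careful if you follow that chain literally. No affine function of $|\langle\mathbf 0|\sigma\rangle|^2$ alone can vanish at full support and be positive on both sides. The paper's $H^{(3)}$ and $H^{(4)}$, as written, add $(\|H\|_\infty+1)\bigl[(d^2-d^4)|\mathbf 0\rangle\langle\mathbf 0|+(d^2-1)I\bigr]$ to $H^{(2)}$. This makes $\langle\mathbf 0|H^{(4)}|\mathbf 0\rangle<0$ for every instance, so a weighting like the one above, where the $T_i$ absorb the negative part, is needed.

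The paper also avoids your SAT routing. It encodes on $|G\rangle^{\otimes2}$ with terms $X_\alpha\otimes Y_{\beta\gamma}$, then forces the two-copy form with cut and equality penalties. If you keep single-copy star gadgets, you must show the routing fits into $O(\sqrt D)$ vertices. That needs sparsification to $n_c=O(D)$ and an explicit edge assignment, which is incompatible with your $n_c=O(D^3)$ accounting of the $\log^6 d$ factor.

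Finally, the \class{YES} side of \cref{problem:WWD} needs a point of $S(\hat{\mathcal{S}}_n,-\delta)$. The generic bound on its distance from the optimal vertex is $O(\delta R/r)=O(\delta d)$, not $\delta$, so the fattening costs more than $\delta\|W\|_2$.
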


\begin{proof} The overall strategy for proving the theorem is to reduce the weak-witness detection problem to another decision problem, whose solution requires resolving a general $3$-\class{SAT} instance on $ O(\log^2 d)$ variables. Under the ETH (see \cref{def:eth}), this requires time $2^{\Omega(\log^2 d)}$. Since the input of the problem is a $d \times d$ density matrix, this implies that $\class{WWD}_{\delta}(\hat{\mathcal{S}}_n)$ does not lie in $\class{QP}^{2-\eta}$ for any $\eta > 0$, see \cref{def:quasipolyclasses}. In particular, we construct a Hamiltonian $H$ acting on $\log^2 d$ qubits of the type in \cref{hamiltonianformulation3sat}.

The proof is technically involved. Its key ingredient is the immediate observation, established below, that optimising a Hamiltonian over two copies of graph states already requires solving a $3$-\class{SAT} problem with $O(\log^2d)$ variables. The remainder of the proof introduces suitable penalty terms that encode this optimisation into the problem of optimising a Hamiltonian over a single stabilizer state, thereby completing the argument.\smallskip

\paragraph{\bf Local Hamiltonian.}
This part of the proof constructs an instance of the local Hamiltonian problem. The key insight exploits the structure of stabilizer states, specifically focusing on graph states \cite{Graphs,Hein06}.
Recall that a graph state vector \(\ket{G(A)}\) on \(n\) qubits, associated with an adjacency matrix \(A\), is defined as
\begin{equation}\label{eq:graphstates}
	\ket{G(A)} \coloneqq \prod_{(i,j): i < j} CZ_{i,j}^{A_{i,j}} \ket{+}^{\otimes n}
    = \frac{1}{\sqrt{d}} \sum_{\mathbf{x}\in \{0,1\}^n}
    (-1)^{\tfrac{1}{2}\mathbf{x}^T A \mathbf{x}} \ket{\mathbf{x}}.
\end{equation}
For each pair \((i,j)\), let $s_{ij}\in\{0,1\}$ and define the operators  
\begin{equation}\label{eq:operators graph}
\begin{aligned}
    X^{s_{ij} }(i,j) &\coloneqq d/4 \, \bigl(\ket{\mathbf{0}}+(-1)^{s_{ij}}\ket{i,j}\bigr) \bigl(\bra{\mathbf{0}}+(-1)^{s_{ij}}\bra{i,j}\bigr), \\
    Y^{s_{ij},s_{kl}}(i,j,k,l) &\coloneqq d/4 \, \bigl(\ket{\mathbf{0}}+(-1)^{s_{ij}}\ket{i,j}\bigr)\bigl(\bra{\mathbf{0}}+(-1)^{s_{kl}}\bra{k,l}\bigr),
\end{aligned}
\end{equation}
where \(\ket{i,j}\) denotes the computational basis state of Hamming weight \(2\) with ones at positions \(i\) and \(j\). 
Their expectation values on the graph state 
state vector \(\ket{G(A)}\) are given by  
\begin{equation}\label{eq:expectationoperator}
\begin{aligned}
    \matrixel{G(A)}{X^{s_{ij}}(i,j)}{G(A)} &= \frac{\bigl(1+(-1)^{s_{ij}}(-1)^{A_{i,j}}\bigr)^2}{4},\\
    \matrixel{G(A)}{Y^{s_{ij},s_{kl}}(i,j,k,l)}{G(A)} &= \frac{\bigl(1+(-1)^{s_{ij}} (-1)^{A_{i,j}}\bigr)\bigl(1+(-1)^{s_{kl}} (-1)^{A_{k,l}}\bigr)}{4}.
\end{aligned}
\end{equation}
Since there is a one-to-one correspondence between the operators \(X(i,j)\) and the matrix entries \(A_{i,j}\), it suffices to consider \(X(i,j)\) with \(j > i\). Similarly, for \(Y(i,j,k,l)\), we can restrict to \(i < j < k < l\). We, therefore, relabel them as
\begin{equation}
\{X_{\alpha}^{s_{\alpha}}\}_{\alpha=1}^{n(n-1)/2} \quad \text{and} \quad \{Y_{\beta\gamma}^{s_{\beta},s_{\gamma}}\}_{\beta<\gamma}^{n(n-1)/2},
\end{equation}
with each \(\alpha,\beta,\gamma\) corresponding to a specific pair \((i,j)\) and $s_{\alpha},s_{\beta},s_{\gamma}\in\{0,1\}$ denote the sign of the operator.
Now consider the  Hamiltonian 
\begin{equation}\label{eq:Hamiltonian}
    H = \sum_{\alpha<\beta<\gamma} X_{\alpha}^{s_{\alpha}} \otimes \frac{(Y_{\beta,\gamma}^{s_{\beta},s_{\gamma}}+Y_{\beta,\gamma}^{s_{\beta},s_{\gamma}\dag})}{2}
\end{equation}
acting on two copies of the Hilbert space \(\mathcal{H}^{\otimes 2}\). 
Its expectation value on two copies of the graph state is
\begin{equation}\label{eq:exphamiltonian}
\begin{aligned}
    \langle{G(A)}\vert^{\otimes 2}H\vert G(A)\rangle^{\otimes 2}
    &= \sum_{\alpha<\beta<\gamma}
        \left(\frac{1+(-1)^{A_{\alpha}+s_{\alpha}}}{2}\right)^{2}
        \left(\frac{1+(-1)^{A_{\beta}+s_{\beta}}}{2}\right)
        \left(\frac{1+(-1)^{A_{\gamma}+s_{\gamma}}}{2}\right) \\
    &= \sum_{\alpha<\beta<\gamma} (x_{\alpha}\oplus{s_{\alpha}})(x_{\beta}\oplus{s_{\beta}})(x_{\gamma}\oplus {s_{\gamma}}),\nonumber
\end{aligned}
\end{equation}
where $x_{\alpha}^2=x_{\alpha}$ and
\begin{equation}
x_{\alpha}\coloneqq  \left(\frac{1+(-1)^{A_{\alpha}}}{2}\right)\in \{0,1\}\,.
\end{equation}
Moreover $\oplus$ denotes the binary sum modulo $2$. We have successfully encoded a generic $3$-\class{SAT} instance (see \cref{hamiltonianformulation3sat}) in the expectation values of graph states.
Let us now define $\mathcal{G}_n^{(2)} \coloneqq \{ \ket{G(A)}^{\otimes 2} : A \in \mathbb{F}_2^{\binom{n}{2} \times \binom{n}{2}} \}$ as the set of identical bipartite graph states on $2n$ qubits. Consider a Hamiltonian $H$ as in \cref{eq:Hamiltonian}, with arbitrary $\alpha, \beta, \gamma \in [n(n-1)/2]$. We consider the following decision problem:
\begin{itemize}
    \item Output \class{YES}, if $\max_{\ket{\psi} \in \mathcal{G}_n^{(2)}} \langle \psi | H | \psi \rangle = 0$.
    \item Output \class{NO}, if $\max_{\ket{\psi} \in \mathcal{G}_n^{(2)}} \langle \psi | H | \psi \rangle \ge 1$.
\end{itemize}
This problem is at least as hard as solving a $3$-\class{SAT} instance with ${n(n-1)}/{2}=\Omega(\log^2d)$ variables. Assuming the ETH  (\cref{def:eth})
again, such an instance requires time $2^{\Omega(\log^2d)}$, and therefore the above decision problem cannot belong to the class $\class{QP}^{2-\eta}$ for any $\eta > 0$.
\smallskip

\paragraph{\bf Extension to single copy graph state.}
The previous step alone does not suffice to establish the hardness of $\class{WWD}(\hat{\mathcal{S}}_n)$, as it only demonstrates hardness for 2-copy graph states. The next step, therefore, shows that the problem remains hard even for single-copy graph states by reducing the 2-copy case to the single-copy case. The key idea is to extend the original Hamiltonian \(H\) to a new operator \(H^{(1)}\) such that minimising its expectation value over single-copy graph states is at least as hard as minimising $H$ over the set $\mathcal{G}_n^{(2)}$. 
To proceed, let us notice that the set 
\begin{equation}
    \mathcal{G}_{n}^{(2)}\subset \{\ket{G(A)}\,:\, A\in\mathbb{F}_2^{\binom{2n}{2}\times \binom{2n}{2}}\}\eqqcolon\mathcal{G}_{2n}.
    \end{equation}
    Hence, our strategy is to force, through a penalty Hamiltonian, the low energy states to lie in $\mathcal{G}_{n}^{(2)}$. Let us label the entries of the adjacency matrix as \(i_A\) for the first \(n\) columns and \(i_B\) for the last \(n\) columns, with the same convention applied to rows. At this point, we define the operator \begin{align}
    \bar{X}(i,j) \coloneqq \frac{d^2}{4}(\ket{\mathbf 0}-\ket{i,j})(\bra{\mathbf 0}-\bra{i,j}). 
    \end{align}
    which coincides with $X^{1}$ in \cref{eq:operators graph}. 
    Its expectation value over \(\ket{G(A)} \in \mathcal{G}_{2n}\) is given by
\begin{align}
    \langle G(A)|\bar{X}(i_A,j_B)|G(A)\rangle = \left(\frac{1-(-1)^{A_{i_A,j_B}}}{2}\right),
\end{align}
where we observe that this expectation value vanishes if and only if \(A_{i_A,j_B} = 0\).
We further introduce the operator
\begin{align}
W(i_A,j_A,k_B,l_B) \coloneqq d^2/4(\ket{\mathbf 0}-\ket{i_A,j_A} \otimes \ket{k_B,l_B})(\bra{\mathbf 0}-\bra{i_A,j_A} \otimes \bra{k_B,l_B}),
\end{align}
whose expectation value over \(\ket{G(A)} \in \mathcal{G}_{2n}\) reads
\begin{align}
   \langle G(A)|W(i_A,j_A,k_B,l_B)|G(A)\rangle = \frac{1-(-1)^{A_{i_A, j_A}+A_{k_B, l_B}}}{2},
\end{align}
which vanishes if and only if \(A_{i_A, j_A} = A_{k_B, l_B}\).

It follows that incorporating these terms into the Hamiltonian preserves the minimum obtained on two copies of the graph states while penalising all other states in the minimisation problem. We, therefore, consider the following Hamiltonian
\begin{align}
    H^{(1)}=H+(\|H\|_{\infty}+1)\left[\sum_{i,j}\bar{X}(i_A,j_B)+\sum_{i\neq j}W(i_A,j_A,i_B,j_B)\right].
\end{align}
By construction, we have 
\begin{align}
    \langle G(A)|H^{(1)}|G(A)\rangle=\langle G(A)|H|G(A)\rangle\,\quad\forall \ket{G(A)}\in\mathcal{G}_{n}^{(2)}
\end{align}
since the added positive operator vanishes if and only if the graph on \(2n\) vertices satisfies two conditions: first, there are no edges between the first \(n\) vertices (labeled by \(A\)) and the last \(n\) vertices (labeled by \(B\)); second, the connection pattern among the \(A\) vertices exactly matches that among the \(B\) vertices.

Let us now show that optimising the energy of $H^{(1)}$ over $\mathcal{G}_{2n}$  is at least as hard as optimising $H$ over $\mathcal{G}_{n}^{(2)}$. 
\begin{itemize}
    \item \class{YES.} Assume there exists a graph state vector $\ket{G(A)}\in\mathcal{G}_{n}^{(2)}$ such that $\langle G(A)|H|G(A)\rangle=0$, then, by the argument above, we also have $\langle G(A)|H^{(1)}|G(A)\rangle=0$.
    \item \class{NO.} Assume that $\min_{\ket{G(A)}\in \mathcal{G}_{n}^{(2)}}\langle G(A)|H|G(A)\rangle\ge 1$ (i.e., the second-to-last lowest value for $H$ over $\mathcal{G}_{n}^{(2)}$), then for any $\ket{G(A)}\in\mathcal{G}_{2n}\setminus \mathcal{G}_{n}^{(2)}$, we have
    \begin{align}
        \langle G(A)|H^{(1)}|G(A)\rangle\ge (\|H\|_{\infty}+1)-|\langle G(A)|H|G(A)\rangle|\ge 1\,,
    \end{align}
    where we have exploited the fact that the expectation value of $H$ is strictly positive (and greater than $1$) for graph states $\not\in\mathcal{G}_n^{(2)}$. This proves the \class{NO} instance.
\end{itemize}
\smallskip

\paragraph{\bf Extension to maximally coherent stabilizer states.} At this point, we have established that optimising a Hamiltonian over the set of graph states on $2n$ qubits reduces to solve a $3$-\class{SAT} with ${n(n-1)}/{2}$ variables. We now demonstrate that Hamiltonian optimisation over the set of \textit{maximally coherent stabilizer states} in the computational basis is at least as hard as the corresponding problem for graph states.
The set of maximally coherent stabilizer states in the computational basis is formally defined as
\begin{align}
    \mathcal{C}_{n} \coloneqq \left\{\bigotimes_{i=1}^{n} U_i\ket{G(A)} : \ket{G(A)} \in \mathcal{G}_{n}, \, U_i \in \{Z, S, SZ\}\right\},
\end{align}
meaning that each maximally coherent state is locally unitarily equivalent to a graph state under diagonal local unitaries. These unitaries \(U_i\) are restricted to the Pauli-\(Z\) basis and include the phase gate \(S\) and its composition with \(Z\).
The action of such a diagonal unitary \(U_i = \operatorname{diag}(1, \phi_i)\) with \(|\phi_i| = 1\) transforms a graph state as follows
\begin{align}
    \bigotimes_{i} U_i \ket{G(A)} = \frac{1}{\sqrt{d}} \sum_{x} (-1)^{\sum_{i,j} A_{i,j} x_i x_j} \prod_{i=1}^n \phi_i^{x_i} \ket{x}.
\end{align}
Here, each basis vector  \(\ket{x}\) acquires an additional phase factor \(\prod_i \phi_i^{x_i}\) that depends on the unitary operations applied.

To analyse the effect of these unitaries, consider the operator
\begin{align}
S(i) \coloneqq \frac{d^2}{4} (\ket{\mathbf 0} - \ket{i})(\bra{\mathbf 0} - \bra{i}),  
\end{align}
where \(\ket{i}\) denotes the basis state with a single \(1\) at position \(i\). The expectation value of \(S(i)\) over a transformed state \(\bigotimes_{i} U_i \ket{G(A)}\) evaluates to \({|1 - \phi_i|^2}/{4}\). Crucially, this expectation value attains its minimum value of \(0\) if and only if \(\phi_i = 1\), which corresponds to the case where no nontrivial diagonal unitary is applied to the \(i\)-th qubit. This property will be essential for our hardness reduction, as it allows us to detect and penalise deviations from the original graph state structure.
We, therefore, 
extend the Hamiltonian as
\begin{align}
    H^{(2)} = H^{(1)} + 2(\|H\|_{\infty} + 1) \sum_{i=1}^{2n} S(i).
\end{align}
By construction, we have \(\langle\sigma|H^{(2)}|\sigma\rangle = \langle\sigma|H^{(1)}|\sigma\rangle\) for all \(\ket{\sigma} \in \mathcal{G}_{2n}\), since \(\mathcal{G}_{2n} \subseteq \mathcal{C}_{2n}\) and the operators \(S(i)\) vanish when acting on states \(\mathcal{G}_{2n}\). We now demonstrate that optimising \(H^{(2)}\) over the set \(\mathcal{C}_{2n}\) is at least as hard as optimising \(H^{(1)}\) over \(\mathcal{G}_{2n}\), thereby establishing that the former problem requires time $2^{\Omega(n^2)}$.

\begin{itemize}
    \item \class{YES.} Assume that $\min_{\ket{\sigma}\in\mathcal{G}_{2n}}\langle\sigma|H^{(1)}|\sigma\rangle=0$; then, by the argument above, we also have $\min_{\ket{\sigma}\in\mathcal{C}_{2n}}\langle\sigma|H^{(2)}|\sigma\rangle=0$.
    \item \class{NO.} Assume that $\min_{\ket{\sigma}\in\mathcal{G}_{2n}}\langle\sigma|H^{(1)}|\sigma\rangle\ge 1$, then for any $\ket{\sigma}\in\mathcal{C}_{2n}\setminus\mathcal{G}_{2n}$, we have 
    \begin{align}
        \langle\sigma|H^{(2)}|\sigma\rangle&\ge 2(\|H\|_{\infty}+1)\frac{1}{2}+\langle\sigma|H|\sigma\rangle+\langle\sigma|(H^{(1)}-H)|\sigma\rangle\\&
        \ge (\|H\|_{\infty}+1)-|\langle\sigma|H|\sigma\rangle|\\&\ge (\|H\|_{\infty}+1)-\|H\|_{\infty}\\&= 1,
    \end{align}
where we have exploited the fact that $\phi\in\{\pm i,\pm i\}$ for diagonal Clifford operators. Moreover, we noticed that the operator $H^{(1)}-H$ is a positive operator. 
\end{itemize}

\paragraph{\bf Extension to stabilizer states with exponentially small overlap with $\ket{\mathbf 0}$.}
In the previous step, we showed that minimising \(H\) over the set of maximally coherent stabilizer states requires exponential time. We now define the following set:
\begin{align}
    \mathcal{T}_{n} \coloneqq \left\{\ket{\sigma} \in \mathcal{S}_n : |\langle \sigma | \mathbf 0 \rangle|^2 \geq \frac{1}{d}\right\},
\end{align}
which consists of stabilizer states whose overlap with the \(\ket{\mathbf 0}\) state is at least $1/d$. Since \(\mathcal{C}_{2n} \subseteq \mathcal{T}_{2n}\), we aim—as in previous steps—to penalise states outside this set, thereby preserving the hardness of our solution.
 Let us define the  Hamiltonian
\begin{align}
    H^{(3)}\coloneqq H^{(2)}+d^2(\|H\|_{\infty}+1)\left(\ketbra{\mathbf 0}{\mathbf 0}-\frac{I}{d^2}\right),
\end{align}
where $\ket{\mathbf 0}$ is defined on $2n$ qubits, and show that minimising $H^{(3)}$ over $\mathcal{T}_{2n}$ is at least as hard as optimising $H^{(2)}$ over $\mathcal{C}_{2n}$.
The strategy is the same; analysing the $\class{YES}$ and $\class{NO}$ instances, we have the following:
\begin{itemize}
    \item \class{YES}. Assume that $\min_{\ket{\sigma}\in\mathcal{C}_{2n}}\langle\sigma|H^{(2)}|\sigma\rangle=0$. Since the  $\operatorname{argmin}\ket{\sigma}$ is maximally coherent, it holds that $|\langle\sigma|\mathbf 0\rangle|^{2}=1/d^2$ and therefore $\langle\sigma|H^{(3)}|\sigma\rangle=0$. 
    \item \class{NO.} Assume that $\min_{\sigma\in\mathcal{C}_{2n}}\langle\sigma|H^{(2)}|\sigma\rangle\ge 1$. Then, for any $\ket{\sigma}\in \mathcal{T}_{2n}\setminus\mathcal{C}_{2n}$, we have 
\begin{align}
   \langle\sigma|H^{(3)}|\sigma\rangle&\ge d^2(\|H\|_{\infty}+1)\left(|\langle\sigma|\mathbf 0\rangle|^2-\frac{1}{d^2}\right)+\langle\sigma|H^{(3)}-H|\sigma\rangle-|\langle\sigma|H|\sigma\rangle|\\&\ge \frac{d^2(\|H\|_{\infty}+1)}{d^2}-\|H\|_{\infty}\ge 1.
\end{align}
Above, we have used the fact that if $\ket{\sigma}$ is not maximally coherent, then the overlap with $\ket{\mathbf 0}$ must be $\ge 2/d^2$. We also used the fact that, similarly to above, the operator $H^{(2)}-H$ is a positive operator.

Hence, optimising $H^{(3)}$ over the set $\mathcal{T}_{2n}$ is at least as hard as optimising $H^{(2)}$ on $\mathcal{C}_{2n}$ and therefore requires solving a $3$-\class{SAT} with $n(n-1)/2$ variables. 
\end{itemize}

\paragraph{\bf Extension to stabilizer states.}
We now demonstrate that determining the eigenenergy difference of a Hamiltonian over the set of stabilizer states \(\mathcal{S}_{2n}\) is at least as hard as optimising it over \(\mathcal{T}_{2n}\). The key distinction between these sets is that states in \(\mathcal{T}_{2n}\) maintain a non-vanishing overlap with \(\ket{\mathbf 0}\), whereas \(\mathcal{S}_{2n}\) includes states orthogonal to \(\ket{\mathbf 0}\). To establish this relationship, we consider the modified Hamiltonian
\begin{align}
    H^{(4)} = H^{(3)} + (\|H\|_{\infty}+\|H^{(3)}-H^{(2)}\|_{\infty} + 1)\left[I - d^2 \ketbra{\mathbf 0}{\mathbf 0}\right],
\end{align}
and define the following decision problem for $H^{(4)}$:
\begin{itemize}
    \item $\min_{\ket{\sigma}\in\mathcal{S}_{2n}}\langle\sigma|H|\sigma\rangle\le 0$ output \class{YES}.
    \item $\min_{\ket{\sigma}\in\mathcal{S}_{2n}}\langle\sigma|H|\sigma\rangle\ge 1$ output \class{NO}.
\end{itemize}
Let us show, in a similar fashion to what was done before, that solving the above decision problem for $H^{(4)}$ is at least as hard as optimising $H^{(3)}$ on $\mathcal{T}_{2n}$. 
\begin{itemize}
    \item \class{YES.} Assume $\min_{\ket{\sigma}\in\mathcal{T}_{2n}}\langle\sigma|H^{(3)}|\sigma\rangle=0$. Let $\ket{\sigma}$ be the $\operatorname{argmin}$. Then we have
    \begin{align}
    \langle\sigma| H^{(4)}|\sigma\rangle=(\|H\|_{\infty}+\|H^{(3)}-H^{(2)}\|_{\infty} + 1)(1-d^2|\langle\mathbf 0|\tau\rangle|^2)\le 0,
\end{align}
as an immediate and 
trivial consequence of the definition of $\mathcal{T}_{2n}$. 

\item \class{NO.} Assume $\min_{\ket{\sigma}\in\mathcal{T}_{2n}}\langle\sigma|H^{(3)}|\sigma\rangle\ge1$. For any $\ket{\sigma}\in\mathcal{S}_{2n}\setminus\mathcal{T}_{2n}$ we have 
\begin{align}
    \langle\sigma|H^{(4)}|\sigma\rangle\ge (\|H\|_{\infty}+\|H^{(3)}-H^{(2)}\|_{\infty} + 1)+\langle \sigma|H^{(2)}-H|\sigma\rangle-|\langle\sigma|H+(H^{(3)}-H^{(2)})|\sigma\rangle|\ge 1,
\end{align}
which shows the claim. Notice that we decomposed $H^{(3)}=H+(H^{(2)}-H)+(H^{(3)}-H^{(2)})$ and noticed that $(H^{(2)}-H)\ge 0$.
\end{itemize}
\smallskip

\paragraph{\bf Reduction to $\class{WWD}_{\delta}$.}
 To complete the proof, we reduce our problem to an instance of \(\class{WWD}\) as defined in \cref{problem:WWD}. Let \(\hat{\mathcal{S}}_n\) denote the convex hull of stabilizer states \(\mathcal{S}_n\), with $S(\hat{\mathcal{S}}_{2n}, -\delta)$ and \(S(\hat{\mathcal{S}}_{2n}, \delta)\) representing its inner and outer approximations, respectively. Our \(\class{WWD}\) instance involves a Hermitian operator \(W\) with \(\|W\|_2 \leq 1\), where the decision problem is the following:
\begin{itemize}
    \item If there exists $\rho\in S(\hat{\mathcal{S}}_{2n}, -\delta)$ s.t. $\tr(W\rho)\ge \gamma+\delta$, then output \class{YES}.
    \item If $\forall \rho\in S(\hat{\mathcal{S}}_{2n}, \delta)$ holds that $\tr(W\rho)\le \gamma-\delta$, then output \class{NO}.
\end{itemize}
We now establish the connection to our Hamiltonian problem by defining 
\begin{equation}
W := -\frac{H^{(4)}}{\|H^{(4)}\|_2}.
\end{equation}
For the \(\class{YES}\) case, when \(\max_{\sigma \in \hat{\mathcal{S}}_{2n}} \tr(W\sigma) \geq 0\), there exists \(\rho \in S(\hat{\mathcal{S}}_{2n}, -\delta)\) satisfying \(\tr(W\rho) \geq \gamma + \delta\). Conversely, if \(\max_{\sigma \in \hat{\mathcal{S}}_{2n}} \tr(W\sigma) \leq -\frac{1}{\|H^{(4)}\|_2}\), then all \(\rho \in S(\hat{\mathcal{S}}_{2n},- \delta)\) satisfy \(\tr(W\rho) \leq \gamma - \delta\). Crucially, optimisation over the convex hull \(\hat{\mathcal{S}}_{2n}\) reduces to optimisation over its extreme points \(\mathcal{S}_{2n}\), meaning \(\max_{\sigma \in \hat{\mathcal{S}}_{2n}} \tr(W\sigma) = \max_{\sigma \in \mathcal{S}_{2n}} \langle \sigma | W | \sigma \rangle\). This reduction demonstrates that the Hamiltonian decision problem maps directly to \(\class{WWD}_\delta(\hat{\mathcal{S}}_n)\), thereby establishing its hardness.
To show it, set $\gamma
:=-\frac{1}{2\|H^{(4)}\|_2}$.
\begin{itemize}
    \item \class{YES.} Assume $\max_{\sigma\in\mathcal{S}_{2n}}\tr(W\sigma)\ge 0$. Since $\hat{\mathcal{S}}_{2n}$ is a convex set, there exists $\sigma_{\varepsilon}\in S(\hat{\mathcal S}_{2n},-\delta)$ such that $\|\sigma-\sigma_{\delta}\|_{2}\le 2\delta R/r$~\cite{bengtsson2006geometry}. We recall that $R\le 1$, $r\ge 1/d^2$ , because $H^{(4)}$ is defined on $2n$ qubits. This means that
    \begin{align}
        |\tr(W\sigma_{\delta})-\tr(W\sigma)|\le \|\sigma_{\delta}-\sigma\|_2\le \delta d^2,
    \end{align}
    implying that $\tr(W\sigma_{\delta})\ge \tr(W\sigma)-\delta d^2\ge -\delta d^2$, using that $\tr(W\sigma)\ge 0$. Imposing the latter, being $\ge\gamma+\delta$, we find $\delta\le \frac{1}{(1+d^2)\|H^{(4)}\|_2}$.

    \item \class{NO.} Assume that $\max_{\sigma\in\mathcal{S}_{2n}}\tr(W\sigma)\le -\frac{1}{\|H^{(4)}\|_2}$. By the definition of $S(\hat{\mathcal S}_{2n},\delta)$, for any $\rho\in S(\hat{\mathcal S}_{2n},\delta)$, there exists $\sigma\in\mathcal{S}_{2n}$ such that $\|\rho-\sigma\|_2\le \delta$, implying that $|\tr(W\rho)-\tr(W\sigma)|\le \delta$. 
    It 
    follows that, for any $\rho\in S(\hat{\mathcal{S}}_{2n},\delta)$
    \begin{align}
        \tr(W\rho)\le \tr(W\sigma)+\delta\le \max_{\sigma\in\mathcal{S}_{2n}}\langle\sigma|W|\sigma\rangle+\delta\le -\frac{1}{\|H^{(4)}\|_2}+\delta=2\gamma+\delta.
    \end{align}
Imposing the latter $\le \gamma-\delta$, we find $\delta\le -\gamma/2=\frac{1}{4\|H^{(4)}\|_2}$.
\end{itemize}
To accommodate both \class{YES} and \class{NO} instances, we take $\delta\le \frac{1}{4d^2\|H^{(4)}\|_2}$. Therefore, we have shown that with oracle access to $\class{WWD}_{\delta}(\hat{\mathcal{S}}_{2n})$ with parameters $\left(W\equiv H^{(4)}/\|H^{(4)}\|_{2},\gamma\equiv-\frac{1}{2\|H^{(2)}\|_2}\right)$ and any $\delta\le \frac{1}{4d^2\|H^{(4)}\|_2}$, it is possible to solve 
the optimisation of a Hamiltonian over the set of pure stabilizer states $\mathcal{S}_{2n}$, which we have demonstrated to be at least as hard as a $3$-\class{SAT} with input size $n(n-1)/2=\Omega(\log^2d)$. We are just left to find an upper bound to $\|H^{(4)}\|_2$, which 
is what we will turn to now.\smallskip
\paragraph{\bf Upper bound on $\|H^{(4)}\|_2$.} We then have the 
following chain of inequalities

\begin{equation}
    \begin{aligned}
        \|H^{(4)}\|_{2}& \le d\|H^{(4)}\|_{\infty} \\ 
        & \le d\qty(\|H^{(3)}\|_{\infty} + d^2(\|H\|_{\infty}+\|H^{(3)}-H^{(2)}\|_{\infty} + 1))\\
        & \le d\qty[\|H^{(3)}\|_{\infty} + d^2(d^2+1)(\|H\|_{\infty}+ 1)] \\
        & \le d\qty[\|H^{(2)}\|_{\infty} + d^2(d^2+2)(\|H\|_{\infty}+ 1) ] \\
        & \le d\qty[\|H^{(1)}\|_{\infty}+2d^2(\|H\|_{\infty}+1)\log d + d^2(d^2+2)(\|H\|_{\infty}+ 1)] \\
        & \le d\qty[\|H\|_{\infty}+ 2d^2 (\|H\|_{\infty}+1)\log^{2}d  + d^2(d^2+4\log d )(\|H\|_{\infty}+ 1)] \\
        & \le  d(\|H\|_{\infty}+1) (d^2 (d^2+6\log^2 d)) \le 2^4 d^{7}\log^6 d \,,
    \end{aligned}
\end{equation}
where we have used the definitions of $H^{(4)}, H^{(3)}, H^{(2)}, H^{(1)}$ and $H$, and the triangle inequality. Consequently, we obtain that $\|H^{(4)}\|_2\le 2^4 d^{7}\log^6 d $. Therefore, any $\delta\le \frac{1}{2^6d^{9}\log^6 d}$ suffices to establish the computational hardness of $\class{WWD}_{\delta}(\hat{\mathcal{S}}_{2n})$. To conclude, we are just left to map $d^2\mapsto d$. 
\end{proof}

Moreover, it is simple to see that $\class{WWD}(\hat{\mathcal{S}}_n)$ is contained in the complexity class $\class{QP}^2$, as the next lemma points out.

\begin{lemma}[$\class{WWD}_{\delta}(\hat{\mathcal{S}}_n)\in\class{QP}^2$] \label{wwdinqp2}The decision problem $\class{WWD}_{\delta}(\hat{\mathcal{S}}_n)$ belongs to $\class{QP}^2$.
\begin{proof}
    Given an operator $W$ with the threshold parameter $\gamma$ and the precision $\delta$, to solve $\class{WWD}_{\delta}(\hat{\mathcal{S}}_n)$ is sufficient to check the value of $\tr(W\sigma)$ for any $\sigma\in\mathcal{S}_n$ (see \cref{problem:WWD}) up to precision $\delta$. This algorithm takes $\class{poly}(2^{\log^2d},\log\delta^{-1})$. Therefore for any $\delta$ reasonably large the problem can be solved within the complexity class $\class{QP}^2$.
\end{proof}
    
\end{lemma}

\subsection{Weak membership in the convex hull of $t$-doped stabilizer states}
In this section, we extend the results obtained regarding the membership of stabilizer states to the more general notion of $t$-doped stabilizer states. Let us first start by examining the hardness of the decision problem $\rho$-$\class{WMEM}_{\varepsilon}(\hat{\mathcal{S}}_{n,t})$. To do so, we first prove that determining whether a Hermitian operator $W$ functions as a witness for the $t$-doped stabilizer polytope does not belong to $\class{QP}^{2-\eta}$ for any $\eta>0$ under the ETH assumption (see~\cref{def:eth}), and for any $t\le \kappa n$ with $\kappa<1$.
\begin{lemma}[Superpolynomial complexity of $\class{WWD}_{\delta}(\hat{\mathcal{S}}_{n,t})$]\label{wwdtdoped}
Let $\hat{\mathcal{S}}_{n,t}$ denote the $t$-doped stabilizer polytope on $n$ qubits and let $t\le \kappa n$ with $\kappa<1$. Under the ETH (Definition~\ref{def:eth}), for any
\begin{align}
 \delta \le \frac{1}{ 2 d^{31/2} \log^{30}d},
\end{align}
the decision problem $\class{WWD}_{\delta}(\hat{\mathcal{S}}_{n,t})$ is not contained in the complexity class $\class{QP}^{2-\eta}$ for any $\eta>0$.
\end{lemma}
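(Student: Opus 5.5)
The plan is to reduce the stabilizer-polytope witness problem of \cref{thm:nphardwwd} to the $t$-doped one by padding. Take the hard instance $H^{(4)}$ on $m$ qubits from the proof of \cref{thm:nphardwwd}. We have $m=2n'$, and its optimisation over $\mathcal{S}_m$ already encodes a $3$-\class{SAT} instance on $\Omega(m^2)$ variables. Embed it into $N$ qubits with $N-t\ge m$. The goal is that optimising the new Hamiltonian over $t$-doped states on $N$ qubits is equivalent to optimising $H^{(4)}$ over stabilizer states on $m$ qubits.

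The natural construction is $H_{\rm pad}=H^{(4)}\otimes I_{N-m}$ with an extra penalty. A first attempt would force the last $t$ qubits to carry all the magic, but a $t$-doped state $C(\ket{0}^{\otimes(N-t)}\otimes\ket{\phi})$ may spread magic across all qubits via $C$, so I would instead use a replication trick. Choose $k=\lceil N/m\rceil$ blocks, each of $m$ qubits, with $N=km$, and set
\begin{align}
H'=\sum_{b=1}^{k} H^{(4)}_b .
\end{align}
A $t$-doped state has stabilizer group of size at least $2^{N-t}$. If $t\le\kappa N$, then, after tracing out the other blocks, at least a constant fraction of blocks have reduced states on which the state is close to a stabilizer mixture. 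The cleaner route, which I would attempt first, is the following. For any pure state $\psi$, the reduced state $\psi_b$ on block $b$ satisfies $\tr(H^{(4)}_b\psi)=\tr(H^{(4)}\psi_b)$. Then show that, for a $t$-doped $\psi$, all but at most $t$ of the reduced states $\psi_b$ lie in the stabilizer polytope $\hat{\mathcal{S}}_m$. This should follow because measuring the $N-t$ stabilizer generators reduces $\psi$ to a stabilizer-like structure: the non-stabilizer part is supported on a $t$-qubit "core". After Clifford disentangling, the Pauli stabilizers restricted to a block form a stabilizer code. If a block has no logical qubits of the magic core, its reduced state is a mixture of stabilizer states.

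Given this counting lemma, the argument splits into two cases. In the \class{YES} case, the product of optimal stabilizer states on each block is a stabilizer, hence $t$-doped, state achieving value $\le 0$. In the \class{NO} case, every stabilizer-polytope block contributes at least $1$, while the at most $t$ bad blocks contribute at least $-\|H^{(4)}\|_\infty$. Choosing $k$ large enough that $k-t\ge t\|H^{(4)}\|_\infty+1$ keeps the minimum at least $1$. This requires $k$ polynomial in $d_m$, and it is consistent with $t\le\kappa N$ only because $N=km$ grows accordingly. The dimension therefore blows up polynomially, so $\log^2 D=\Theta(\log^2 d)$ up to constants and the ETH lower bound survives.

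Finally, repeat the \class{WWD} step of \cref{thm:nphardwwd} verbatim with $W=-H'/\|H'\|_2$, using that the $t$-doped polytope is convex with extreme points $\mathcal{S}_{N,t}$. Bound $\|H'\|_2\le k\sqrt{D}\|H^{(4)}\|_\infty$ to obtain the polynomial bound on $\delta$ stated in the lemma. I expect the main obstacle to be the counting lemma that most block marginals of a $t$-doped state are stabilizer mixtures. If it fails, I would fall back to bounding marginals via stabilizer fidelity using \cref{stabentstabfid}, at the cost of worse polynomial factors, consistent with the worse exponent $31/2$ in the statement.
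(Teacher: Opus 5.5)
There is a genuine gap, and it is not in the counting lemma you flag but in the parameter count of the replication itself. In the \class{NO} case you need $(k-t)\cdot 1> t\,\|H^{(4)}\|_\infty$. As you note, $\|H^{(4)}\|_\infty$ is polynomial in $d_m=2^m$ while the promise gap is only $1$, so $k=\mathrm{poly}(d_m)=2^{\Omega(m)}$ already for $t=1$. The instance then lives on $N=km=2^{\Omega(m)}$ qubits, i.e.\ in dimension $D=d_m^{\,k}$, which is \emph{not} polynomial in $d_m$. Since $\log D=km$ is exponential in $m$, your claim $\log^2 D=\Theta(\log^2 d_m)$ is false. The embedded $3$-\class{SAT} instance has only $\Theta(m^2)=O(\log^2 N)=O(\log^2\log D)$ variables. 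ETH therefore only excludes algorithms running in time $2^{o(\log^2\log D)}$, which says nothing about $\class{QP}^{2-\eta}$. For $t$ linear in the number of qubits, which $t\le\kappa N$ allows, it is worse: the requirement $k\gtrsim t\,\|H^{(4)}\|_\infty=\kappa k m\,\|H^{(4)}\|_\infty$ cannot be met at all. Your remark that $N=km$ ``grows accordingly'' treats $t$ as fixed while $N$ grows. Any dilution argument of this type costs a multiplicative factor $t\times(\text{spread}/\text{gap})$ in the number of qubits. The $\class{QP}^2$ bound instead needs the SAT instance to keep $\Omega(N^2)$ variables, i.e.\ only additive qubit overhead.

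For the record, your counting lemma is true. Let $G$ be the stabilizer group of $\psi$, with $|G|\ge 2^{N-t}$. Only Paulis on block $b$ that commute with $G$ have nonzero expectation in $\psi$. If these form an abelian group, $\psi_b$ is diagonal in a stabilizer basis and so lies in $\hat{\mathcal S}_m$. Anticommuting pairs taken from $s$ distinct bad blocks are symplectically independent in $\mathcal{N}(G)/G\cong\mathbb{F}_2^{2t}$, hence $s\le t$.

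The paper avoids dilution by \emph{exhausting} the doping budget instead. It takes the hard instance $H$ on $N-t$ qubits and sets $H''=\ketbra{\phi}{\phi}\otimes(\alpha I-H)$, where $\phi$ is a $t$-qubit minimiser of the stabilizer purity $P_6$. Every $t$-doped $\psi$ has $P_6(\psi)\ge P_6(\phi)$, and $P_6$ is multiplicative. So a $t$-doped state with large weight on $\ketbra{\phi}{\phi}\otimes I$ must leave a nearly stabilizer state on the remaining $N-t$ qubits, via \cref{stabentstabfid}. Small weight on that projector is suppressed by the projector together with the shift $\alpha$. This costs only $t$ extra qubits, so $N-t\ge(1-\kappa)N$ and the $\Omega(N^2)$-variable SAT hardness survives.
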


\begin{proof}
The overall strategy is to reduce the problem of optimising a Hamiltonian over the $t$-doped stabilizer polytope $\hat{\mathcal{S}}_{n,t}$ to an instance of weak-witness detection. The reduction allows us to embed a decision problem that, under the ETH, requires time $2^{\Omega(\log^2d)}$ into $\class{WWD}_{\delta}(\hat{\mathcal{S}}_{n,t})$ while preserving a non-vanishing promise gap. Since the input is a $d\times d$ density matrix with $d=2^n$, this implies that the problem cannot lie in $\class{QP}^{2-\eta}$ for any $\eta>0$, see Definition~\ref{def:quasipolyclasses}.

By \cref{thm:nphardwwd}, there exists a family of Hamiltonians $H$ acting on $n-t$ qubits such that the following promise problem is at least as hard as solving a $3$-\class{SAT} instance on $\Omega(\log^22^{n-t})$ variables:
\begin{itemize}
\item \class{YES}. $\max_{\sigma \in \hat{\mathcal{S}}_{n-t}} \tr(H\sigma) \ge 1$.
\item \class{NO}. $\max_{\sigma \in \hat{\mathcal{S}}_{n-t}} \tr(H\sigma) \le 0$.
\end{itemize}
To symmetrise the promise around a tunable threshold, define the shifted Hamiltonian
\begin{align}
H' \coloneqq \alpha I - H,
\end{align}
for a parameter $\alpha>0$ to be fixed later. The problem becomes
the following.
\begin{itemize}
\item \class{YES}. $\max_{\sigma}\tr(H'\sigma)\ge \alpha$.
\item \class{NO}. $\max_{\sigma}\tr(H'\sigma)\le \alpha-1$.
\end{itemize}

\paragraph{\bf Embedding into $\hat{\mathcal{S}}_{n,t}$.}
Recall that every state vector in $\hat{\mathcal{S}}_{n,t}$ admits a decomposition of the form (\cref{def:tcomp}):
\begin{align}
\ket{\psi} = C\bigl(\ket{\phi}\otimes\ket{0}^{\otimes (n-t)}\bigr),
\end{align}
where $\ket{\phi}$ is a $t$-qubit state vector and $C$ is a Clifford unitary.
Let $\ket{\phi}$ 
be a $t$-qubit state vector to be fixed later, and define the lifted Hamiltonian acting on $n$ qubits as
\begin{align}
H''_n \coloneqq \ketbra{\phi}{\phi} \otimes H'_{n-t}.
\end{align}
By construction, $H''_n$ acts as $H'_{n-t}$ on the subspace where the first $t$ qubits are aligned with $\ket{\phi}$ and suppresses contributions from states with small overlap on this subsystem. Let us show that optimising $H''_n$ on $\mathcal{S}_{n,t}$ is at least as hard as optimising $H'_{n-t}$ on $\mathcal{S}_{n-t}$ thus proving the claim for any $(n-t)=\Omega(n)$.
\begin{itemize}
\item \class{YES}. Assume there exists a state vector $\ket{\sigma}_{n-t}$ such that
\begin{align}
\tr(H'_{n-t}\ketbra{\sigma}{\sigma}) \ge \alpha.
\end{align}
Then the product state vector $\ket{\phi}\otimes\ket{\sigma}$ satisfies
\begin{align}
\tr\bigl(H''_n(\ketbra{\phi}{\phi}\otimes\ketbra{\sigma}{\sigma})\bigr) \ge \alpha.
\end{align}
We note that the $n$-qubit state $\ket{\phi}\otimes \ket{\sigma}$ lies in $\hat{\mathcal{S}}_{n,t}$. Hence, the lifted instance is a \class{YES} instance.

\item \class{NO}. Assume instead that
\begin{align}\label{conditiononsigmanmenot}
\max_{\ket{\sigma_{n-t}}\in\mathcal{S}_{n-t}}\tr(H'_{n-t}\ketbra{\sigma}{\sigma}) \le \alpha-1.
\end{align}
Let $\ket{\psi}\in\hat{\mathcal{S}}_{n,t}$ be an arbitrary $t$-doped state vector. For this, we define the overlap
\begin{align}
F \coloneqq \tr\bigl(\ketbra{\psi}{\psi}(\ketbra{\phi}{\phi}\otimes I)\bigr).
\end{align}

If $F \ge 1-\vartheta$, then projecting $\ket{\psi}$ onto the $\ket{\phi}$ subspace produces a normalised $(n-t)$-qubit state vector $\ket{\tilde{\sigma}}$ such that, by the gentle measurement lemma~\cite{MarkWilde}, we have that
\begin{align}
\|\ketbra{\psi}{\psi}-\ketbra{\phi}{\phi}\otimes\ketbra{\tilde{\sigma}}{\tilde{\sigma}}\|_1 \le 2\sqrt{\vartheta}.
\end{align}
We now aim to show \(\ket{\tilde{\sigma}}\) to be a stabilizer state (see \cref{eq:stabentropydef}). Let $P_{6}$ be the stabilizer purity introduced in \cref{stabentstabfid}. We then choose \(\ket{\phi} \coloneqq \operatorname{argmin}_{\tau} P_{6}(\ket{\tau})\), which minimises the stabilizer purity over \(t\)-qubit states. However, by construction of $t$-doped states (\cref{def:tcomp}), for any $\ket{\psi}\in\mathcal{S}_{n,t}$ we have that $P_{6}(\ket{\phi})\le P_{6}(\ket{\psi})$. Let us note that we can bound
\begin{align}
     \|\ketbra{\psi}{\psi}^{\otimes 6} - (\ketbra{\phi}{\phi}\otimes \ketbra{\tilde\sigma}{\tilde\sigma})^{\otimes 6}\|_1
     \le 6\,\|\ketbra{\psi}{\psi}-\ketbra{\phi}{\phi}\otimes \ketbra{\tilde{\sigma}}{\tilde{\sigma}}\|_1
     \le 12\sqrt{\vartheta}.
\end{align}
Then, we find that
\begin{align}
    P_{6}(\ket{\phi})&\le P_{6}(\ket{\psi})
    \le P_{6}(\ket{\phi})P_6({\ket{\tilde\sigma}})+12\sqrt{\vartheta}\left\|\frac{1}{d}\sum_{P} P^{\otimes 6}\right\|_{\infty} \\ 
    \nonumber
    &\le P_{6}(\ket{\phi})P_6({\ket{\tilde\sigma}})+12\sqrt{\vartheta}.
\end{align}

Thus we obtain the lower bound
\begin{align}
    P_{6}(\ket{\tilde{\sigma}})\ge 1-\frac{12\sqrt{\vartheta}}{P_{6}(\ket{\phi})}.
\end{align}
Let us now use the relation between stabilizer purity and stabilizer fidelity $F_{\text{stab}}$ in \cref{stabentstabfid} to lower bound the stabilizer fidelity of $\ket{\tilde{\sigma}}$ as
\begin{align}
    F_{\text{stab}}(\ket{\tilde{\sigma}})
    \ge \left(1-\frac{12\sqrt{\vartheta}}{P_{6}(\ket{\phi})}\right)^{C}
    \ge 1-\frac{12C\sqrt{\vartheta}}{P_{6}(\ket{\phi})}
    \ge 1-\frac{12C\sqrt{\vartheta}}{2^{t}}.
\end{align}
where we have upper bounded $P_{6}(\ket{\phi})\le 2^t$ being $\ket{\phi}$ a $t$-qubit state. Hence, we deduce that there exists a stabilizer state vector $\ket{\sigma}$ on $n-t$ qubits such that
\begin{align}
    \|\ketbra{\sigma}-\ketbra{\tilde{\sigma}}\|_1=2\sqrt{1-F_{\text{stab}}(\ket{\tilde{\sigma}})}\le 2\sqrt{12C2^{-t}\sqrt{\theta}}\,,
\end{align}
from which we have
\begin{align}
    \tr(\ketbra{\tilde{\sigma}}H_{n-t}')\le \alpha-1+2\sqrt{12C2^{-t}\sqrt{\theta}}\|H_{n-t}\|_{\infty}
\end{align}
where we have used the condition in \cref{conditiononsigmanmenot}. Using $F\ge 1-\theta$,
we derive
\begin{align}
    \tr(\ketbra{\psi}{\psi} (\ketbra{\phi}{\phi}\otimes H'_{n-t}))
    &\le \alpha - \tr(\ketbra{\psi}{\psi} (\ketbra{\phi}{\phi}\otimes H_{n-t})) \\ 
    \nonumber
    & \le \alpha - \tr(\ketbra{\phi}{\phi} \otimes \ketbra{\tilde \sigma}{\tilde \sigma} (\ketbra{\phi}{\phi}\otimes H_{n-t})) + 2\sqrt{\vartheta}\|H_{n-t}\|_{\infty} 
    \\
     \nonumber
    &\le \alpha-\tr(\ketbra{\tilde{\sigma}}H_{n-t})+2\sqrt{\theta}\|H_{n-t}\|_{\infty}\\
     \nonumber
    &\le \tr(\ketbra{\tilde{\sigma}}H'_{n-t})+2\sqrt{\theta}\|H_{n-t}\|_{\infty}\\
     \nonumber
    & \le  \alpha-1 + \qty(2\sqrt{\theta}+2\sqrt{12C2^{-t}\sqrt{\theta}})\|H_{n-t}\|_{\infty}. 
     \nonumber
\end{align}
In summary, this yields
\begin{align}
\tr(\ketbra{\psi}{\psi}H''_n)
\le  \alpha-1 + \qty(2\sqrt{\theta}+2\sqrt{12C2^{-t}\sqrt{\theta}})\|H_{n-t}\|_{\infty} .
\end{align}

In the converse case, if $F < 1-\vartheta$, then $\ket{\psi}$ has significant support outside the $\ket{\phi}$ subspace, and by definition of $H''_n$,
\begin{align}
    \tr(\ketbra{\psi}(\ketbra{\phi}\otimes H'_{n-t}))
    \le \|\bra{\phi}\psi\ket{\phi}\|_1\|H'_{n-t}\|_{\infty}
    =\|\bra{\phi}\psi\ket{\phi}\|_2\|H'_{n-t}\|_{\infty}
    <(1-\theta)\|H'_{n-t}\|_{\infty}\,,
\end{align}
where we have used the fact that $\|\langle\phi|\psi|\phi\rangle\|_2=\sqrt{\tr[(\ketbra{\phi}\otimes I)\ketbra{\psi}(\ketbra{\phi}\otimes I)\ketbra{\psi}]}=F$.
We now impose for both cases, i.e., $F\ge 1-\theta$ and $F<1-\theta$, that $\tr(\ketbra{\psi}H''_{n-t})\le \alpha-\frac{1}{2}$ to separate \class{YES} and \class{NO} instance. For the latter case, we simply impose
\begin{align}
    (1-\theta)\|H_{n-t}'\|_{\infty}\le \alpha-\frac{1}{2} .
\end{align}
Crudely upper bounding $\|H'_{n-t}\|_{\infty}=\alpha+\|H_{n-t}\|_{\infty}$, we find a condition on the value of $\alpha$ as $\alpha\ge \theta^{-1}[\frac{1}{2}+\|H_{n-t}\|_{\infty}(1-\theta)]$. In the former case ($F\ge 1-\theta$), we impose
\begin{align}
 \alpha-1 + \qty(2\sqrt{\theta}+2\sqrt{12C2^{-t}\sqrt{\theta}})\|H_{n-t}\|_{\infty} \le \alpha-\frac{1}{2}   
\end{align}
from which we get the condition for $\theta$ given by 
\begin{align} \theta^{1/4}\le  \frac{1}{2(2+2\sqrt{12C2^{-t}})\|H_{n-t}\|_{\infty}}  .
\end{align}
By choosing $\alpha$ and $\vartheta$ such that both bounds lie strictly below $\alpha-\tfrac12$, no state in $\hat{\mathcal{S}}_{n,t}$ can achieve expectation value at least $\alpha$ in the \class{NO} case. Combining the two, we get
\begin{align}
    \alpha & \ge (1/2+\|H_{n-t}\|_{\infty})\|H_{n-t}\|_{\infty}^{4}[2(2+2\sqrt{12C2^{-t}})]^4\\
    \nonumber
    &\ge 2^{-10}d^{15} \log^{30}d  .
\end{align}

\end{itemize}
\smallskip
\paragraph{\bf Reduction to $\class{WWD}_{\delta}$.}
Define the witness
\begin{align}
W \coloneqq -\frac{H''_n}{\|H''_n\|_2}, \qquad \gamma \coloneqq -\frac{1}{2\|H''_n\|_2}.
\end{align}
Since optimisation over the convex set $\hat{\mathcal{S}}_{n,t}$ is attained on its extreme points, the Hamiltonian decision problem reduces to an instance of $\class{WWD}_{\delta}(\hat{\mathcal{S}}_{n,t})$ with parameters $(W,\gamma)$, provided $\delta$ is chosen sufficiently small.
Using submultiplicativity and the definition of $H'$, we obtain
\begin{align}
\|H''_n\|_2
&\le \|H'_{n-t}\|_2 \\
\nonumber
&\le \alpha \sqrt{d} 2^{-t/2} + \|H_{n-t}\|_2 \\
&\le \frac{1}{2}d^{31/2}\log^{30}d,
\nonumber
\end{align}
where we have set $\alpha=2^{-10}d^{15}\log^{30}d$, and the last inequality follows from the norm bounds established in \cref{thm:nphardwwd}.
Combining this with the conditions in the \class{YES} and \class{NO} cases of the weak-witness formulation yields the constraint
\begin{align}
\delta \le \frac{1}{ 2 d^{31/2} \log^{30}d}.
\end{align}
With this choice of $\delta$, oracle access to $\class{WWD}_{\delta}(\hat{\mathcal{S}}_{n,t})$ suffices to solve a Hamiltonian optimisation problem on $n-t\ge (1-\kappa)n$ qubits. Therefore, under the ETH, the problem $\class{WWD}_{\delta}(\hat{\mathcal{S}}_{n,t})$ cannot belong to $\class{QP}^{2-\eta}$ for any $\eta>0$.
\end{proof}

\begin{theorem}[Superpolynomial complexity of $\rho$-$\class{WMEM}_{\varepsilon}(\hat{\mathcal{S}}_{n,t})$]\label{thm:suptdop}
Let $\hat{\mathcal{S}}_{n,t}$ be the $t$-doped stabilizer polytope on $n$-qubits and $t\le\kappa n$ with $\kappa<1$. Under the ETH (\cref{def:eth}), and for any $\varepsilon \le (2^{16} 3^3   d^{119/2}\log^{90} d)^{-1}$, with $C$ the constant introduced in~\cref{stabentstabfid}, the decision problem of $\rho$-$\class{WMEM}_{\varepsilon}(\hat{\mathcal{S}}_{n,t})\notin \class{QP}^{2-\eta}$ for any $\eta>0$. 
\begin{proof}
To prove the result, we reduce the membership problem to the weak witness detection problem $\class{WWD}_{\delta}(\hat{\mathcal{S}}_{n,t})$. In~\cref{wwdtdoped} we show that $\class{WWD}_{\delta}(\hat{\mathcal{S}}_{n,t})\notin \class{QP}^{2-\eta}$ for any $\eta>0$, assuming the exponential time hypothesis stated in~\cref{def:eth}, provided 
\begin{equation}
\delta \le \frac{1}{ 2 d^{31/2} \log^{30}d}. 
\end{equation}
The reduction follows from~\cref{lem:reductionswmemwwd}. Consequently, as long as $\varepsilon \le (2^{16} 3^3   d^{119/2}\log^{90} d)^{-1}$, there exists a polynomial-time reduction between $\rho$-$\class{WMEM}_{\varepsilon}(\hat{\mathcal{S}}_{n,t})$ and $\class{WWD}_{\delta}\hat{\mathcal{S}}_{n,t})$. This concludes the proof.
\end{proof}
\end{theorem}

Before introducing the next lemma, we introduce two notions that are relevant to our analysis. The following definition extends the notion of robustness of magic by allowing, as free resources, convex combinations of $t$-doped states, as also done in Ref.~\cite{nakagawa2025applicationresourcetheorybased}. As in the case of stabilizer states (see \cref{def:robustnessofmagic}), this quantity serves a dual purpose: (i) it yields an algorithm to decide membership in the convex hull of $t$-doped states $\hat{\mathcal{S}}_{n,t}$, and (ii) it quantifies how far a given state lies outside $\hat{\mathcal{S}}_{n,t}$.

\begin{definition}[$t$-extended robustness of magic]
Given an $n$-qubit density matrix $\rho$, the $t$-extended robustness of magic is defined as
\begin{align}
    \mathcal{R}_t(\rho)\coloneqq\min_{\{x_i,\psi_i\}}\left\{\sum_i |x_i| \;:\; \rho=\sum_i x_i \psi_i,\ \psi_i\in\mathcal{S}_{n,t}\right\}.
\end{align}
Moreover, $\mathcal{R}_t(\rho)=0$ if and only if $\rho\in\hat{\mathcal{S}}_{n,t}$.
\end{definition}

We now establish a dual characterisation of the $t$-extended robustness and derive a uniform upper bound on its value.

\begin{lemma}[Dual formulation of $t$-extended robustness]\label{lemmatextendedrobusteness}
The $t$-extended robustness admits the dual formulation
\begin{align}
    \mathcal{R}_t(\rho)
    =\max_{\substack{W\\ \tr(W\psi)\le 1\ \forall\,\psi\in\mathcal{S}_{n,t}}}\tr(W\rho).
    \label{eq:optimisationextendedrobustness}
\end{align}
Moreover, for any $n$-qubit state $\rho$, it holds that $\mathcal{R}_t(\rho)\le \sqrt{d(d+1)}$, and any feasible witness $W$ in Eq.~\eqref{eq:optimisationextendedrobustness} satisfies $\|W\|_2\le \sqrt{d(d+1)}$.
\begin{proof}
The dual formulation~\eqref{eq:optimisationextendedrobustness} follows from the standard dual program associated with $\mathcal{R}_t$, in complete analogy with the robustness of magic (\cref{def:robustnessofmagic}). To upper bound both the $t$-extended robustness and the Hilbert–Schmidt norm of feasible witnesses, we proceed as in Ref.~\cite{liuManybodyQuantumMagic2022}.
First, we observe that
\begin{align}
    \mathcal{R}_t(\rho)
    \le \max_{\substack{W\\ \tr(W\psi)\le 1\ \forall\,\psi\in\mathcal{S}_{n,t}}}\|W\|_{\infty}
    \le \max_{\substack{W\\ \tr(W\psi)\le 1\ \forall\,\psi\in\mathcal{S}_{n,t}}}\|W\|_2.
\end{align}
Expanding $W$ in the Pauli basis, its Hilbert–Schmidt norm can be written as
\begin{align}
    \|W\|_2=\sqrt{\frac{1}{d}\sum_{P}\tr^2(WP)}.
\end{align}
The constraint $\tr(W\psi)\le 1$ holds in particular for all stabilizer states $\sigma\in\mathcal{S}_n$. For such states, $\tr(\sigma P)\in\{\pm1\}$ for all Pauli operators $P$, and hence
\begin{align}
    \tr(W\sigma)=\frac{1}{d}\sum_{P}\tr(WP)\tr(\sigma P)\le 1.
\end{align}
It follows that imposing the constraint only on stabilizer states is already sufficient to bound $\mathcal{R}_t$, so the remainder of the proof coincides with the argument in Ref.~\cite{liuManybodyQuantumMagic2022}. We include it here for completeness.

Fix an unsigned stabilizer group $G$. Define the matrix $V_{P,\sigma}=\frac{1}{\sqrt{d}}\tr(\sigma P)$, where $P\in G$ and $\sigma$ ranges over the $d$ stabilizer states with stabilizer group $G$. This matrix is unitary: it is square since $|G|=d$, and its rows and columns are orthonormal. Indeed,
\begin{align}
    \sum_{\sigma}V_{P,\sigma}V_{Q,\sigma}
    &=\frac{1}{d}\sum_{\sigma}\tr(\sigma P)\tr(\sigma Q)
    =\frac{1}{d}\sum_{\sigma}\tr(\sigma PQ)
    =\delta_{P,Q},
\end{align}
and similarly,
\begin{align}
    \sum_{P}V_{P,\sigma}V_{P,\sigma'}
    =\frac{1}{d}\sum_{P}\tr(\sigma P)\tr(\sigma' P)
    =\tr(\sigma\sigma')
    =\delta_{\sigma,\sigma'},
\end{align}
where we have used the fact that two stabilizer states associated with the same stabilizer group are either identical or orthogonal.
Using this unitary change of basis, we obtain
\begin{align}
    \sum_{P\in G}\tr^2(WP)
    &=\sum_{\sigma}\left(\frac{1}{\sqrt{d}}\sum_{P\in G}\tr(P\sigma)\tr(WP)\right)^2
    =d\sum_{\sigma}\tr^2(W\sigma)
    \le d^2.
\end{align}
Finally, since there exist $d+1$ stabilizer groups $\{G_j\}_{j=1}^{d+1}$ that intersect only at the identity, we can write
\begin{align}
    \sum_{P}\tr^2(WP)
    \le \sum_{j=1}^{d+1}\sum_{P\in G_j}\tr^2(WP)
    \le (d+1)d^2.
\end{align}
Therefore, $\|W\|_2\le \sqrt{d(d+1)}$, which concludes the proof.
\end{proof}
\end{lemma}
Given the definition of $t$-extended robustness, it is possible to prove the following result.

\begin{lemma}[$\rho$-$\class{WMEM}_{\varepsilon}(\hat{\mathcal{S}}_{n,t})\in \class{QP}$ for any $t=O(\log \log d)$]\label{lmm:wmemtdoped}
For any $0<\varepsilon<1$, there exists an algorithm running in time $\class{poly}(2^{\log^2 d+2^{2t}(\log d+\log\varepsilon^{-1}) })$ that, by computing the $t$-extended robustness of magic up to additive precision $\nu\le\frac{\varepsilon^6}{2^{26}\cdot 3^3\cdot d^{24}}$, solves $\rho$-$\class{WMEM}_{\varepsilon}(\hat{\mathcal{S}}_{n,t})$. In particular, the problem lies in $\class{QP}^2$ for any $t<\log \log d$, and in $\class{QP}$ for any $t=O(\log \log d)$.
\begin{proof}
The proof employs a strategy analogous to that presented in \cref{wmemeforstabinqp2}, substituting the robustness of magic with its $t$-extended counterpart. The primary challenge arises from the continuity of the set $\mathcal{S}_{n,t}$, which would otherwise result in an infinite runtime for the algorithm. To address this, we construct a $\vartheta$-packing net $\mathcal{S}_{n,t}^{\vartheta}$ of $\mathcal{S}_{n,t}$ such that $\mathcal{S}_{n,t}^{\vartheta} \subseteq \mathcal{S}_{n,t}$. A straightforward method for constructing this ensemble involves first creating a $\vartheta$-packing net of the set of $t$-qubit states and subsequently applying rotations using the Clifford group. By \cref{def:tcomp}, this ensures that $\mathcal{S}_{n,t}^{\vartheta} \subseteq \mathcal{S}_{n,t}$.
Let $\hat{\mathcal{S}}_{n,t}^{\vartheta}$ denote the convex hull of $\mathcal{S}_{n,t}^{\vartheta}$. For any state $\rho \in \hat{\mathcal{S}}_{n,t}$, there exists a state $\rho^{\vartheta} \in \hat{\mathcal{S}}_{n,t}^{\vartheta}$ such that $\|\rho - \rho^{\vartheta}\|_2 \leq \vartheta$.

With the ensemble of states thus defined, we proceed to prove the lemma by demonstrating that the $t$-extended robustness, when evaluated on the $\vartheta$-packing net, is sufficient to solve the one-sided problem $\rho$-$\class{WMEM}^{1}_{\delta}(\hat{\mathcal{S}}_{n,t})$.
The $t$-extended robustness admits the dual formulation~\cref{lemmatextendedrobusteness}
\begin{align}
\mathcal{R}_t(\rho) = \max_{\substack{W \\ \tr(W\psi) \le 1 \ \forall \psi \in \mathcal{S}_{n,t}}} \tr(W\rho),
\end{align}
where the feasible set of witnesses is given by
\begin{align}
\mathcal{W}_t \coloneqq \left\{ W : \max_{\psi \in \mathcal{S}_{n,t}} \tr(W\psi) \le 1 \right\}.
\end{align}
To adapt this definition to $\mathcal{S}_{n,t}^{\vartheta}$, we obtain
\begin{align}
\mathcal{R}_t^{\vartheta}(\rho) = \max_{\substack{W \\ \tr(W\psi) \le 1 \ \forall \psi \in \mathcal{S}_{n,t}^{\vartheta}}} \tr(W\rho),
\end{align}
with the corresponding feasible set defined as
\begin{align}
\mathcal{W}_t^{\vartheta} \coloneqq \left\{ W : \max_{\psi \in \mathcal{S}_{n,t}^{\vartheta}} \tr(W\psi) \le 1 \right\}.
\end{align}
We begin by considering an idealised setting in which the entire witness set $\mathcal{W}_t^{\vartheta}$ is accessible exactly. Subsequently, we introduce robustness with respect to finite precision by defining the $\nu$-robust witness set. Recall from Definition~\ref{def:set1} that for a set $K$, the inner core 
\begin{align}
S(K, -\nu) = \{ W \in K : S(\{W\}, \nu) \subseteq K \}
\end{align}
consists of all points whose $\nu$-ball is contained in $K$
where $S(\{W\}, \nu) = \{ W' : \|W' - W\|_2 \le \nu \}$.
For $K = \mathcal{W}_t^{\vartheta}$, we have
\begin{align}
S(\mathcal{W}_t^{\vartheta}, -\nu) = \{ W \in \mathcal{W}_t^{\vartheta} : \forall W' \text{ with } \|W' - W\|_2 \le \nu, \ W' \in \mathcal{W}_t^{\vartheta} \}.
\end{align}
For the reminder of the proof, we adopt the notation $\mathcal{W}_{t,\nu}^{\vartheta} \equiv S(\mathcal{W}_t^{\vartheta}, -\nu)$. 
The associated optimisation problem
\begin{align}
    R_{t,\nu}^{\vartheta}(\rho)\coloneqq\max_{W\in\mathcal{W}_{t,\nu}^{\vartheta}}\tr(W\rho)
\end{align}
is a linear program that can be solved up to additive error $\nu$ using the ellipsoid algorithm~\cite{grotschelGeometricAlgorithmsCombinatorial1988}, provided that $\mathcal{W}_{t,\nu}^{\vartheta}$ is nonempty. We now show that computing $R_{t,\nu}^{\vartheta}(\rho)$ suffices to solve the one-sided problem $\rho$-$\class{WMEM}^{1}_{\delta}(\hat{\mathcal{S}}_{n,t})$, from which the full weak membership problem follows by \cref{lem:onesideequivalentbothss}.
\begin{itemize}
\item \class{YES}. Let $\rho\in\hat{\mathcal{S}}_{n,t}$. By definition of the $\vartheta$-packing net, there exists $\rho^{\vartheta}\in\hat{\mathcal{S}}_{n,t}^{\vartheta}$ such that $\|\rho-\rho^{\vartheta}\|_2\le \vartheta$. For any witness $W$, we then have
\begin{align}
    \tr(W\rho)
    &\le \tr(W\rho^{\vartheta}) + \vartheta\|W\|_2 + \nu \\
    &\le 1+ \vartheta \sqrt{d(d+1)} + \nu,\nonumber
\end{align}
where we have used $\|W\|_{2}\le \sqrt{d(d+1)}$ (\cref{lemmatextendedrobusteness}) and the fact that the optimisation is solved up to $\nu$-additive error. Hence, no \class{YES} instance can be misclassified as \class{NO}, provided that \class{NO} instances yield values strictly larger than $1+\nu+\vartheta\sqrt{d(d+1)}$.

\item \class{NO}. Let $\rho$ satisfy $\min_{\sigma\in\hat{\mathcal{S}}_{n,t}}\|\rho-\sigma\|_2\ge\delta$. Then
\begin{align}
\min_{\sigma\in\hat{\mathcal{S}}_{n,t}^{\vartheta}}\|\rho-\sigma\|_2
\ge \min_{\sigma\in\hat{\mathcal{S}}_{n,t}}\|\rho-\sigma\|_2
\ge\delta.
\end{align}
By \cref{witweak}, since $\hat{\mathcal{S}}^{\theta}_{n,t}$ is convex, there exists a Hermitian operator $\tilde W$ such that
\begin{align}
\tr(\tilde W\rho)>\max_{\sigma\in\hat{\mathcal{S}}_{n,t}^{\vartheta}}\tr(\tilde W\sigma)+\delta^2.
\end{align}
We rescale $\tilde W$ into a valid element of $\mathcal{W}_{t,\nu}^{\vartheta}$ by defining 
\begin{align}
    W\coloneqq\frac{\tilde W}{\max_{\sigma\in\hat{\mathcal{S}}^{\vartheta}_{n,t}}\tr(\tilde W\sigma)}-\nu I.
    \end{align}
    Indeed, for any $H$ such that $\|H\|_2\le \nu$ and $\sigma\in\hat{\mathcal{S}}_{n,t}^{\theta}$, we have $\tr[(W+H)\sigma]\le \tr(W\sigma)+\nu\le 1-\nu+\nu\le 1$. Hence $W\in\mathcal{W}_{t,\nu}^{\vartheta}$. Evaluating $W$ on $\rho$ yields
\begin{align}
    \tr(W\rho)
    &=\frac{\tr(\tilde W\rho)}{\max_{\sigma\in\hat{\mathcal{S}}_{n,t}^{\vartheta}}\tr(\tilde W\sigma)}-\nu >1+\frac{\delta^2}{\max_{\sigma\in\hat{\mathcal{S}}_{n,t}^{\vartheta}}\tr(\tilde W\sigma)}-\nu.
    \nonumber
\end{align}
To tell apart \class{YES} and \class{NO} instant, we must have $\tr(W\rho)>1+\nu+\vartheta\sqrt{d(d+1)}$. It suffices that
\begin{align}
    \frac{\delta^2}{\max_{\sigma\in\hat{\mathcal{S}}_{n,t}^{\vartheta}}\tr(\tilde W\sigma)}>3\nu +\vartheta\sqrt{d(d+1)}.
\end{align}
Using the bound $\max_{\sigma\in\hat{\mathcal{S}}_{n,t}^{\vartheta}}\tr(\tilde W\sigma)\le\|\tilde W\|_2\le2$ (see \cref{witweak}), we obtain
\begin{align}
    \frac{\delta^2}{2}>3\nu +\vartheta\sqrt{d(d+1)}.
\end{align}
For simplicity, we choose $\nu\le \frac{\delta^{2}}{12}$ and $\vartheta<\frac{\delta^2}{4\sqrt{d(d+1)}}$, which ensures that the condition is satisfied. We now examine the runtime of the algorithm. Since we are solving a linear program, the runtime is $\class{poly}(|\mathcal{S}_{n,t}^{\vartheta}|,\log\nu^{-1})$ and therefore it is governed by an upper 
bound on the size of the set $\mathcal{S}_{n,t}^{\vartheta}$. As shown in \cref{lem:packing}, we have:
\begin{align}
\lvert\mathcal{S}_{n,t}^{\vartheta}\rvert
&\le 2^{2\log^2d+4\log d}\left(\frac{4}{\vartheta}\right)^{2^{2t}}  
\end{align}
Using the bound on $\vartheta$, this becomes
\begin{align}
\lvert\mathcal{S}_{n,t}^{\vartheta}\rvert
&\le 2^{2\log^2d+4\log d} \left(\frac{4 \sqrt{d(d+1)}}{\delta^2}\right)^{2^{2t}} \le 2^{2\log^2d+4\log d} \left(\frac{8 d}{\delta^2}\right)^{2^{2t}}\\
& =2^{2\log^2d+4\log d+  2^{2t}(3+\log d + 2\log\delta^{-1})}  .
\nonumber
\end{align}
 The above argument shows that computing $R_{t,\nu}^{\vartheta}(\rho)$ solves $\rho$-$\class{WMEM}^{1}_{\delta}(\hat{\mathcal{S}}_{n,t})$. By \cref{lem:onesideequivalentbothss}, the one-sided and two-sided weak membership problems are polynomially equivalent. In particular, there exists a polynomial relation
\begin{align}
\delta=\frac{\varepsilon^3}{2^{12}\cdot 3\cdot d^{12}}
\end{align}
such that solving the one-sided problem with precision $\delta$ yields a solution to $\rho$-$\class{WMEM}_{\varepsilon}(\hat{\mathcal{S}}_{n,t})$. Choosing $\nu\le \frac{\varepsilon^6}{2^{26}\cdot 3^3\cdot d^{24}}$ completes the reduction.
\end{itemize}

\end{proof}
\end{lemma}

\section{Additional results and applications}
This section first introduces the notion of completely $t$-doped stabilizer-preserving channels and explores in \cref{sec:ctdspchannels}, which generalises completely stabilizer-preserving channels (\cref{lem:characterisationstabilizerpreservingchannel}). We then present applications of these results to the computational complexity of channel classification in \cref{sec:channelclass}, and to the limitations of magic-state distillation in \cref{sec:exmsd}. In particular, we show that our hardness results imply both the intractability of certain channel classification tasks and the existence of non-stabilizer states that admit no efficient stabilizer-based distillation protocol.
\subsection{Completely $t$-doped stabilizer preserving channels and their characterisation}\label{sec:ctdspchannels}
In this section, we introduce a generalisation of completely stabilizer preserving channels. For stabilizer states, we have seen that there exist channels, called \emph{completely stabilizer-preserving channels}, that map states belonging to the stabilizer polytope into states that remain within the polytope. This notion can be generalised to \emph{$t$-doped stabilizer-preserving channels}, namely channels that map $\hat{\mathcal{S}}_{n}$ into $\hat{\mathcal{S}}_{n,t}$. Formally, we define the set of completely $t$-doped stabilizer-preserving channels as follows.

\begin{definition}[Completely $t$-doped stabilizer-preserving channels]\label{def:comptdop}
Let $\mathcal{E}$ be a quantum channel with an $n$-qubit input. We say that $\mathcal{E}$ is \emph{completely $t$-doped stabilizer-preserving} if
\begin{align}
(\mathcal{E} \otimes I)(\sigma) \in \hat{\mathcal{S}}_{n+m,t}
\end{align}
for all $\sigma \in \hat{\mathcal{S}}_{n+m}$ and for all $m \in \mathbb{N}$.
\end{definition}

In words, the action of the channel $\mathcal{E}$ on a stabilizer input of arbitrary size yields a convex combination of $t$-doped states which, for $t = O(\log n)$, can be efficiently simulated classically and, therefore, do not provide any computational advantage. 

We now provide a useful characterisation of completely $t$-doped stabilizer-preserving channels. To this end, we first introduce the following lemma.

\begin{lemma}[\cite{Beverland_2020}]\label{partialtracetdoped}
Let $\ket{\psi}\in\mathcal{S}_{n+m,t}$ be a $t$-doped stabilizer state vector on $n+m$ qubits. Then:
\begin{enumerate}
\item For any stabilizer state vector $\ket{\phi}\in\mathcal{S}_{m}$, the (unnormalised) vector $(I\otimes\bra{\phi})\ket{\psi}$ is a $t$-doped stabilizer state on $n$ qubits.
\item The reduced state $\tr_{m}(\ketbra{\psi}{\psi})$ lies in the convex hull of $t$-doped stabilizer states of $n$ qubits, i.e., $\tr_{m}(\ketbra{\psi}{\psi})\in\hat{\mathcal{S}}_{n,t}$.
\end{enumerate}
\end{lemma}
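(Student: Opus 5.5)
The natural route is to use the canonical form of \cref{def:tcomp}: write $\ket{\psi}=C(\ket{0}^{\otimes(n+m-t)}\otimes\ket{\chi}_t)$ with $C\in\mathcal{C}_{n+m}$ and $\ket{\chi}$ an arbitrary $t$-qubit state. Both items then follow from how stabilizer projections and partial traces interact with the Clifford $C$.

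For item 1, it suffices to treat projections onto computational basis states on the last $m$ qubits. A general stabilizer $\ket{\phi}\in\mathcal{S}_m$ equals $V^\dagger\ket{0}^{\otimes m}$ for some Clifford $V$, and $(I\otimes V)C$ is again Clifford. By induction on $m$, it is enough to handle a single-qubit projection $(I\otimes\bra{0}_q)$ onto qubit $q$. Up to normalisation, this is the postselected outcome of measuring the Pauli $Z_q$ on $\ket{\psi}$. Conjugating back, this means measuring the Pauli $Q=C^\dagger Z_q C$ on $\ket{0}^{\otimes(n+m-t)}\otimes\ket{\chi}$.

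I then split into cases according to the Gottesman--Knill update rule. If $Q$ commutes with the stabilizers $Z_1,\dots,Z_{n+m-t}$ of the reference part, then $Q$ restricted to those qubits lies in $\langle Z_i\rangle$ up to sign, so $Q=\pm Z^{a}\otimes Q'$ with $Q'$ a Pauli on the $t$ magic qubits. If $Q'\neq I$, conjugate $Q'$ by a $t$-qubit Clifford $D$ to $Z_1$ on the magic register. The postselected state is then $C(D^\dagger\otimes I)\,\ket{0}^{\otimes(n+m-t)}\otimes(\ket{0}\otimes\ket{\chi'}_{t-1})$, which is still $t$-doped (in fact $(t-1)$-doped after reordering). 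If $Q'=I$, the outcome is deterministic and the state is unchanged. If instead $Q$ anticommutes with some $Z_i$, the standard stabilizer update applies: multiplying the other anticommuting generators by $Z_i$ and replacing $Z_i$ with $\pm Q$ produces a new stabilizer group of size $2^{n+m-t}$. Together with the unchanged logical action on the magic register, this is again of the form Clifford $\times(\ket{0}^{\otimes(n+m-t)}\otimes\ket{\chi})$. Equivalently, the projector $(I\pm Q)/2$ maps the stabilizer code containing $\ket{\psi}$ to a code of the same dimension via a Clifford.

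After the measurement, qubit $q$ is in the fixed state $\ket{0}$ and is stabilized by $Z_q$. I then need to show the remaining $n+m-1$ qubits carry a $t$-doped state: the stabilizer group contains $Z_q$, so I can choose a Clifford that keeps $Z_q$ as a separate generator, which factors qubit $q$ out. This factoring-out step is the main obstacle. The delicate point is showing that the code stabilizers, restricted to the complement of $q$, still form $n+m-1-t$ independent commuting generators. I would argue this by choosing a generating set in which $Z_q$ is the only generator acting nontrivially on qubit $q$, which is possible by row reduction over $\mathbb{F}_2$, in the same way as for stabilizer states.

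Item 2 follows from item 1. Write $\tr_m(\ketbra{\psi}{\psi})=\sum_{x\in\{0,1\}^m}(I\otimes\bra{x})\ketbra{\psi}{\psi}(I\otimes\ket{x})$. Each summand is a nonnegative multiple of a normalised $t$-doped pure state by item 1 (or zero), and the weights sum to one. Hence the reduced state is a convex combination of elements of $\mathcal{S}_{n,t}$ and lies in $\hat{\mathcal{S}}_{n,t}$.
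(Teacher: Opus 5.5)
Your proof is correct. It shares the paper's outer structure: you reduce a general $\ket{\phi}$ to $\ket{0}^{\otimes m}$ by a local Clifford, and you obtain item 2 by expanding the partial trace in a stabilizer basis and applying item 1. The core of item 1, however, is proved differently.

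The paper stays at the level of the Pauli expansion. It factors $\ketbra{\psi}{\psi}$ into the projector onto a stabilizer group $G$ of size $2^{n+m-t}$ times a magic part. It then notes that projecting onto $\ket{0}^{\otimes m}$ keeps only components of the form $P_n\otimes Z_m$ and creates no new Pauli terms. Finally it cites Gu et al.\ to turn the resulting count of $4^t$ Pauli operators into $t$-dopedness.

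You instead push the canonical form $C(\ket{0}^{\otimes(n+m-t)}\otimes\ket{\chi})$ through one single-qubit postselection at a time, using the Gottesman--Knill update. You certify the result through the size of its stabilizer group, which is exactly the invariant that characterises the canonical form: an abelian Pauli group with $k$ independent generators and not containing $-I$ is Clifford-equivalent to $\langle Z_1,\dots,Z_k\rangle$.

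Your route is longer, since it needs the induction over qubits and the case split. In exchange it is self-contained and explicit, and it even shows when the doping drops to $t-1$. The paper's decisive step rests on a citation, and it only makes sense if the $4^t$ count is read modulo the stabilizer group, because a pure $n$-qubit state always has at least $2^n$ Pauli operators with nonzero weight.

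Three points tighten your write-up.
\begin{itemize}
\item The step you flag as the main obstacle is immediate. Every stabilizer of $\ket{\Phi}\otimes\ket{0}_q$ commutes with $Z_q$, so it acts on $q$ as $I$ or $Z$. Multiplying by $Z_q$ where needed gives $|\mathrm{Stab}(\ket{\Phi}\otimes\ket{0})|=2\,|\mathrm{Stab}(\ket{\Phi})|$.
\item The anticommuting case has a one-line form. If $Q$ anticommutes with $Z_i$, then $U_\pm=(I\pm QZ_i)/\sqrt{2}$ is a Clifford and $\tfrac12(I\pm Q)\ket{\Psi_0}=\tfrac{1}{\sqrt2}U_\pm\ket{\Psi_0}$ for $\ket{\Psi_0}=\ket{0}^{\otimes(n+m-t)}\otimes\ket{\chi}$. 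This makes your ``unchanged logical action'' precise: the same $\ket{\chi}$ survives.
\item In the commuting case there is some bookkeeping to fix. $D$ acts on the magic register, so the operator is $I\otimes D^\dagger$. The fixed qubit may come out as $\ket{1}$ rather than $\ket{0}$. For $Q'=I$ the projection may annihilate the state, which is harmless because item 2 drops zero-weight terms.
\end{itemize}
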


\begin{proof}
We work entirely at the level of the Pauli expansion. Let $d=2^{n+m}$ and write
\begin{align}
\ketbra{\psi}{\psi}=\frac{1}{d}\sum_{P\in\mathcal{P}_{n+m}} \alpha_P\, P,
\end{align}
where $\mathcal{P}_{n+m}$ is the $n+m$-qubit Pauli group modulo phases and $\alpha_P\coloneqq\tr(P\ketbra{\psi}{\psi})$.
By definition of a $t$-doped stabilizer state, $\ketbra{\psi}{\psi}$ can be written as
\begin{align}\label{eq:nullityfactor}
\ketbra{\psi}{\psi}
=\left(\frac{1}{2^t}\sum_{Q\in\bar{G}} \alpha_Q Q\right)
\left(\frac{1}{2^{n+m-t}}\sum_{R\in G} R\right),
\end{align}
where $G$ is the set of Pauli operators that stabilises $\ket{\psi}$, while $\bar{G}$ is a set of Pauli operators that do not stabilise $\ket{\psi}$, but have non-zero component.

Let $\ket{\phi}\in\mathcal{S}_m$ be a stabilizer state vector. Up to a local Clifford unitary on the last $m$ qubits, we may assume that $\ket{\phi}=\ket{0}^{\otimes m}$. Such a Clifford acts by permuting Pauli operators and therefore preserves the cardinality of the set of nonzero coefficients $\{\alpha_P\}$, so $t$ remains unchanged.
Now consider the unnormalised post-measurement operator
\begin{align}
\rho' \coloneqq (I\otimes\bra{0}^{\otimes m})\ketbra{\psi}{\psi}(I\otimes\ket{0}^{\otimes m}).
\end{align}
Using the Pauli expansion, we have
\begin{align}
\rho' = \frac{1}{d}\sum_{P_n\otimes P_m} \alpha_{P_n\otimes P_m}\,
P_n\, \langle 0 |^{\otimes m} P_m | 0 \rangle^{\otimes m},
\end{align}
where the sum runs over $P_n\in\mathcal{P}_n$ and $P_m\in\mathcal{P}_m$.
The scalar $\langle 0|^{\otimes m} | P_m | 0 \rangle^{\otimes m}$ vanishes unless $P_m$ is a tensor product of $I$ and $Z$ operators. Hence, only those Pauli components of $\ketbra{\psi}{\psi}$ of the form $P_n\otimes Z_m$ contribute, where $Z_m\in\{I,Z\}^{\otimes m}$.

Let $G$ be the stabilizer subgroup associated with the decomposition given in~\eqref{eq:nullityfactor}. The projection onto $\ket{0}^{\otimes m}$ restricts the Pauli expansion to a subset of the original $G$, and $\bar{G}$ but cannot generate new nonzero Pauli coefficients. Therefore, the number of distinct Pauli operators on the first $n$ qubits that appear with nonzero weight is at most $4^t$~\cite{guMagicinducedComputationalSeparation2024}. This shows that  $(I\otimes\bra{\phi})\ket{\psi}$ is an unnormalised $t$-doped stabilizer state on $n$ qubits.

Let $\{\ket{\phi_k}\}_k$ be any orthonormal stabilizer basis of the $m$-qubit Hilbert space. Then
\begin{align}
\tr_m(\ketbra{\psi}{\psi})
=\sum_k (I\otimes\bra{\phi_k})\ketbra{\psi}{\psi}(I\otimes\ket{\phi_k}).
\end{align}
Define $v_k\coloneqq(I\otimes\bra{\phi_k})\ket{\psi}$ and $p_k\coloneqq\|v_k\|_2^2$. If $p_k>0$, let $\ket{\psi_k}\coloneqq p_k^{-1/2}v_k$. By Claim 1, each nonzero $v_k$ is an non-normalised $t$-doped stabilizer state, and therefore each $\ket{\psi_k}\in\mathcal{S}_{n,t}$. This yields the convex decomposition
\begin{align}
\tr_m(\ketbra{\psi}{\psi})=\sum_k p_k\, \ketbra{\psi_k}{\psi_k},
\end{align}
which shows that $\tr_m(\ketbra{\psi}{\psi})\in\hat{\mathcal{S}}_{n,t}$.
\end{proof}

The next lemma gives a characterisation of completely $t$-doping stabilizer preserving channels in terms of their Choi states. 

\begin{lemma}[Completely $t$-doping stabilizer preserving channels in terms of their Choi states]\label{thm:choitdop}
Let $\mathcal{E}$ be a quantum channel with $n$-qubit input. Then $\mathcal{E}$ is completely $t$-doping stabilizer preserving if and only if its Choi state $\rho_{\mathcal{E}} \in \hat{\mathcal{S}}_{2n,t}$.
\begin{proof}

Following the proof in Ref.~\cite{seddonQuantifyingMagicMultiqubit2019}, we divide the proof into two parts. First, we show that in order to verify whether $\mathcal{E}$ is completely $t$-doping stabilizer preserving, it is sufficient to check the condition for $m = n$. Second, we show that this is equivalent to requiring that the Choi state lies in $\hat{\mathcal{S}}_{2n,t}$.

By linearity, it is sufficient to verify that $[\mathcal{E} \otimes I]$ maps pure stabilizer states in $\mathcal{S}_{n+m}$ to states in $\hat{\mathcal{S}}_{n+m,t}$. Let $m = n + m'$ and let $\ket{\sigma}$ be a stabilizer state vector on $n+m$ qubits. It is well known that any stabilizer state on a bipartite system $A|B$ is Clifford-unitarily equivalent to a tensor product of Bell pairs across the cut $A|B$ and local stabilizer states. Let $A$ denote the first $n$ qubits and $B$ the remaining $m$ qubits. Since there can be at most $n$ Bell pairs across the cut, there exists a local Clifford operation $C_B$ such that
\begin{equation}
\ket{\sigma}_{A,B} = (I_A \otimes C_B) \bigl( \ket{\sigma'}_{A,A'} \otimes \ket{\sigma''}_{B'} \bigr),
\end{equation}
where $A'$ denotes the first $n$ qubits of $B$ and $B'$ the remaining $m'$ qubits.

Suppose there exists a state vector  $\ket{\sigma}_{A,B}$ with $m' > 0$ such that $[\mathcal{E} \otimes I](\sigma_{A,B}) \notin \hat{\mathcal{S}}_{n+m,t}$. We show that there then exists a corresponding state vector $\ket{\sigma'}_{A,A'}$ supported on $2n$ qubits such that $[\mathcal{E} \otimes I](\sigma_{A,A'}) \notin \hat{\mathcal{S}}_{2n,t}$. Indeed, by denoting $\mathcal{C}_B(\cdot)\coloneqq C_B(\cdot)C_B^{\dag}$, we have that
\begin{align}
[\mathcal{E} \otimes I](\sigma_{A,B})
&= [\mathcal{E} \otimes I] \circ [I_A \otimes \mathcal{C}_B]\bigl(\sigma'_{A,A'} \otimes \sigma''_{B'}\bigr) \\
\nonumber
&= [I_A \otimes \mathcal{C}_B]\Big( [\mathcal{E} \otimes I](\sigma'_{A,A'}) \otimes \sigma''_{B'} \Big),
\end{align}
which implies that $[\mathcal{E} \otimes I](\sigma'_{A,A'}) \notin \hat{\mathcal{S}}_{2n,t}$, since tensoring with auxiliary stabilizer states and applying Clifford operations preserve membership in $\hat{\mathcal{S}}_{n+m,t}$.

We now show that the image of all pure stabilizer states under $[\mathcal{E} \otimes I]$ lies in $\hat{\mathcal{S}}_{2n,t}$ if and only if $\rho_{\mathcal{E}} \in \hat{\mathcal{S}}_{2n,t}$. The forward implication is immediate. For the reverse implication, let $W$ be a witness for $\hat{\mathcal{S}}_{n,t}$, that is, $\tr(W \psi) \le 0$ for all $\psi \in \mathcal{S}_{n,t}$ (see \cref{def:witK} with $\gamma=0$). We claim that $W \otimes \ketbra{\phi}{\phi}$ is a valid witness for $\hat{\mathcal{S}}_{n+m,t}$ for any $m$-qubit stabilizer state vector $\ket{\phi}$.
For any $\sigma \in \mathcal{S}_{n+m,t}$, we have
\begin{align}
\tr\bigl((W \otimes \ketbra{\phi}{\phi}) \sigma \bigr)
= \tr\bigl((W \otimes I)\tilde{\sigma}\bigr),
\end{align}
where $\tilde{\sigma} = (I \otimes \ketbra{\phi}{\phi}) \, \sigma \, (I \otimes \ketbra{\phi}{\phi})$. Since $\ket{\phi}$ is a stabilizer state, it follows from \cref{partialtracetdoped} that $\tilde{\sigma}$ is a non-normalized $t$-doped state. We further obtain
\begin{align}
\tr\bigl((W \otimes I)\tilde{\sigma}\bigr)
= \tr\bigl(W \, \tr_m(\tilde{\sigma})\bigr) \le 0,
\end{align}
where we have used that the partial trace of a $t$-doped state lies in $\hat{\mathcal{S}}_{n,t}$ by \cref{partialtracetdoped}, together with the assumption that $W$ is a witness for $\hat{\mathcal{S}}_{n,t}$.

For the final step, suppose for contradiction that $\rho_{\mathcal{E}} \in \hat{\mathcal{S}}_{2n,t}$, but there exists $\phi \in \mathcal{S}_{2n}$ such that $\rho = [\mathcal{E} \otimes I](\phi) \notin \hat{\mathcal{S}}_{2n,t}$. By \cref{witweak}, there exists a witness $W$ for $\hat{\mathcal{S}}_{2n,t}$ such that $\tr(W \rho) > 0$. Then
\begin{align}
0 < \tr(W \rho)
= 2^n \tr\bigl( \rho_{\mathcal{E} \otimes I} \, (W \otimes \phi^T) \bigr).
\end{align}
Since $W$ is a witness for $\hat{\mathcal{S}}_{2n,t}$ and the transpose of a stabilizer state is again a stabilizer state, the operator $2^n (W \otimes \phi^T)$ is a witness for $\hat{\mathcal{S}}_{4n,t}$. This implies that $\rho_{\mathcal{E} \otimes I} \notin \hat{\mathcal{S}}_{4n,t}$. Noting that $\rho_{\mathcal{E} \otimes I} = \rho_{\mathcal{E}} \otimes \phi^+_{2n}$ completes the proof.
\end{proof}
\end{lemma}

The set of $t$-doped stabilizer states thus provides an extension of the set of stabilizer states by also incorporating the use of non-Clifford operations, while still remaining classically simulable for $t = O(\log n)$. 

\subsection{Channel classification computational problem}\label{sec:channelclass}
In this section, we develop the first application of the results established in \cref{sec:wmemstab}. In particular, we present two corollaries of \cref{thm:nphardrhowmem} and \cref{lmm:wmemtdoped} in the context of channel classification. Starting from the first corollary, we consider the following computational task: given a description of a quantum channel \(\mathcal{E}\), determine whether it admits an efficient description within the stabilizer formalism. We formalise this task as a decision problem, asking whether the channel \(\mathcal{E}\) is completely stabilizer-preserving.

\begin{problem}[Completely stabilizer-preserving classification (\class{CSPC})]\label{channelproblem}
Let $\mathcal{E}$ be a quantum channel and let $\varepsilon>0$. Decide whether $\mathcal{E}$ satisfies:
\begin{itemize}
    \item For all $\sigma\in\mathcal{S}_{2n}$, $[\mathcal{E}\otimes \mathds{1}_n](\sigma)\in S(\hat{\mathcal{S}}_{2n},-\varepsilon)$; output \class{YES}.
    \item There exists $\sigma\in\mathcal{S}_{2n}$ such that $[\mathcal{E}\otimes \mathds{1}_n](\sigma)\notin S(\hat{\mathcal{S}}_{2n},\varepsilon)$; output \class{NO}.
\end{itemize}
An instance of this problem is denoted by $\class{CSPC}_{\varepsilon}$.
\end{problem}

\begin{corollary}[Superpolynomial complexity of $\class{CSPC}$]\label{cor1app}
Let $\hat{\mathcal{S}}_{2n}$ denote the stabilizer polytope on $2n$ qubits. Under the \emph{exponential time hypothesis}  (ETH) (see \cref{def:eth}), for any
\(\varepsilon \le\frac{1}{2^{30}3^3 d^{52}\log^{18} d }\),
the decision problem $\class{CSPC}_{\varepsilon}$ is not contained in the complexity class $\class{QP}^{2-\eta}$ for any $\eta>0$. 
\end{corollary}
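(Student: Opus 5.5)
The plan is to reduce $\rho$-$\class{WMEM}_{\varepsilon'}(\hat{\mathcal{S}}_{2n})$, which by \cref{thm:nphardrhowmem} is not in $\class{QP}^{2-\eta}$ for small enough $\varepsilon'$, to $\class{CSPC}_{\varepsilon}$ in polynomial time. The route goes through channels: given a state $\rho$ on $2n$ qubits, build a channel whose Choi state is (essentially) $\rho$, or whose outputs on stabilizer inputs are controlled by $\rho$. The cleanest choice is the replacement channel $\mathcal{E}_\rho(X)=\tr(X)\,\rho$, with $n$-qubit input and $2n$-qubit output. Its description, for example its Choi matrix $\rho_{\mathcal{E}}=\rho\otimes I/d$, is computable in $\class{poly}(d)$ time from $\rho$.

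For completeness, suppose $\rho\in S(\hat{\mathcal{S}}_{2n},-\varepsilon')$. Then for any stabilizer input $\sigma$ on $2n$ qubits, the output is $[\mathcal{E}_\rho\otimes\mathds{1}_n](\sigma)=\rho\otimes\tr_A(\sigma)$. Here $\tr_A(\sigma)$ is a reduced stabilizer state, hence lies in the stabilizer polytope. I would then argue that if $\rho$ sits $\varepsilon'$-deep inside the polytope, so does $\rho\otimes\tau$ for any $\tau\in\hat{\mathcal{S}}_n$, at depth at least $\varepsilon'$ times a polynomial factor in $d$. The argument runs as follows. Any perturbation $\Delta$ of $\rho\otimes\tau$ with $\|\Delta\|_2\le\varepsilon$ should be written as a perturbation of the first factor plus a traceless correction. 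This needs care, and is the main technical obstacle. A simpler route is to interpolate toward the maximally mixed state $I/d$, which is well inside with radius $r\ge 1/d$, using the convex-geometry bound $\|\cdot\|\le 2\delta R/r$ already invoked in the proof of \cref{thm:nphardwwd}. This costs a factor $\class{poly}(d)$ in $\varepsilon$, which explains why the exponent degrades from $53/2$ to $52$.

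For soundness, suppose $\rho\notin S(\hat{\mathcal{S}}_{2n},\varepsilon')$. Choose $\sigma=\ket{0}\!\bra{0}^{\otimes 2n}$. The output is $\rho\otimes\ket{0}\!\bra{0}^{\otimes n}$. If this were within $\varepsilon$ of some $\tau\in\hat{\mathcal{S}}_{3n}$, then projecting onto $\ket{0}^{\otimes n}$ on the last register and renormalising would give, by \cref{partialtracetdoped} with $t=0$, an element of $\hat{\mathcal{S}}_{2n}$ within $O(\varepsilon)$ of $\rho$, which is a contradiction. Partial trace alone also suffices here: $\tr_{n}(\tau)\in\hat{\mathcal{S}}_{2n}$, and the partial trace is contractive, up to a $\sqrt{d}$ factor in the 2-norm.

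Combining the two directions, an oracle for $\class{CSPC}_\varepsilon$ with $\varepsilon=\varepsilon'/\class{poly}(d)$ decides $\rho$-$\class{WMEM}_{\varepsilon'}(\hat{\mathcal{S}}_{2n})$. Tracking the polynomial losses from the inner-radius interpolation and the norm conversions, and plugging in $\varepsilon'\le(2^{13}3^3d^{53/2}\log^{18}d)^{-1}$ (rescaled for $2n$ qubits), should yield the stated threshold $\varepsilon\le(2^{30}3^3d^{52}\log^{18}d)^{-1}$. Under the ETH, \cref{thm:nphardrhowmem} then gives $\class{CSPC}_\varepsilon\notin\class{QP}^{2-\eta}$. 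The expected bottleneck is the completeness step: showing that the inner core is preserved under tensoring with stabilizer states with only polynomial loss in the radius.
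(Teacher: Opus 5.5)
There is a genuine gap, and it sits exactly in the completeness step you flagged. You claim that $\rho\otimes\tau$ is $\varepsilon'/\mathrm{poly}(d)$-deep in the polytope whenever $\rho$ is $\varepsilon'$-deep and $\tau\in\hat{\mathcal{S}}_n$. This is false for every rank-deficient $\tau$, and $\class{CSPC}_\varepsilon$ forces such $\tau$ on you. On the product input $\sigma=\ketbra{0}{0}^{\otimes 2n}$, your channel outputs $\rho\otimes\ketbra{0}{0}^{\otimes n}$, which has a kernel. Take a unit vector $\ket{v}$ in the kernel and $\ket{w}$ in the support. Then $\rho\otimes\ketbra{0}{0}^{\otimes n}+\epsilon(\ketbra{w}{w}-\ketbra{v}{v})$ is a unit-trace Hermitian operator at Hilbert--Schmidt distance $\sqrt{2}\,\epsilon$ with a negative eigenvalue. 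So the output lies in $S(\hat{\mathcal{S}},-\epsilon)$ for no $\epsilon>0$. Restricting perturbations to density matrices does not help. Mix in $\epsilon\,\psi\otimes\ketbra{e}{e}$, with $\psi$ a pure non-stabilizer state and $\ket{e}=\ket{0\cdots 01}$. Projecting the reference onto $\ket{e}$ then leaves $\epsilon\psi$, which is impossible for a point of the polytope by \cref{partialtracetdoped} with $t=0$.

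Your interpolation toward the maximally mixed state does not address this. It certifies depth of a \emph{different} point, $(1-\lambda)\,\rho\otimes\tau+\lambda\,\omega$ with $\omega$ maximally mixed. The \class{YES} condition, however, concerns the operator $[\mathcal{E}\otimes\mathds{1}_n](\sigma)$ itself, which the channel fixes. Interpolation does work when $\tau$ is itself deep, e.g.\ $\tau=I/d$, i.e.\ for $\sigma=\phi^{+}_{2n}$. But $\class{CSPC}_\varepsilon$ quantifies over all stabilizer inputs.

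Sharper constants cannot fix this, because the obstruction does not depend on your construction. Every channel maps $\ketbra{0}{0}^{\otimes 2n}$ to the rank-deficient operator $\mathcal{E}(\ketbra{0}{0}^{\otimes n})\otimes\ketbra{0}{0}^{\otimes n}$. Read literally, $\class{CSPC}_\varepsilon$ therefore has no \class{YES} instances: always answering \class{NO} solves it, and no hardness reduction into it can exist. The statement only becomes provable after the \class{YES} condition is changed.

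One clean fix is the one-sided version, where \class{YES} means every output lies in $\hat{\mathcal{S}}_{2n}$. Your replacement channel then works directly, provided you take it $n\to n$ with $\rho$ on $n$ qubits rather than producing $3n$-qubit outputs, since $\class{CSPC}$ refers to $\hat{\mathcal{S}}_{2n}$.
\begin{itemize}
\item \class{YES}: $\rho\otimes\tr_A(\sigma)\in\hat{\mathcal{S}}_{2n}$ whenever $\rho\in\hat{\mathcal{S}}_n$.
\item \class{NO}: your partial-trace soundness argument loses only a factor $\sqrt{d}$.
\item Hardness: \cref{lem:onesideequivalentbothss} transfers the hardness of \cref{thm:nphardrhowmem} to the one-sided membership problem.
\end{itemize}

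The paper's own proof does not get around the obstruction either. It works with the same Choi state $\rho\otimes I/d$, which is your replacement channel, and handles the \class{NO} side via the witness $W\otimes\sigma^{T}$ on $4n$ qubits. On the \class{YES} side it only specialises to $\sigma=\phi^{+}_{2n}$. That shows a \class{YES} instance of $\class{CSPC}_\varepsilon$ has a deep Choi state, which is the converse of the implication a hardness reduction needs.
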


\begin{proof}
The proof proceeds by reduction to a weak membership problem for the stabilizer polytope. By \cref{lem:characterisationstabilizerpreservingchannel}, deciding whether a quantum channel $\mathcal{E}$ is completely stabilizer-preserving can be reduced to a membership test for its Choi state $\rho_{\mathcal{E}}$. More precisely, $\mathcal{E}$ is completely stabilizer-preserving if and only if its Choi state satisfies $\rho_{\mathcal{E}}\in\hat{\mathcal{S}}_{2n}$, together with the constraint $\tr_B \rho_{\mathcal{E}} = I/d$, where $B$ denotes the second $n$-qubit subsystem.

This motivates the following decision problem. Given a state $\rho_{\mathcal{E}}$ satisfying $\tr_B \rho_{\mathcal{E}} = I/d$ and a parameter $\delta>0$, decide whether
\begin{itemize}
    \item for all states $\sigma$ such that $\|\rho_{\mathcal{E}}-\sigma\|_2 \le \delta$, we have $\sigma\in\hat{\mathcal{S}}_{2n}$; output \class{YES}.
    \item There exists a state $\sigma\in\hat{\mathcal{S}}_{2n}$ such that $\|\rho_{\mathcal{E}}-\sigma\|_2 > \delta$; output \class{NO}.
\end{itemize}
We denote this problem by $\rho_{\mathcal{E}}$-$\class{WMEM}_{\delta}(\hat{\mathcal{S}}_{2n})$.

We first show that $\rho$-$\class{WMEM}_{\delta}(\hat{\mathcal{S}}_n)$ for an arbitrary $n$-qubit 
state $\rho$ reduces to $\rho_{\mathcal{E}}$-$\class{WMEM}_{\delta}(\hat{\mathcal{S}}_{2n})$. Assume oracle access to $\rho_{\mathcal{E}}$-$\class{WMEM}_{\delta}(\hat{\mathcal{S}}_{2n})$.
Given $\rho$, define $\omega\equiv I/d$ and consider the state $\rho\otimes\omega$, which is a valid Choi state since $\tr_B(\rho\otimes\omega)= I/d$. Querying the oracle on $\rho\otimes\omega$, a \class{YES} instance implies that every state $\sigma$ satisfying $\|\rho\otimes\omega-\sigma\|_2\le\delta$ belongs to $\hat{\mathcal{S}}_{2n}$.
For any $\tilde{\sigma}$ such that $\|\rho-\tilde{\sigma}\|_2\le\delta$, we have
\begin{align}
\|\rho\otimes\omega-\tilde{\sigma}\otimes\omega\|_2
\le \|\rho-\tilde{\sigma}\|_2 \|\omega\|_2
= \delta d^{-1/2}
\le \delta,
\end{align}
and hence $\tilde{\sigma}\otimes\omega\in\hat{\mathcal{S}}_{2n}$. Since the partial trace preserves stabilizer membership, this implies $\tilde{\sigma}\in\hat{\mathcal{S}}_n$.

Conversely, for a \class{NO} instance we have $\min_{\sigma\in\hat{\mathcal{S}}_{2n}}\|\rho\otimes\omega-\sigma\|_2>\delta$. If there exists $\tilde{\sigma}\in\hat{\mathcal{S}}_n$ with $\|\rho-\tilde{\sigma}\|_2\le\delta$, then it would follow $\tilde{\sigma}\otimes\omega\in\hat{\mathcal{S}}_{2n}$, contradicting the inequality. Therefore, $\min_{\sigma\in\hat{\mathcal{S}}_n}\|\rho-\sigma\|_2>\delta$, which completes the reduction.

We now reduce $\class{CSPC}_{\varepsilon}$ to $\rho_{\mathcal{E}}$-$\class{WMEM}_{\delta}(\hat{\mathcal{S}}_{2n})$. Assume oracle access to $\class{CSPC}_{\varepsilon}$. Then:
\begin{itemize}
    \item \class{YES}. For all $\sigma\in{\mathcal{S}}_{2n}$ and all states $\tau$ such that $\|[\mathcal{E}\otimes I](\sigma)-\tau\|_2\le\varepsilon$, we have $\tau\in\hat{\mathcal{S}}_{2n}$. In particular, choosing $\sigma=\phi^+_{2n}$ shows that the Choi state $\rho_{\mathcal{E}}=[\mathcal{E}\otimes I](\phi^+_{2n})$ is a \class{YES} instance of $\rho_{\mathcal{E}}$-$\class{WMEM}_{\delta}(\hat{\mathcal{S}}_{2n})$, provided that $\delta\le\varepsilon$.
    \item \class{NO}. There exists $\sigma\in{\mathcal{S}}_{2n}$ such that $\min_{\tau\in\hat{\mathcal{S}}_{2n}}\|[\mathcal{E}\otimes I](\sigma)-\tau\|_2>\varepsilon$. By \cref{witweak}, there exists a witness $W'$ such that $\tr(W'[\mathcal{E}\otimes I](\sigma))>\max_{\tau\in\mathcal{S}_{2n}}\tr(W'\tau)+\varepsilon^2$. We redefine $W\coloneqq W'-\max_{\tau\in\mathcal{S}_{2n}}\tr(W'\tau)I$ so that $\max_{\tau\in\mathcal{S}_{2n}}\tr(W\tau)\le 0$.  Using the Choi representation, this can be written as
\begin{align}
\tr(W[\mathcal{E}\otimes I](\sigma))
= d\tr\!\big(\rho_{\mathcal{E}\otimes I}(W\otimes\sigma^T)\big)
> \varepsilon^2. 
\end{align}

As shown in 
Ref.\ \cite{seddonQuantifyingMagicMultiqubit2019}, whenever $\sigma$ is a pure stabilizer state, the operator $W\otimes\sigma$ defines a valid witness for $\hat{\mathcal{S}}_{4n}$.  Applying \cref{witweak} again yields $\|\rho_{\mathcal{E}\otimes I}-\sigma'\|_2>\varepsilon^2/d/\|W\|_2$ for all $\sigma'\in\hat{\mathcal{S}}_{4n}$. Choosing $\sigma'=\tilde{\sigma}\otimes\phi^+_{2n}$ and using $\rho_{\mathcal{E}\otimes I}=\rho_{\mathcal{E}}\otimes\phi^+_{2n}$, we obtain
\begin{align}
\|\rho_{\mathcal{E}}-\tilde{\sigma}\|_2 > \frac{\varepsilon^2}{d\|W\|_2}\ge \frac{\varepsilon^2}{d(d+2)},
\end{align}
where we have used the fact that $\|W\|_2\le \|W'\|_2+d\le d+2$ because $\max_{\tau}(W'\tau)\le 1$ (see \cref{witweak}).
\end{itemize}

This establishes the relation $\varepsilon=\sqrt{d(d+2)}\,\delta$. Combining the above reductions with \cref{thm:nphardrhowmem} concludes the proof. 
\end{proof}

A similar problem can be defined when asking the question of whether a channel $\mathcal{E}$ is completely $t$-doping stabilizer preserving~\cref{def:comptdop} or not. 
An analogous decision problem can be defined by asking whether a quantum channel $\mathcal{E}$ is completely $t$-doping stabilizer preserving (see \cref{def:comptdop}).

\begin{problem}[Completely $t$-doped stabilizer-preserving classification (\class{CTDSPC})]\label{channeldopingproblem}
Let $\mathcal{E}$ be a quantum channel and let $\varepsilon>0$. Decide whether $\mathcal{E}$ satisfies:
\begin{itemize}
    \item For all $\sigma\in\mathcal{S}_{2n}$, the state $[\mathcal{E}\otimes \mathds{1}_n](\sigma)$ belongs to $S(\hat{\mathcal{S}}_{2n,t},-\varepsilon)$; output \class{YES}.
    \item There exists $\sigma\in\mathcal{S}_{2n}$ such that $[\mathcal{E}\otimes \mathds{1}_n](\sigma)\notin S(\hat{\mathcal{S}}_{2n,t},\varepsilon)$; output \class{NO}.
\end{itemize}
An instance of this problem is denoted by $\class{CTDSPC}_{\varepsilon}$.
\end{problem}

\begin{corollary}[Superpolynomial complexity of completely $t$-doped stabilizer-preserving classification]\label{cor1appdoped}
Let $\hat{\mathcal{S}}_{2n,t}$ denote the $t$-doped stabilizer polytope on $2n$ qubits. Under the \emph{exponential time hypothesis} (ETH) (see \cref{def:eth}), for any 
\begin{equation}\varepsilon \le(2^{115} 3^3 (\log d)^{90} d^{118})^{-1}, 
\end{equation}
the decision problem $\class{CTDSPC}_\varepsilon$ is not contained in the complexity class $\class{QP}^{2-\eta}$ for any $\eta>0$.
\end{corollary}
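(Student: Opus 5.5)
I will mirror the proof of \cref{cor1app}, replacing the stabilizer polytope by the $t$-doped polytope. The hardness will be imported from \cref{thm:suptdop}, which is applied on $2n$ qubits with $t\le\kappa\cdot 2n$. The argument has two reductions. First, $\rho$-$\class{WMEM}_{\delta}(\hat{\mathcal{S}}_{n,t})$ reduces to a weak-membership problem for Choi states in $\hat{\mathcal{S}}_{2n,t}$. Second, that Choi-state problem reduces to $\class{CTDSPC}_\varepsilon$ via the characterisation of \cref{thm:choitdop}.

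\textbf{Step 1 (embedding arbitrary states as Choi states).} Given an $n$-qubit $\rho$, I form $\rho\otimes I/d$, which satisfies $\tr_B(\rho\otimes I/d)=I/d$ and is therefore a valid Choi state. Completeness uses the fact that tensoring with the maximally mixed state keeps $\hat{\mathcal{S}}_{n,t}$ inside $\hat{\mathcal{S}}_{2n,t}$. This holds because $I/d$ is a mixture of computational-basis states, and $C(\ket{0}^{\otimes(n-t)}\otimes\ket\phi)\otimes\ket{x}$ is again $t$-doped. The Frobenius distance scales by $\|I/d\|_2=d^{-1/2}\le1$. Soundness uses part 2 of \cref{partialtracetdoped}: tracing out $B$ maps $\hat{\mathcal{S}}_{2n,t}$ into $\hat{\mathcal{S}}_{n,t}$. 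Hence if $\tilde\sigma\otimes\omega$ lies in the polytope then so does $\tilde\sigma$, and the YES/NO promises transfer exactly as in \cref{cor1app}.

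\textbf{Step 2 (Choi membership from CTDSPC).} In the YES case, I take $\sigma=\phi^+_{2n}$, which is a stabilizer state, so the Choi state inherits the $\varepsilon$-inner-core promise. In the NO case, some $[\mathcal{E}\otimes I](\sigma)$ is $\varepsilon$-far from $\hat{\mathcal{S}}_{2n,t}$. By \cref{witweak} there is a witness $W'$ with gap $\varepsilon^2$, which I shift to $W$ so that $\max\tr(W\tau)\le0$. I then write $\tr(W[\mathcal{E}\otimes I](\sigma))=d\,\tr(\rho_{\mathcal{E}\otimes I}(W\otimes\sigma^T))$ and argue, as in the proof of \cref{thm:choitdop}, that $W\otimes\sigma^T$ is a witness for the larger $t$-doped polytope. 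That argument uses part 1 of \cref{partialtracetdoped}: projecting a $t$-doped state onto a stabilizer state leaves it $t$-doped. Applying the converse of \cref{witweak} with $\rho_{\mathcal{E}\otimes I}=\rho_{\mathcal{E}}\otimes\phi^+$ then gives a Frobenius distance of $\rho_{\mathcal{E}}$ from $\hat{\mathcal{S}}_{2n,t}$ of order $\varepsilon^2/(d\|W\|_2)$.

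\textbf{Step 3 (the norm bound and the main obstacle).} The hardest step is bounding $\|W\|_2$. In \cref{cor1app} one used $\|W\|_2\le d+2$. Here I would instead invoke \cref{lemmatextendedrobusteness}, which shows that any operator with $\tr(W\psi)\le1$ on $\mathcal{S}_{n,t}$ has $\|W\|_2\le\sqrt{d(d+1)}$, since it is already constrained on stabilizer states. After normalising $W'$ I expect $\|W\|_2=O(d)$, giving a relation $\delta\gtrsim\varepsilon^2/\mathrm{poly}(d)$, equivalently $\varepsilon\approx\mathrm{poly}(d)\sqrt\delta$.

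Finally I substitute the threshold on the membership parameter from \cref{thm:suptdop}, evaluated with dimension $d^2$ (since the Choi state lives on $2n$ qubits), and track the constants. The stated $(2^{115}3^3\log^{90}d\,d^{118})^{-1}$ should follow from this bookkeeping. I expect a sloppy but polynomial exponent, and the qualitative claim that $\class{CTDSPC}_\varepsilon\notin\class{QP}^{2-\eta}$ is insensitive to the exact exponent.
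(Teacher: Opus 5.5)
Your outline is the route the paper itself sketches: mirror \cref{cor1app}, swap in \cref{thm:choitdop}, and embed $\rho$ as the Choi state $\rho\otimes I/d$. However, the implications you establish point the wrong way for a lower bound, and the paper's sketch has the same reversal. In Step~2 you show two things: a \class{YES} instance of $\class{CTDSPC}_\varepsilon$ has its Choi state in the inner core (via $\sigma=\phi^+_{2n}$), and a \class{NO} instance has its Choi state $\varepsilon^2/\mathrm{poly}(d)$-far (via the witness). That is a reduction \emph{of} $\class{CTDSPC}$ \emph{to} Choi-state membership, i.e.\ an upper bound on its complexity. To import hardness from \cref{thm:suptdop} you need the converse: a map from membership instances to channels that sends \class{YES} to \class{YES} and \class{NO} to \class{NO}. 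The \class{NO} half is immediate, because the output on $\phi^+_{2n}$ is the Choi state itself (this is your ``\class{YES}-case'' computation). The witness argument and the norm bound of Step~3 are therefore not needed for hardness at all. Step~1, phrased ``as in \cref{cor1app}'', has the same reversal. In the needed direction, completeness requires a full ball around $\rho\otimes I/d$, not just product perturbations $\tilde\sigma\otimes I/d$. Soundness requires $\tr_B(\hat{\mathcal S}_{2n,t})\subseteq\hat{\mathcal S}_{n,t}$ together with $\|\tr_B X\|_2\le\sqrt d\,\|X\|_2$.

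The missing implication \class{YES}$\Rightarrow$\class{YES} cannot be supplied, because for $t<n$ the \class{YES} set of $\class{CTDSPC}_\varepsilon$ as written is empty. Take $\sigma=(\ketbra{0}{0})^{\otimes 2n}\in\mathcal S_{2n}$. For every $\mathcal E$ the output is $\mathcal E\bigl((\ketbra{0}{0})^{\otimes n}\bigr)\otimes(\ketbra{0}{0})^{\otimes n}$. Replace the second factor by $\ketbra{\psi_\theta}{\psi_\theta}$, where $\ket{\psi_\theta}$ is a product of $n$ single-qubit states at angle $\theta$ from $\ket0$ whose Bloch vectors lie off the Pauli axes. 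This moves the state by at most $\sqrt{2n}\,\theta$ in Frobenius norm, yet the result is not in $\hat{\mathcal S}_{2n,t}$. If it were, then by \cref{partialtracetdoped} its $B$-marginal $\ketbra{\psi_\theta}{\psi_\theta}$ would lie in $\hat{\mathcal S}_{n,t}$, and being pure it would have to be $t$-doped. But $\ket{\psi_\theta}$ has no nontrivial Pauli stabilizer, whereas $t$-doped states have at least $2^{n-t}\ge 2$ stabilizing Paulis. So no channel is a \class{YES} instance, and ``always output \class{NO}'' decides the problem. No tracking of constants can rescue the argument; the problem itself must be reformulated. For example, take the one-sided version in which \class{YES} means $\mathcal E$ is completely $t$-doped stabilizer preserving. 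Then the replacement channel $\mathcal E_\rho(X)=\tr(X)\rho$ gives the reduction in two lines:
\begin{itemize}
\item if $\rho\in\hat{\mathcal S}_{n,t}$, every output $\rho\otimes\tr_A\sigma$ lies in $\hat{\mathcal S}_{2n,t}$;
\item if $\rho$ is $\delta$-far, then $[\mathcal E_\rho\otimes I](\phi^+_{2n})=\rho\otimes I/d$ is $\delta/\sqrt d$-far.
\end{itemize}
Hardness then follows from \cref{lem:onesideequivalentbothss} and \cref{thm:suptdop}. Separately, \cref{lemmatextendedrobusteness} cannot bound $\|W\|_2$ under the one-sided constraint $\tr(W\psi)\le 1$, since $W=-\lambda I$ is feasible for every $\lambda>0$. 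Your shifted $W$ is instead bounded directly by $\|W'\|_2+d\,|\max_\tau\tr(W'\tau)|$.
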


\begin{proof}
The argument follows the same reduction strategy used for the completely stabilizer-preserving case. The only modification is that, instead of characterising completely stabilizer-preserving channels via their Choi states using \cref{lem:characterisationstabilizerpreservingchannel}, we invoke the Choi-state characterisation of completely $t$-doping-preserving channels given in \cref{thm:choitdop}.
In particular, a \class{YES} instance of $\class{CTDSPC}_{\varepsilon}$ implies that the Choi state of the channel lies within distance $\varepsilon$ of the $t$-doped stabilizer polytope $\hat{\mathcal{S}}_{2n,t}$, while a \class{NO} instance yields, via the same witness-based argument as before, a separation of order $\varepsilon/d$ from $\hat{\mathcal{S}}_{2n,t}$. This establishes a polynomial-time reduction from $\class{CTDSPC}_{\varepsilon}$ to $\rho$-$\class{WMEM}_{\varepsilon/d}(\hat{\mathcal{S}}_{2n,t})$.
\end{proof}

\subsection{Existence of super-polynomial magic-state distillation}\label{sec:exmsd}
In this section, we apply our main hardness result to the task of magic-state distillation.
Magic-state distillation protocols based on stabilizer operations constitute the standard route
to achieving universal quantum computation. A natural question is whether the mere presence of
non-stabilizer resources guarantees the existence of an efficient (in the Hilbert space dimension $d$) distillation procedure.

We show that this is not the case. More precisely, we prove that there exist non-stabilizer states
for which no stabilizer-based distillation protocol that can be found and executed within
polynomial time in $d$ is capable of producing a non-negligible amount of magic, even when allowed
polynomially many copies of the input state. This result establishes a fundamental computational
obstruction to magic-state distillation beyond information-theoretic considerations.

\begin{corollary}[Superpolynomial-time magic-state distillation]\label{cor2app}
There exists a (possibly mixed) non-stabilizer 
state $\tilde{\rho}$ on $n$ qubits such that, for every
stabilizer-based distillation protocol $\mathcal{E}$ satisfying the following conditions:
\begin{enumerate}
    \item $\mathcal{E}$ can be specified by an algorithm running in time $\class{poly}(d)$,
    \item $\mathcal{E}$ can be executed in time $\class{poly}(d)$,
\end{enumerate}
the probability that $\mathcal{E}$ outputs a magic state is at most $\frac{1}{\omega(\class{poly}(d))}$.
\end{corollary}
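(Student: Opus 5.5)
We argue by contradiction and reduce to \cref{thm:nphardrhowmem}. Suppose that for every non-stabilizer state $\tilde\rho$ on $n$ qubits (in particular, every state in a hard family of \class{NO} instances of $\rho$-$\class{WMEM}_{\varepsilon}(\hat{\mathcal{S}}_n)$) there is a protocol $\mathcal{E}$ that is specified by a $\class{poly}(d)$-time algorithm, runs in $\class{poly}(d)$ time, and outputs a magic state with probability at least $1/\class{poly}(d)$. From this family we build a decision procedure for weak membership in the stabilizer polytope that runs in time $\class{poly}(d)$, or at least in $2^{O(\log^{2-\eta}d)}$. By \cref{thm:nphardrhowmem}, under ETH this is impossible.

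The decision procedure is as follows. On input $\rho$, first run the $\class{poly}(d)$-time specification algorithm to obtain $\mathcal{E}$. Then classically simulate $\mathcal{E}$ on $\class{poly}(d)$ copies of $\rho$. Because $\mathcal{E}$ executes in $\class{poly}(d)$ time, and because we are given a $d\times d$ matrix, this dense simulation costs $\class{poly}(d)$ per copy, or at worst time polynomial in the size of the accessed register. Finally, estimate the probability $p(\rho)$ that the output is a high-fidelity magic state, for example by checking fidelity with the fixed target state such as $\ket{T}$ on the flagged output register.

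Soundness on \class{YES} instances uses monotonicity under free operations. If $\rho\in\hat{\mathcal{S}}_n$, then $\rho^{\otimes k}\in\hat{\mathcal{S}}_{kn}$, and every stabilizer protocol maps the polytope into itself. Hence the output is always a stabilizer mixture, and its fidelity with the target is at most $F_{\text{stab}}(\ket{T})<1$, so $p(\rho)=0$. For inner-core instances this extends by continuity, since $p$ is Lipschitz in $\rho$ with a $\class{poly}(d)$ constant. On \class{NO} instances the hypothesis gives $p(\rho)\ge 1/\class{poly}(d)$. Computing $p$ to additive precision $1/\class{poly}(d)$ therefore separates the two cases in polynomial time, contradicting \cref{thm:nphardrhowmem}. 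The contradiction shows that some $\tilde\rho$ admits no efficient protocol with non-negligible success probability. Since $\rho$-\class{WMEM} is not even in $\class{QP}^{2-\eta}$, the same argument actually gives the stronger bound $2^{-\Omega(\log^2 d)}$, matching the informal statement.

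The main obstacle is the quantifier order. The hypothesis to be negated lets the protocol depend on $\tilde\rho$. A decision procedure, however, must choose $\mathcal{E}$ from $\rho$ alone, without knowing whether $\rho$ is a \class{YES} or \class{NO} instance. I would address this by reading condition 1 as saying that the protocol is found by a fixed $\class{poly}(d)$-time algorithm taking $\rho$ as input, so that the map $\rho\mapsto\mathcal{E}_\rho$ is uniform. On \class{YES} inputs this algorithm may output anything, but any stabilizer protocol still yields $p=0$, so nothing is lost.

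A second delicate point is the choice of $\tilde\rho$. The corollary asks only for existence, and weak membership has a promise gap, so the hard states must lie $\varepsilon$-far from the polytope. The negation of the corollary supplies the protocols needed on exactly these far states, which is all the reduction uses, so I expect this to go through.
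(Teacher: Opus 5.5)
Your overall route is the paper's: negate the statement, treat condition 1 as a uniform $\class{poly}(d)$-time map $\rho\mapsto\mathcal{E}_\rho$, run $\mathcal{E}_\rho$ on copies of the input and repeat, and use that (post-selected) stabilizer operations keep $\hat{\mathcal{S}}_n$ inside the polytope, so \class{YES} inputs never yield magic. The gap is in the step that carries your contradiction: ``classically simulate $\mathcal{E}$ on $\class{poly}(d)$ copies of $\rho$ \dots\ at $\class{poly}(d)$ cost per copy.'' That step is false.

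The $k$ copies occupy $kn$ qubits, so a dense simulation costs $d^{\Theta(k)}$, even before counting stabilizer ancillas. For $k=\class{poly}(d)$ this is $2^{\class{poly}(d)}$, far worse than the $2^{O(\log^2 d)}$ robustness algorithm of \cref{wmemeforstabinqp2}. The stabilizer structure of $\mathcal{E}$ does not rescue this. Stabilizer-rank and quasi-probability simulations are exponential in the number of non-stabilizer inputs, and the latter needs a robustness decomposition of $\rho$, which is exactly the hard object. Moreover, Clifford operations with adaptive Pauli measurements fed by magic inputs are universal. An efficient classical simulation of arbitrary $\class{poly}(d)$-time stabilizer protocols on $\class{poly}(d)$ copies would therefore amount to classically simulating general $\class{poly}(d)$-size quantum computations. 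Since \cref{thm:nphardrhowmem} is a lower bound against classical deterministic machines, your decider yields no contradiction.

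The paper never simulates. It prepares the copies physically (possible with $\class{poly}(d)$ gates from the $d\times d$ description), applies $\mathcal{E}_{\tilde\rho}$, and repeats, so its distinguisher is a $\class{poly}(d)$-time \emph{quantum} procedure. Turning that into a contradiction requires hardness of $\rho$-$\class{WMEM}_\varepsilon(\hat{\mathcal{S}}_n)$ against bounded-error quantum algorithms, for instance under a quantum version of ETH. The chain of reductions from $3$-\class{SAT} is classical and polynomial in $d$, so the $2^{\Omega(\log^2 d)}$ bound would transfer; the paper leaves this assumption implicit.

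To repair your proof, make the decider this quantum experiment and state the quantum hardness assumption explicitly. Use the success flag together with a projective test on the target, comparing $\Pr[\text{flag}\wedge\text{pass}]$ with $F_{\text{stab}}(\ket{T})\Pr[\text{flag}]$; an unflagged fidelity test need not separate the two cases. Keep your explicit handling of the quantifier order: it is more careful than the paper's appeal to a single $\tilde\rho$ that is ``computationally indistinguishable'' from the polytope, which worst-case hardness alone does not supply. Your continuity argument for inner-core instances is unnecessary, since $S(\hat{\mathcal{S}}_n,-\varepsilon)\subseteq\hat{\mathcal{S}}_n$.
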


\begin{proof}
Assume, for the sake of contradiction, that the statement is false. By \cref{thm:nphardrhowmem}, there exists a non-stabilizer state $\tilde{\rho}$ such that
$\tilde{\rho}$ cannot be distinguished from the stabilizer polytope $\hat{\mathcal{S}}_n$
by any algorithm running in time $\class{poly}(2^{\log^{2-\eta}d})$ for any $\eta>0$.
Equivalently, for any such algorithm, the distinguishing advantage between $\tilde{\rho}$ and
any state in $\hat{\mathcal{S}}_n$ is negligible.

Under the above assumption, given $\tilde{\rho}$ one can proceed as follows. First, in time
$\class{poly}(d)$, compute the description of the corresponding distillation protocol
$\mathcal{E}_{\tilde{\rho}}$. Next, prepare $\class{poly}(d)$ copies of $\tilde{\rho}$ and apply
$\mathcal{E}_{\tilde{\rho}}$ to obtain an output state. By assumption, this procedure produces
a magic state with a probability of at least $1/\class{poly}(d)$. Repeating this entire experiment $\class{poly}(d)$ times yields an overall algorithm running in
time $\class{poly}(d)$ that outputs a magic state with a 
non-negligible probability whenever the
input state 
is $\tilde{\rho}$.
In contrast, 
if the input state is any $\sigma\in\hat{\mathcal{S}}_n$, then every
stabilizer-based operation preserves $\hat{\mathcal{S}}_n$, thereby distinguishing $\tilde{\rho}$ from the stabilizer polytope in time $\class{poly}(d)$. This contradicts
the computational indistinguishability guaranteed by \cref{thm:nphardrhowmem}, implying our initial assumption to be false.
\end{proof}

\end{document}